
\documentclass{article}

\usepackage[utf8]{inputenc}
\usepackage[all,arc]{xy}
\usepackage{enumerate}
\usepackage[top=1in, bottom=1.25in, left=1.25in, right=1.25in]{geometry} 
\usepackage{parskip}
\usepackage{braket}
\usepackage{tikz-cd}
\usepackage{verbatim}
\usepackage{tabu}
\usepackage{ggmaths}

\usepackage{biblatex}
\usepackage{hyperref}
\usepackage{xurl}
\hypersetup{breaklinks=true}
\addbibresource{refs.bib}

\geometry{a4paper}

\title{Non-Borda elections under relaxed IIA conditions}
\author{\makeauthorentry}

\begin{document}

\maketitle

\begin{abstract}

Arrow's \autocite{arrow} celebrated Impossibility Theorem asserts that an election rule, or Social Welfare Function (SWF), between three or more candidates meeting a set of strict criteria cannot exist. Maskin \autocite{maskin} suggests that Arrow's conditions for SWFs are too strict. In particular he weakens the ``Independence of Irrelevant Alternatives'' condition (IIA), which states that if in two elections, each voter's binary preference between candidates $c_i$ and $c_j$ is the same, then the two results must agree on their preference between $c_i$ and $c_j$. Instead, he proposes a modified IIA condition (MIIA). Under this condition, the result between $c_i$ and $c_j$ can be affected not just by the order of $c_i$ and $c_j$ in each voter's ranking, but also the number of candidates between them. More candidates between $c_i$ and $c_j$ communicates some information about the strength of a voter's preference between the two candidates, and Maskin argues that it should be admissible evidence in deciding on a final ranking.

We construct SWFs for three-party elections which meet the MIIA criterion along with other sensibility criteria, but are far from being Borda elections (where each voter assigns a score to each candidate linearly according to their ranking). On the other hand, we give cases in which any SWF must be the Borda rule.

\end{abstract}

\section{Introduction}

We let a set of voters be $V$ and the set of candidates $C$, with $|C| = k < \infty$. We always label these candidates as $\set{c_i}_{i=1}^k$, and we treat the indices modulo $k$.

We will consider two cases for the size of $V$. Either it will be finite, in which case we say $|V| = n$, or it will be infinite, in which case we will require that it be equipped with a $\sigma$-algebra $\Sigma$ and a measure $\nu$. SWFs on unmeasurable electorates are not interesting within the context of these investigations due to the need to ``count'' or ``measure'' different sets of voters in order to compare their influences in choosing the final ranking.

In the infinite case, we require that the measure is finite (i.e. $\nu(V) < \infty$), and indeed we normalise to $\nu(V)=1$. Again, this is the critical case because if $\nu(V) = \infty$ then we might have several competing groups all with measure $\infty$, making it impossible to compare them meaningfully. We also require that $\nu$ is atomless (also known as diffuse); that is to say, there is no set $A \in \Sigma$ with $\nu(A)>0$ such that for any $B \subset A$ in $\Sigma$ we have either $\nu(B) = 0$ and $\nu(B) = \nu(A)$. Sierpinski \autocite{sierpinski} showed that this allows us to find subsets of arbitrary measure within any set of larger measure.

We define $\mathfrak{R}$ to be the set of possible votes $r$ that voter $v$ can register. Each $r$ is a total ordering of $C$, and we will treat $r$ as a bijection $r: C \hookrightarrow \set{0, \dots, |C|-1}$, where $r^{-1}(0)$ is the lowest ranked candidate, and $r^{-1}(|C|-1)$ is the highest ranked. We will sometimes write $r$ simply as a list of candidates; for example, $c_1 c_2 c_3$ corresponds to a voter's ballot placing candidate $c_1$ first, followed by $c_2$ and then $c_3$. This corresponds to $r: c_i \mapsto 3-i$.

We define $\bar{E} \subset \mathfrak{R}^V$ to be the set of possible elections. For $|V|$ finite, this is simply $\bar{E} = \mathfrak{R}^V$; if $|V|$ is infinite then $\bar{E} = \set{e: V \rightarrow \mathfrak{R} | e \text{ measurable}}$. Note that $C$ is always finite, so $\mathfrak{R}$ is finite, and we endow it with the discrete $\sigma$-algebra.

Sometimes we will consider SWFs which restrict the set of legal elections. We call the set of legal elections $E \subset \bar{E}$.

An election result $\succ$ is a weak ordering of $C$ - in other words, an equivalence relation of ``tying'', together with a total ordering of ``winning'' between the equivalence classes. We call the set of all possible weak orderings $\mathfrak{R}_=$. In the most general sense, a SWF is any function $F: E \rightarrow \mathfrak{R}_=$. Often $F$ will be determined by some real-valued function $G: C \times E \rightarrow \mathbbm{R}$, where each candidate receives a ``score'' in any election; then $F = \Phi(G)$ is defined by

\begin{equation}
c_i \succ c_j \Leftrightarrow G(c_i,e) > G(c_j,e)
\end{equation}

where $\succ = \Phi(G)(e)$. Since any multiset of reals form a weak ordering under the $>$ relation, this defines a weak ordering for $\succ$.

\subsection{Some definitions}

A voter $v$ is able to submit as their vote some subset of the full set of rankings, $\mathfrak{R}$. We call this the voter's ballot:

\begin{defn}[Ballot]
For a set of legal elections $E$, the ballot $\mathfrak{B}_v$ of a voter $v$ is $\{e(v) | e \in E\}$.
\end{defn}

For the entire paper, we restrict ourselves to a ballot election, in which each voter is given the same ballot, and each voter can choose to submit any ranking on the ballot independently of other voters (up to the constraint of measurability).

\begin{defn}[Ballot domain]
An election has a ballot domain if there exists some $\mathfrak{B} \subset \mathfrak{R}$ with $|\mathfrak{B}| > 1$ such that $E$ is the set of all measurable functions from $V$ to $\mathfrak{B}$.
\end{defn}

The Modified IIA condition means that emphasis is placed on the gap between two candidates in a voter's ballot. For this reason, it helps to define a relative election as follows.

\begin{defn}[Relative election]
Let $D_k = \set{1-k, 2-k, \dots, -1, 1, 2, \dots, k-1}$. This is the set of possible gaps in a ballot between two candidates $c_i$ and $c_j$. Define $\pi_{i, j}(r) = r(c_i)-r(c_j)$; so $\pi_{i,j}: \mathfrak{R} \rightarrow D_k$. Then define $\pi_{i,j}: E \rightarrow D_k^V$ by $\pi_{i,j}(e)(v) = \pi_{i,j}(e(v))$. We call $\pi_{i,j}(e)$ a relative election; it encodes the preference (along with strength of preference) of each voter between candidates $c_i$ and $c_j$.

We let the image of $\pi_{i,j}$ over $E$ be $A_{i,j} \subset D_k^V$. We also define the natural partial order on $D_k^V$, which all $A_{i,j}$ inherit, by $x \geq y$ if and only if $x(v) \geq x(v)$ for all $v$. Moreover, we define the relative ballot $B_{i,j} = \pi_{i,j}(\mathfrak{B})$ for a ballot $\mathfrak{B}$ as $\set{\pi_{i,j}(r) | r \in \mathfrak{B}}$, a subset of $D_k$.

Finally, for an election result $\succ$ we write $\pi_{i,j}(\succ) = W$ if $c_i \succ c_j$, $\pi_{i,j}(\succ)=L$ if $c_j \succ c_i$, and $\pi_{i,j}(\succ) = T$ otherwise. Here $W$ stands for ``win'', $T$ for ``tie'' and $L$ for ``loss''. For convenience we impose the natural total order on these three symbols: $W > T > L$.  Similarly, we will sometimes abuse notation by writing $W = -L$, $L = -W$ and $T=-T$.
\end{defn}

We will often be interested in ordering elections by comparing the performance of one candidate against another in each election. In order to do so, we give a formal definition of the idea of ``promoting'' a candidate with respect to another candidate.

\begin{defn}[Promoting]\label{def:promoting}
Given two ballots $r_1$ and $r_2$, and two candidates $c_i \neq c_j$, if for any third candidate $c_k \neq c_i, c_j$ and any fourth candidate $c_{\ell} \neq c_i, c_j$ in the case of four or more candidates, we have
\begin{enumerate}[(i)]
\item $r_1(c_k) > r_1(c_i) \Rightarrow r_2(c_k) > r_2(c_i)$
\item $r_1(c_j) > r_1(c_k) \Rightarrow r_2(c_j) > r_2(c_k)$
\item $r_1(c_j) > r_1(c_i) \Rightarrow r_2(c_j) > r_2(c_i)$
\item $r_1(c_k) > r_1(c_{\ell}) \Leftrightarrow r_2(c_k) > r_2(c_{\ell})$
\end{enumerate}
then we say that ``$(c_j, c_i)$ is promoted by $(r_2, r_1)$''. For two elections $e_1$ and $e_2$, if for every voter $v$ we have that $(c_j, c_i)$ is promoted by $(e_2(v),e_1(v))$ then we further say that $(c_j, c_i)$ is promoted by $(e_2, e_1)$.
\end{defn}

\begin{rem}
It follows immediately from the definition of a promotion that if $(c_j,c_i)$ is promoted by $(r_3,r_2)$ and by $(r_2,r_1)$ then it is also promoted by $(r_3,r_1)$; and the same goes for elections $e_1, e_2, e_3$.
\end{rem}

Now we are ready to define a family of ballots for which our intuitions about ``better'' and ``worse'' results correspond directly to the partial order on relative elections given above.

\begin{defn}[Increasing ballot]
A ballot $\mathfrak{B}$ is increasing if, for any candidates $c_i$ and $c_j$, and for any $m, m' \in \pi_{i,j}(\mathfrak{B})$ with $m>m'$, and any $r \in \mathfrak{B}$ with $\pi_{i,j}(r) = m$, there exists $r' \in \mathfrak{B}$ with $\pi_{i,j}(r') = m'$ and $(c_j, c_i)$ is promoted by $(r', r)$.
\end{defn}

\begin{defn}[Intermediate ballot]
A ballot $\mathfrak{B}$ is intermediate if, for any candidates $c_i$ and $c_j$ and for any $r_-, r_+ \in \mathfrak{B}$ with $(c_j,c_i)$ promoted by $(r_-, r_+)$, and for any $m \in \pi_{i,j}(\mathfrak{B})$ with $\pi_{i,j}(r_-) < m < \pi_{i,j}(r_+)$, there exists $r \in \mathfrak{B}$ with $\pi_{i,j}(r) = m$ such that $(c_j,c_i)$ is promoted both by $(r_-,r)$ and by $(r,r_+)$.
\end{defn}

Later we will present lemmas \ref{amiia}, \ref{prmiia} and \ref{ivmiia} confirming that elections on increasing, intermediate ballot domains behave as one would expect based on their quantitative properties.

\subsection{Summary of results}

In section \ref{section:conditions} we will define various conditions on SWFs. We define the unrestricted domain, in which no election is ruled out by restricting $E$, and the Condorcet cycle domain, which is the ballot domain with three candidates and a fixed cyclic order of preference between them.

We then define the critical conditions required by Maskin in \cite{maskin}, first and foremost the MIIA condition, which requires the relative ranking between two candidates to be fixed by the gap between them in each voter's ranking. We also define the anonymity condition A, which requires all voters to be treated anonymously, and the neutrality condition N, which requires that the candidates can be permuted.

As long as the election domain is sufficiently well-behaved, we prove in Lemmas \ref{amiia}, \ref{nmiia} and \ref{namiia} that these conditions allow us to simplify our SWF $F$ to a single weight-based relative SWF $f'$, which gives the final ranking between any two candidates, given only the sizes of the parts of the partition of $V$ according to the gap between those two candidates in each voter's ballot.  In Lemmas \ref{intermediateballots}, \ref{increasingballots} and \ref{separableballots} we confirm that the unrestricted and Condorcet cycle domains are well-behaved in all the ways required.

We then define various conditions which guarantee that the SWF behaves as expected when a candidate does ``well'' or ``poorly''. The Pareto principle states that a unanimous preference among voters between two candidates forces that preference in the final ranking; Positive Responsiveness ensures that moving a candidate up in the voters' rankings cannot harm that candidate in the final ranking; and the Intermediate Value condition states for $V$ infinite that this can be done smoothly. In Lemmas \ref{prmiia} and \ref{ivmiia} we demonstrate that under the MIIA condition, PR and IV are equivalent to simpler statements about the relative SWFs that the MIIA condition gives.

We also give in Lemma \ref{relativetofull} sufficient criteria for a relative SWF, as defined in the MIIA condition, to correspond to a valid SWF $F$. This expands on the method used by Kalai in \autocite{kalai}. This allows us to construct an SWF simply by giving a relative SWF and checking these criteria. We complete our preparation for the main results by defining the Borda rules in section \ref{section:borda}. These are the rules characterised by each voter awarding scores to each candidate, linear to the candidate's placement in the voter's ballot, after which the candidates are compared by their aggregate score across all voters. In Lemma \ref{weightinglemma} we demonstrate that the anonymity condition on the SWF guarantees that any Borda rule is equivalent to an unweighted Borda rule (i.e. where each voter awards scores on the same scale).

We then construct non-Borda SWFs meeting various combinations of these conditions as follows.

The most important construction in the context of social choice theory is one which meets all of the strictest conditions - unrestricted domain, anonymity, neutrality, MIIA, positive responsiveness, Pareto and intermediate value. In section \ref{UA} we give these constructions for $|C|=3$ both in the infinite case (in Theorem \ref{IUA}) and in the finite case for large enough $n$ (in Theorem \ref{FUA}). The case of $|C|>3$ is addressed in a forthcoming paper.

These constructions become Borda when restricted to the Condorcet cycle domain, which raises the question of whether non-Borda rules can exist under condition CC. We address this in section \ref{CA}.

The finite case is answered by Theorem \ref{FCA}, which states that even without any increasingness condition, P or PR, the SWF must be Borda. However, in the infinite case, Theorem \ref{ICAchoice} gives more non-Borda rules even in the Condorcet cycle domain, relying on the axiom of choice (with the Pareto principle but without the PR condition). But we eliminate these rules if we require that $F$ is measurable, using the natural $\sigma$-algebra on $E$ given in Definition \ref{def:measurable}.

Imposing measurability, we can prove Theorem \ref{ICAmeasurable}, in which we do not need either PR or P to guarantee that the SWF is Borda. Nonetheless, PR implies measurability, allowing us to give as Corollary \ref{ICA} a new proof of the result appearing in \autocite{maskin} that on the Condorcet cycle domain, MIIA, A, N and PR imply that $F$ is the positive unweighted Borda rule or the tie rule\footnote{In \autocite{maskin} the tie rule is precluded by a slightly stronger definition of the PR condition in which any ``promotion'' of $c_j$ over $c_i$ when previously they had tied must result in $c_j$ beating $c_i$ - in a sense this describes the function as being ``strictly increasing'' rather than just ``increasing'').}.

\section{Conditions on SWFs}\label{section:conditions}

We now define a series of conditions on SWFs.

\begin{defn}[Unrestricted Domain - U]
A SWF has unrestricted domain if it determines social preferences for as large as possible a family of measurable elections. In other words, $E = \bar{E}$. This is equivalent to taking the ballot domain with $\mathfrak{B} = \mathfrak{R}$.
\end{defn}

\begin{defn}[Condorcet Cycle Domain - CC]
For $|C|=3$, a SWF has Condorcet cycle domain if it determines social preferences for elections in which all votes are even permutations of $c_1 c_2 c_3$. Writing $\mathfrak{C} = \set{c_1 c_2 c_3, c_2 c_3 c_1, c_3 c_1 c_2}$, the Condorcet Cycle condition means that $E = \bar{E} \cap \mathfrak{C}^V$. In other words, this is the ballot domain with $\mathfrak{B} = \mathfrak{C}$.
\end{defn}

Note that all SWFs which satisfy U can be restricted to SWFs which satisfy CC. However, we will see examples where SWFs on the unrestricted domain can behave unexpectedly but, after restricting them to the Condorcet cycle domain, their behaviour becomes more predictable. Therefore it is of independent interest to try to construct unexpected SWFs on the Condorcet cycle domain. (It is also feasible that a SWF satisfying U and IV would lose the IV property when restricted to the Condorcet cycle domain, but we do not cover any such cases here).

The following results mean that as long as we impose one of U and CC, we can always assume the numeric properties of our domain that come with an increasing, intermediate domain.

\begin{lem}\label{intermediateballots}
$\mathfrak{R}$ and $\mathfrak{C}$ are intermediate ballots.
\end{lem}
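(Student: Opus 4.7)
The claim splits into two parts. For $\mathfrak{C}$ I would simply enumerate: for each pair $(c_i,c_j)$, the image $\pi_{i,j}(\mathfrak{C})$ is a two-element subset of $D_3$ (for instance, $\pi_{1,2}(\mathfrak{C}) = \set{1,-2}$), so no $m$ in that image lies strictly between its two elements, and the intermediate condition holds vacuously.

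The substance is showing $\mathfrak{R}$ is intermediate. Condition (iv) of Definition \ref{def:promoting}, applied to the assumed promotion of $(c_j,c_i)$ by $(r_-, r_+)$, forces $r_-$ and $r_+$ to induce the same total order on $S := C \setminus \set{c_i, c_j}$; fix that order. Any $r \in \mathfrak{R}$ preserving this order on $S$ is then determined by the pair $(\sigma_j(r), \sigma_i(r))$, where $\sigma_x(\rho) := |\set{s \in S : \rho(s) < \rho(c_x)}|$, together with a tiebreak specifying which of $c_i, c_j$ is higher when $\sigma_i(r) = \sigma_j(r)$. Conditions (i) and (ii) of the promotion then become the box constraints $\sigma_j^+ \leq \sigma_j(r) \leq \sigma_j^-$ and $\sigma_i^- \leq \sigma_i(r) \leq \sigma_i^+$, while a direct calculation gives $\pi_{i,j}(r) = \sigma_i(r) - \sigma_j(r) + 1$ when $c_i$ is above $c_j$ in $r$, and $\sigma_i(r) - \sigma_j(r) - 1$ otherwise.

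My plan is to fix the tiebreak to match the sign of $m$, write $\sigma_j(r) = \sigma_j^+ + s_j$ and $\sigma_i(r) = \sigma_i^+ - s_i$, and reduce the problem to finding $s_j \in [0, \sigma_j^- - \sigma_j^+]$ and $s_i \in [0, \sigma_i^+ - \sigma_i^-]$ with a prescribed sum. The main obstacle is then a case analysis on the signs of $a := \pi_{i,j}(r_-)$ and $b := \pi_{i,j}(r_+)$, where by hypothesis $a < m < b$. The upper bound $(\sigma_j^- - \sigma_j^+) + (\sigma_i^+ - \sigma_i^-)$ on the sum equals $b - a$ when $a$ and $b$ share a sign, and $b - a - 2$ when $a < 0 < b$; the target value of $s_i + s_j$ likewise equals $b - m$ or $b - m - 2$ depending on whether $m$ and $b$ share a sign. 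Checking the four resulting cases reduces the required inequality to $a < m < b$, so a valid $(s_i, s_j)$ always exists.

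Finally, condition (iii) of Definition \ref{def:promoting} must be verified for each of the promotions $(r_-, r)$ and $(r, r_+)$. This is automatic: by construction the sign of $\pi_{i,j}(r)$ matches that of $m$, and the strict inequalities $a < m < b$ ensure compatibility with the signs of $a$ and $b$ in the only nontrivial cases.
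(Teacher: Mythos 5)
Your proof is correct, but it takes a genuinely different route from the paper's. The paper proves intermediacy of $\mathfrak{R}$ constructively: starting from $r_+$ it builds a descending chain $r_{m_+}, r_{m_+-1}, \dots, r_{m_-}$ of single adjacent transpositions (moving $c_i$ down or $c_j$ up one place, with a special $c_i \leftrightarrow c_j$ swap at the jump from $\pi_{i,j}=1$ to $-1$), checks that each step is a promotion decreasing $\pi_{i,j}$ by exactly one, and then argues that the chain must terminate at $r_-$ itself, so that the desired $r$ is simply the link of the chain with $\pi_{i,j}(r)=m$. You instead parametrise all rankings sharing the $S$-order forced by condition (iv) by the coordinates $(\sigma_i,\sigma_j)$ plus a tiebreak, observe that conditions (i) and (ii) of the two required promotions are exactly box constraints on these coordinates, and solve a small integer feasibility problem; the only delicate point is the shift by $\pm 1$ in $\pi_{i,j}=\sigma_i-\sigma_j\pm 1$ caused by $\pi$ skipping zero, which is where your four-case sign analysis lives (and I have checked that the quantities $b-a$ versus $b-a-2$ and $b-m$ versus $b-m-2$ come out as you state, so feasibility does reduce to $a<m<b$). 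Your handling of condition (iii) via the ordering of the signs of $a,m,b$ is also sound, and the vacuous argument for $\mathfrak{C}$ coincides with the paper's. The trade-off: your argument produces the intermediate $r$ in one step and avoids the paper's induction and its verification that the transposition chain lands exactly on $r_-$; the paper's chain argument is longer but yields slightly more, namely a full refinement of the promotion $(r_-,r_+)$ through every intermediate value of $\pi_{i,j}$, which is in the spirit of how Lemma \ref{increasingballots} then reuses it. The one presentational gap is that your four-case check is asserted rather than written out, but it is routine and correct.
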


\begin{proof}
For $\mathfrak{C}$ and for any $i \neq j$, either $\pi_{i,j}(\mathfrak{C}) = \set{1,-2}$ or $\pi_{i,j}(\mathfrak{C}) = \set{2,-1}$; in either case, there can be no $\pi_{i,j}(r_-) < m < \pi_{i,j}(r_+)$ in the set, so intermediacy holds vacuously.

On the other hand, for $\mathfrak{R}$, suppose we have $\pi_{i,j}(r_-) < m < \pi_{i,j}(r_+)$ with $(c_j,c_i)$ promoted by $(r_+, r_-)$. We let $\pi_{i,j}(r_+)=m_+$ and define $r_{m_+} = r_+$. Now given $r_1$ we define $r_{-1}$ by transposing $c_i$ and $c_j$; that is, $r_{-1}(c_i) = r_1(c_j)$ and $r_{-1}(c_j) = r_1(c_i)$. On the other hand, given $r_{m'}$ for some $m_- < m' \leq m_+$ with $m' \neq 1$, we have $r_{m'}(c_i) - r_{m'}(c_j) > r_-(c_i)-r_-(c_j)$, so either $r_{m'}(c_i)> r_-(c_i)$ or $r_{m'}(c_j) < r_-(c_j)$. In the first case, we let $c_k \neq c_j$ be such that $r_{m'}(c_k) = r_{m'}(c_i)-1$, and we transpose $c_i$ and $c_k$ to get $r_{m'-1}$ with $r_{m'-1}(c_i) = r_{m'}(c_k)$ and $r_{m'-1}(c_k)=r_{m'}(c_i)$. Symmetrically, in the second case, we let $c_k \neq c_i$ be such that $r_{m'}(c_k) = r_{m'}(c_j)+1$, and we transpose $c_j$ and $c_k$ to get $r_{m'-1}$ with $r_{m'-1}(c_j) = r_{m'}(c_k)$ and $r_{m'-1}(c_k)=r_{m'}(c_j)$.

Clearly we have $\pi_{i,j}(r_{m'-1}) = \pi_{i,j}(r_{m'})-1$, and as long as $\pi_{i,j}(r_1)=1$ we get $\pi_{i,j}(r_{-1})=-1$. Moreover $\pi_{i,j}(r_{m_+}) = m_+$, so inductively we have $\pi_{i,j}(r_{m'}) = m'$ for $m_- \leq m' \leq m_+$. Moreover, each transposition guarantees that $(c_j,c_i)$ is promoted by $(r_{m'-1},r_{m'})$, and also by $(r_{-1},r_1)$. Finally $r_{m_-}(c_i) \geq r_-(c_i)$ and $r_{m_-}(c_j) \leq r_-(c_j)$, but $r_{m_-}(c_i) - r_{m_-}(c_j) = r_-(c_i)-r_-(c_j)$ so $r_{m_-}(c_i) = r_-(c_i)$ and $r_{m_-}(c_j) = r_-(c_j)$. Now for any two other candidates $c_k$ and $c_{\ell}$, we have $r_-(c_k) > r_-(c_{\ell})$ if and only if $r_+(c_k) > r_+(c_{\ell})$, since $(c_j,c_i)$ is promoted by $(r_+,r_-)$; but $c_k$ and $c_{\ell}$ have never been transposed to construct any intermediate $r_{m'}$, so $r_+(c_k) > r_+(c_{\ell})$ if and only if $r_{m_-}(c_k) > r_{m_-}(c_{\ell})$.

Therefore $r_-$ and $r_{m_-}$ are identical. So we have broken the promotion $(r_-,r_+)$ into a series of promotions covering each intermediate value between $m_-$ and $m_+$, so indeed there exists $r_m$ such that $\pi_{i,j}(r_m)=m$ and $(c_j,c_i)$ is promoted both by $(r_-, r_m)$ and by $(r_m, r_+)$ (we can chain together promotions, as remarked after Definition \ref{def:promoting}).
\end{proof}

\begin{lem}\label{increasingballots}
$\mathfrak{R}$ and $\mathfrak{C}$ are increasing ballots.
\end{lem}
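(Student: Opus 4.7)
I would handle $\mathfrak{C}$ and $\mathfrak{R}$ separately. For $\mathfrak{C}$, the image $\pi_{i,j}(\mathfrak{C})$ is always a two-element subset of $D_3$ (either $\{1,-2\}$ or $\{2,-1\}$), so for each ordered pair $(i,j)$ there is a single nontrivial pair $m>m'$ to verify. Because there is only one ``third candidate'' $c_k$ and no ``fourth candidate'' in a three-candidate ballot, most clauses of Definition \ref{def:promoting} become vacuous, and a short case analysis, reduced further by the cyclic symmetry of $\mathfrak{C}$, confirms that from any $r \in \mathfrak{C}$ with $\pi_{i,j}(r)=m$ one can move to the ballot(s) realising $m'$ via a promotion of $(c_j,c_i)$.

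For $\mathfrak{R}$, I would exploit the intermediacy established in Lemma \ref{intermediateballots}. Given $r$ with $\pi_{i,j}(r)=m$ and a target $m' < m$ in $D_k$, I first exhibit an extreme ranking $r_{\min}$ obtained from $r$ by moving $c_j$ to the top position ($k-1$), $c_i$ to the bottom position ($0$), and leaving the remaining $k-2$ candidates in their $r$-order in between. Then $\pi_{i,j}(r_{\min}) = 1-k$, the minimum of $D_k$, and the four conditions of Definition \ref{def:promoting} are straightforward to verify for $(r_{\min},r)$: clauses (i) and (ii) hold because $c_i$ is at the bottom and $c_j$ at the top of $r_{\min}$, clause (iii) is immediate, and clause (iv) holds by the deliberate choice to preserve the $r$-order of the remaining candidates. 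If $m' = 1-k$ we take $r' = r_{\min}$; otherwise $1-k < m' < m$, and applying intermediacy to $r_- = r_{\min}$, $r_+ = r$ yields the desired $r' \in \mathfrak{R}$ with $\pi_{i,j}(r')=m'$ and $(c_j,c_i)$ promoted by $(r',r)$.

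The principal obstacle is the bookkeeping in verifying that $r_{\min}$ is a valid promotion of $r$, particularly clause (iv); the construction is designed to make this transparent. A direct combinatorial alternative would mirror the transposition chain in the proof of Lemma \ref{intermediateballots}, building $r'$ from $r$ via elementary swaps (lowering $c_i$ by one rank, raising $c_j$ by one rank, or transposing $c_i$ and $c_j$ when they are adjacent), each of which is itself a promotion; chaining these via the remark after Definition \ref{def:promoting} yields the same conclusion without appealing to intermediacy.
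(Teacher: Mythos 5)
Your proposal is correct and follows essentially the same route as the paper: for $\mathfrak{R}$ it constructs the same extreme ranking (with $c_i$ at position $0$, $c_j$ at position $k-1$, and the other candidates kept in their $r$-order) and then invokes the intermediacy from Lemma \ref{intermediateballots}, and for $\mathfrak{C}$ it reduces to the same short case analysis over the two-element images $\{1,-2\}$ and $\{2,-1\}$. The only difference is that the paper writes out the explicit witnesses $r'$ for $\mathfrak{C}$, which your sketch leaves to the reader.
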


\begin{proof}

In $\mathfrak{R}$, we have $\pi_{i,j}(\mathfrak{R})=D_k$. Take any $m, m' \in \pi_{i,j}(\mathfrak{B})$ with $m>m'$, and any $r \in \mathfrak{B}$ with $\pi_{i,j}(r) = m$. Now we define $r_-$ to be the unique ranking for which all $c_k \neq c_i, c_j$ are in the same order as in $r$, but $c_i$ is sent to the lowest position (so $r_-(c_i)=0$) and $c_j$ to the highest position (so $r_-(c_j) = k-1$). It is immediate that $(c_j,c_i)$ is promoted by $(r_-,r)$. Now if $m' = 1-k$ then setting $r' = r_-$ gives that the ballot $\mathfrak{R}$ is increasing. Otherwise, we have $m > m' > 1-k$, and we apply the intermediate property of $\mathfrak{R}$ proved in Lemma \ref{intermediateballots} to find $r'$ with $\pi_{i,j}(r')=m'$ and $(r',r)$ promoted by $(c_j,c_i)$.

In $\mathfrak{C}$, we must separately deal with the cases of $j = i+1$ and $j = i-1$. In the former case, we have $\pi_{i, i+1}(\mathfrak{C}) = \set{1, -2}$, and we must have $m=1$ and $m' = -2$. We always set $r' = c_{i+1} c_{i+2} c_i$ and we notice that whether $r = c_i c_{i+1} c_{i+2}$ or $r = c_{i+2},c_i,c_{i+1}$, $(c_{i+1},c_i)$ is promoted by $(r',r)$. Similarly, in the latter case, we have  $\pi_{i,i+2}(\mathfrak{C}) = \set{2,-1}$; now $r$ must be $c_i c_{i+1} c_{i+2}$, and $r'=c_{i+1} c_{i+2} c_i$ has that $(c_{i+2},c_i)$ is promoted by $(r',r)$ as required.
\end{proof}

Thus, either U or CC is sufficient to guarantee that the election domain is an increasing, intermediate ballot domain.

The following condition was introduced by Maskin in \autocite{maskin}, as an alternative to Arrow's stricter IIA condition.

\begin{defn}[Modified Irrelevance of Independent Alternatives - MIIA]
A SWF $F$ satisfies the MIIA condition if, for two candidates $c_i$ and $c_j$, there exists a ``relative SWF'': a function $f_{i,j}: A_{i,j} \rightarrow \set{W, T, L}$ such that for all $e \in E$ we have $f_{i,j}(\pi_{i,j}(e)) = \pi_{i,j}(F(e))$.
\end{defn}

Intuitively, if for two elections, each voter places the same order between candidates $c_1$ and $c_2$ with the same number of other candidates between them, then the results of those two elections must give the same ranking between $c_1$ and $c_2$.

Often this function $f_{i,j}$ will be given by the sign of some real value, so we define $\varphi: \mathbbm{R} \rightarrow \set{W,T,L}$ by $\varphi(x) = W$ for $x>0$, $\varphi(x) = T$ for $x=0$ and $\varphi(x) = L$ for $x<0$. Note that if $F = \Phi(G)$ for some $G$, we have $\varphi(G(c_i,e) - G(c_j,e)) = f_{i,j}(\pi_{i,j}(e))$.

We now define the anonymity condition, which states that the SWF is unaffected by which voters submit which ballots, and instead is determined only by \emph{how many} voters (either by count or by measure) vote in a given way. In light of this, we define $\tau: E \rightarrow \mathbbm{R}_{\geq 0}^{\mathfrak{R}}$ by $\tau(e)(r) = \nu(e^{-1}(r))$ (or for $V$ finite $\tau(e)(r) = |e^{-1}(r)|$), and we let $E' = \Image(\tau)$.

\begin{defn}[Anonymity - A]
A SWF $F$ satisfies anonymity if there exists a function $F': E' \rightarrow \mathfrak{R}_=$ such that $F = F' \circ \tau$.
\end{defn}

\begin{rem}
$F$ and $F'$ determine one another. If $F'_1 = F'_2$ then $F_1 = F'_1 \circ \tau = F'_2 \circ \tau = F_2$, and if $F'_1 \neq F'_2$ then we can pick $\vec{\varepsilon} \in E'$ on which $F'_1$ and $F'_2$ disagree; then we pick $e \in E$ with $\tau(e)=\vec{\varepsilon}$, and $F_1(e) = F'_1(\vec{\varepsilon}) \neq F'_2(\vec{\varepsilon}) = F_2(e)$, so $F_1 \neq F_2$.
\end{rem}

We can think of $F'$ as the ``weight-based'' SWF. For a finite SWF satisfying anonymity, this is equivalent to the condition that the voters can be permuted in any way.

\begin{lem}\label{permutingvoters}
Let $F$ be a SWF on $V$ finite. Then $F$ satisfies anonymity if and only if for any permutation $\sigma \in \Sym(V)$ we have $F(e \circ \sigma) = F(e)$.
\end{lem}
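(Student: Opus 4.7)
The plan is to prove both implications directly from the definition of anonymity, treating $\tau$ as the ``multiplicity vector'' map $\tau(e)(r) = |e^{-1}(r)|$.

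For the forward direction, I would assume anonymity so that $F = F' \circ \tau$ for some $F'$. For any permutation $\sigma \in \Sym(V)$ and any $r \in \mathfrak{R}$, the set $(e \circ \sigma)^{-1}(r) = \sigma^{-1}(e^{-1}(r))$ has the same cardinality as $e^{-1}(r)$ because $\sigma$ is a bijection of $V$. Hence $\tau(e \circ \sigma) = \tau(e)$, and applying $F'$ yields $F(e \circ \sigma) = F(e)$.

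For the converse, assume $F(e \circ \sigma) = F(e)$ for every $\sigma \in \Sym(V)$. I would define $F': E' \to \mathfrak{R}_=$ by choosing, for each $\vec{\varepsilon} \in E' = \Image(\tau)$, some preimage $e_{\vec{\varepsilon}} \in \tau^{-1}(\vec{\varepsilon})$ and setting $F'(\vec{\varepsilon}) = F(e_{\vec{\varepsilon}})$. To verify that $F = F' \circ \tau$, take any $e \in E$ and let $\vec{\varepsilon} = \tau(e)$. Then $\tau(e_{\vec{\varepsilon}}) = \vec{\varepsilon}$ as well, so for each $r \in \mathfrak{R}$ the fibres $e^{-1}(r)$ and $e_{\vec{\varepsilon}}^{-1}(r)$ have the same finite cardinality. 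Choosing a bijection between these fibres for each $r$ and patching them together (the fibres partition $V$ on both sides) produces a permutation $\sigma \in \Sym(V)$ with $e = e_{\vec{\varepsilon}} \circ \sigma$. By hypothesis $F(e) = F(e_{\vec{\varepsilon}} \circ \sigma) = F(e_{\vec{\varepsilon}}) = F'(\vec{\varepsilon}) = F'(\tau(e))$.

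The only step requiring real thought is the patching argument in the converse: the key observation is that two finite elections with the same $\tau$-vector are related by a permutation of the voter set, since having equal fibre sizes over a finite set is precisely what it takes to produce such a bijection. Everything else is a direct unwinding of definitions, and the finiteness of $V$ is used exactly to guarantee that equal cardinalities of fibres give a bijection.
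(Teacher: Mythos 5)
Your proposal is correct and follows essentially the same route as the paper: the forward direction is the identical computation $\tau(e\circ\sigma)=\tau(e)$, and the converse rests on the same key observation that two finite elections with equal $\tau$-vectors are related by a permutation built from fibrewise bijections. The only cosmetic difference is that the paper phrases the converse as showing $F$ is constant on fibres of $\tau$ rather than explicitly choosing representatives, but the underlying argument is the same.
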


\begin{proof}
In the forward direction, suppose $F$ satisfies anonymity. Then $F = F' \circ \tau$. So it suffices to show that $\tau(e \circ \sigma) = \tau(e)$. Indeed, 

\begin{equation}
\begin{aligned}
\tau(e \circ \sigma)(r) &= |(e \circ \sigma)^{-1}(r)| \\
&= |\sigma^{-1}(e^{-1}(r))| \\
&= |e^{-1}(r)| \\
&= \tau(e)(r)
\end{aligned}
\end{equation}

The penultimate equality holds because $\sigma$ is a bijection, so $\sigma^{-1}(X)$ and $X$ have the same cardinality for any $X$.

For the converse, suppose that two elections $e$ and $e'$ have $\tau(e) = \tau(e')$; it will suffice to show that $F(e) = F(e')$. For each $r \in \mathfrak{R}$, note that $\tau(e)(r) = \tau(e')(r)$, so $|e^{-1}(r)| = |e'^{-1}(r)|$. Therefore we can define a bijection $m_r: e^{-1}(r) \rightarrow e'^{-1}(r)$. Now we let $\sigma$ be the union of these $m_r$ over each $r$, which is a permutation of $V$. Then $e = e' \circ \sigma$, so $F(e') = F(e' \circ \sigma) = F(e)$, as required.
\end{proof}

Note that if we also have the MIIA condition, then the functions $f_{i,j}$ can also be replaced by $f'_{i,j}$ functions in a corresponding manner. We let $\tau_{i,j}: A_{i,j} \rightarrow \mathbbm{R}_{\geq 0}^{D_k}$ by $\tau_{i,j}(a)(i) = \nu(a^{-1}(i))$, or $|a^{-1}(i)|$ for $V$ finite, and let $A'_{i,j} = \Image(\tau_{i,j})$. We also define $\varrho: A'_{i,j} \rightarrow A'_{j,i}$ by $\varrho(\vec{\alpha})_k = \alpha_{-k}$.

\begin{rem}
We can define $\pi_{i,j}$ to act on $E'$ giving a result in $A'_{i,j}$, by setting
\begin{equation}
\pi_{i,j}(\vec{\varepsilon})_m = \sum_{r: \pi_{i,j}(r)=m} \varepsilon_m
\end{equation}
Then $\pi_{i,j} \circ \tau = \tau_{i,j}\circ \pi_{i,j}$.
\end{rem}

Now there exists $f'_{i,j}$ such that $f_{i,j} = f'_{i,j} \circ \tau_{i,j}$, by the following lemma:

\begin{lem}\label{amiia}
Let $F$ be a SWF on a ballot domain satisfying MIIA and A. Then for any two candidates $c_i$ and $c_j$ we can define a weight-based relative SWF $f'_{i,j}$ such that $f_{i,j} = f'_{i,j} \circ \tau_{i,j}$.
\end{lem}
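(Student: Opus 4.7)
The plan is to define $f'_{i,j}$ by $f'_{i,j}(\tau_{i,j}(a)) = f_{i,j}(a)$ for $a \in A_{i,j}$, so that $f_{i,j} = f'_{i,j} \circ \tau_{i,j}$ holds by construction. The content of the lemma is then well-definedness: I must show that whenever $a_1, a_2 \in A_{i,j}$ satisfy $\tau_{i,j}(a_1) = \tau_{i,j}(a_2)$, we have $f_{i,j}(a_1) = f_{i,j}(a_2)$. My strategy is to lift them to full elections $e_1, e_2 \in E$ with $\pi_{i,j}(e_k) = a_k$ and $\tau(e_1) = \tau(e_2)$; anonymity then forces $F(e_1) = F'(\tau(e_1)) = F'(\tau(e_2)) = F(e_2)$, so $\pi_{i,j}(F(e_1)) = \pi_{i,j}(F(e_2))$, and MIIA yields $f_{i,j}(a_1) = \pi_{i,j}(F(e_1)) = \pi_{i,j}(F(e_2)) = f_{i,j}(a_2)$.

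First I would fix any $e_1 \in E$ with $\pi_{i,j}(e_1) = a_1$, which exists since $A_{i,j} = \Image(\pi_{i,j})$. In the finite case, $\tau_{i,j}(a_1) = \tau_{i,j}(a_2)$ gives $|a_1^{-1}(m)| = |a_2^{-1}(m)|$ for each $m \in D_k$, so I can produce a permutation $\sigma \in \Sym(V)$ with $a_1 = a_2 \circ \sigma$; then $e_2 := e_1 \circ \sigma^{-1}$ satisfies $\pi_{i,j}(e_2) = a_1 \circ \sigma^{-1} = a_2$, and $\tau(e_2) = \tau(e_1)$ by the same preimage-counting calculation as in Lemma \ref{permutingvoters}.

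The main obstacle is the infinite, measurable case, where no suitable permutation need exist. For each $m \in B_{i,j}$, the set $a_2^{-1}(m) \in \Sigma$ has measure $\tau_{i,j}(a_2)(m) = \tau_{i,j}(a_1)(m) = \sum_{r : \pi_{i,j}(r) = m} \nu(e_1^{-1}(r))$, where the last equality uses the remark that $\pi_{i,j} \circ \tau = \tau_{i,j} \circ \pi_{i,j}$. Using the atomless property of $\nu$ via Sierpinski's theorem, I would iteratively carve $a_2^{-1}(m)$ into disjoint measurable pieces $W_{m,r}$ indexed by those $r \in \mathfrak{B}$ with $\pi_{i,j}(r) = m$, chosen so that $\nu(W_{m,r}) = \nu(e_1^{-1}(r))$. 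Defining $e_2(v) = r$ whenever $v \in W_{a_2(v),r}$ then gives a measurable function $V \to \mathfrak{B}$ with $e_2^{-1}(r) = W_{\pi_{i,j}(r), r}$, so that $\pi_{i,j}(e_2) = a_2$ and $\tau(e_2) = \tau(e_1)$ as required, completing the lift.
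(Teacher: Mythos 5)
Your proposal is correct, and its skeleton matches the paper's: reduce to well-definedness of $f'_{i,j}$ on fibres of $\tau_{i,j}$, lift $a_1$ and $a_2$ to elections $e_1, e_2$ with $\pi_{i,j}(e_k)=a_k$ and $\tau(e_1)=\tau(e_2)$, then apply A followed by MIIA. The difference lies in how you build the lift. You fix an \emph{arbitrary} $e_1$ with $\pi_{i,j}(e_1)=a_1$ and then engineer $e_2$ to match its $\tau$-vector, which forces a case split: a permutation argument (as in Lemma \ref{permutingvoters}) when $V$ is finite, and Sierpinski-style surgery on the atomless measure to carve each $a_2^{-1}(m)$ into pieces $W_{m,r}$ of prescribed measure when $V$ is infinite. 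The paper instead constructs \emph{both} elections canonically: it chooses one representative ranking $r_s\in\mathfrak{B}$ for each gap value $s$ and sets $e_k(v)=r_{a_k(v)}$, so that both elections are supported on the same finite set of rankings $\set{r_s}$ with $\tau(e_k)(r_s)=\vec{\alpha}_s$ automatically; this gives $\tau(e_1)=\tau(e_2)$ with no measure-theoretic work and treats the finite and infinite cases uniformly, needing neither atomlessness nor Sierpinski's theorem. (That $f_{i,j}(a_1)=\pi_{i,j}(F(e_1))$ for the \emph{constructed} $e_1$ rather than the original one is exactly what MIIA licenses.) Your route is heavier but equally valid, and it has the mild virtue of showing the matching election $e_2$ can be built relative to any pre-assigned $e_1$; the paper's canonical-representative trick is the more economical choice and is the one you should compare your infinite-case step against if you want to avoid invoking the atomless hypothesis where it is not needed.
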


Indeed, $f_{i,j}$ and $f'_{i,j}$ will be in one to one correspondence; if $f'_{i,j} = g'_{i,j}$ then $f_{i,j} = f'_{i,j} \circ \tau_{i,j} = g'_{i,j} \circ \tau_{i,j} = g_{i,j}$. On the other hand, if $f'_{i,j} \neq g'_{i,j}$ then we can pick $\vec{\alpha} \in A'_{i,j}$ on which they disagree; then we can choose $e \in E$ such that $\tau_{i,j}(\pi_{i,j}(e)) = \vec{\alpha}$. But then

\begin{equation}
\begin{aligned}
f_{i,j}(\pi_{i,j}(e)) &= f'_{i,j}(\vec{\alpha}) \\
&\neq g'_{i,j}(\vec{\alpha}) \\
&= g_{i,j}(\pi_{i,j}(e))
\end{aligned}
\end{equation}

so $f_{i,j}$ and $g_{i,j}$ disagree on some $a = \pi_{i,j}(e)$ in $A_{i,j}$.

\begin{proof}
For any $\vec{\alpha} \in A'_{i,j}$, we need to prove that all $a \in \tau_{i,j}^{-1}(\vec{\alpha})$ take the same value $f_{i,j}(a)$. Then we can define $f'_{i,j}(\vec{\alpha}) = f_{i,j}(a)$ for any choice of $a$.

Suppose that $a_1$ and $a_2$ in $A_{i,j}$ have $\tau_{i,j}(a_1) = \tau_{i,j}(a_2) = \vec{\alpha}$. For each $s \in D_k$ with $\vec{\alpha}_s > 0$, let $r_s \in \mathfrak{D}$ with $\pi_{i,j}(r_s)=s$. Note that the function $s \mapsto r_s$ is an injection. Now, since $F$ is on a ballot domain, we can define $e_1$ and $e_2$ by $e_1(v) = r_{a_1(v)}$ and $e_2(v) = r_{a_2(v)}$; so $\pi_{i,j}(e_i)=a_i$ for $i=1,2$. Then $\nu(e_1^{-1}(r_s)) = \nu(a_1^{-1}(s)) = \vec{\alpha}_s$, and the same goes for $\nu(e_2^{-1}(r_s))$. So for all $r \in \mathfrak{R}$, either $r$ is not equal to $r_s$ for any $s$, and $\tau(e_1)(r) = \tau(e_2)(r)=0$, or $r=r_s$ for some $s$, and $\tau(e_1)(r) = \tau(e_2)(r) = \vec{\alpha}_s$. Hence $\tau(e_1) = \tau(e_2)$, so $F(e_1) = F(e_2)$ by condition A.

Now $f_{i,j}(a_1) = \pi_{i,j}(F(e_1)) = \pi_{i,j}(F(e_2)) = f_{i,j}(a_2)$ as required.
\end{proof}

We define a partial order on $A'_{i,j}$ corresponding to the order we already have on $A_{i,j}$. We say that $\vec{\alpha} \geq \vec{\beta}$ if, for all $m \in D_k$, we have

\begin{equation}
\sum_{n\geq m} \alpha_n \geq \sum_{n\geq m} \beta_n
\end{equation}

In other words, $\vec{\alpha}$ majorises $\vec{\beta}$. Clearly if $a \geq b \in A_{i,j}$ then $\tau_{i,j}(a) \geq \tau_{i,j}(b)$.

We can define a whole family of weaker anonymity conditions by requiring $F$ to factor into some $F'' \circ \tau'$ for $\tau'$ a refinement of $\tau$ (i.e. with $\tau = \chi \circ \tau'$ for some $\chi$). Similarly, for $V$ finite, we can define a weaker anonymity conditions by requiring $F$ to be invariant only under some subgroup of the permutations $\Sym(V)$, as opposed to the entire set. The important case of transitive anonymity, in which we are given only that this subgroup $G \leq \Sym(V)$ acts transitively on $V$, will be addressed in forthcoming papers.

As well as some amount of symmetry between voters we expect symmetry between candidates. For a permutation $\rho: C \rightarrow C$ we can apply $\rho$ to $r \in \mathfrak{R}$ or to $\succ \in \mathfrak{R}_=$ in the natural way: we define $\rho(r) = r \circ \rho$ and $\rho(\succ)$ to be the unique weak ordering such that for all $c_i \neq c_j \in C$ with $\rho(c_i)=c_{i'}$ and $\rho(c_j)=c_{j'}$ we have $\pi_{i',j'}(\rho(\succ)) = \pi_{i,j}(\succ)$.

\begin{defn}[Neutrality - N]
A SWF satisfies neutrality if the candidates are treated symmetrically. That is to say, for any permutation $\rho: C \rightarrow C$ and for all $e$ such that $\rho \circ e \in E$ we have $F(\rho \circ e) = \rho(F(e))$.
\end{defn}

We would hope that an election satisfying both N and MIIA has $f_{i,j}$ and $f_{i',j'}$ agreeing on their intersections for any two pairs $(c_i,c_j)$ and $(c_{i'},c_{j'})$. To guarantee this, we require another property of the domain:

\begin{defn}[Separable domain]
An election domain $E$ is separable if, for any two pairs of candidates $c_i \neq c_j$ and $c_{i'} \neq c_{j'}$, if the intersection between $A_{i,j}$ and $A_{i',j'}$ is non-empty then there exists a permutation $\rho \in \Sym(C)$ such that $\rho(E) = E$, $\rho(c_i)=c_{i'}$ and $\rho(c_j)=c_{j'}$.
\end{defn}

\begin{lem}\label{nmiia}
Let $F$ be a SWF on a separable domain satisfying MIIA and N. Then for any candidates $c_i \neq c_j$ and $c_{i'} \neq c_{j'}$, if $A_{i,j} \cap A_{i',j'}$ is non-empty then $A_{i,j}=A_{i',j'}$ and $f_{i,j}=f_{i',j'}$.
\end{lem}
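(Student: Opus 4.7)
The separability hypothesis provides a permutation $\rho\in\Sym(C)$ with $\rho(E)=E$, $\rho(c_i)=c_{i'}$, $\rho(c_j)=c_{j'}$; write $\tau=\rho^{-1}$, so that $\tau(E)=E$ as well and both $\rho\circ e$ and $\tau\circ e$ lie in $E$ for every $e\in E$. The strategy is to transport elections and results back and forth along $\rho$ and $\tau$, using the neutrality hypothesis to rewrite $F(\rho\circ e)$ as $\rho(F(e))$ and MIIA to translate $\pi_{i,j}(F(\cdot))$ into values of the relative SWFs.

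For the set equality, I would unfold $\rho(r)=r\circ\rho$ to obtain
\[
\pi_{i,j}(\rho\circ e)(v) = e(v)(\rho(c_i)) - e(v)(\rho(c_j)) = e(v)(c_{i'}) - e(v)(c_{j'}) = \pi_{i',j'}(e)(v),
\]
so $e\mapsto \rho\circ e$ carries $A_{i',j'}$ into $A_{i,j}$. The same computation with $\tau$ in place of $\rho$ yields the reverse inclusion, giving $A_{i,j}=A_{i',j'}$.

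For the equality of the relative SWFs, I would apply $\pi_{i',j'}$ to each side of the neutrality identity $F(\rho\circ e)=\rho(F(e))$. The left hand side rewrites by MIIA as $f_{i',j'}(\pi_{i',j'}(\rho\circ e))$, and a direct expansion gives $\pi_{i',j'}(\rho\circ e) = \pi_{\rho(c_{i'}),\rho(c_{j'})}(e)$. The right hand side, using the defining formula $\pi_{i',j'}(\rho(\succ))=\pi_{i,j}(\succ)$ for the action of $\rho$ on weak orderings and a second application of MIIA, equals $\pi_{i,j}(F(e)) = f_{i,j}(\pi_{i,j}(e))$. Combining these two computations produces the key identity
\[
f_{i',j'}\bigl(\pi_{\rho(c_{i'}),\rho(c_{j'})}(e)\bigr) \;=\; f_{i,j}(\pi_{i,j}(e)) \qquad\text{for every } e\in E.
\]
To conclude $f_{i,j}(a)=f_{i',j'}(a)$ at a given $a$ in the common image, it then suffices to exhibit an $e\in E$ at which both of the arguments appearing in this identity simultaneously equal $a$.

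\textbf{Main obstacle.} The interesting point is the last sentence above: aligning $\pi_{\rho(c_{i'}),\rho(c_{j'})}(e)$ with $\pi_{i,j}(e)$ at a common input $a$. This is immediate when $\rho$ can be chosen to act as an involution on the four-element support $\{c_i,c_j,c_{i'},c_{j'}\}$, for instance as the double transposition $(c_i\,c_{i'})(c_j\,c_{j'})$, and in such a case the separability condition can be verified for our domains of interest with exactly such a $\rho$. When $\rho$ is genuinely non-involutive on the support---as happens on the Condorcet cycle domain, where the natural $\rho$ is a $3$-cycle---the alignment must instead be obtained by iterating the key identity around the finite orbit of $(c_i,c_j)$ under $\rho$, each consecutive pair of which again satisfies the hypothesis of the separability condition and so obeys a version of the same identity, chaining together to force the $f$-values around the orbit to coincide.
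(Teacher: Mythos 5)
Your set-equality argument and your derivation of the key identity are sound, and up to that point you follow the same route as the paper. But the paper never meets your ``main obstacle'': its proof simply asserts $\pi_{i',j'}(\rho\circ e)=\pi_{i,j}(e)$ and concludes $f_{i',j'}(\pi_{i,j}(e))=f_{i,j}(\pi_{i,j}(e))$ in one line, with $\pi_{i,j}(e)$ ranging over all of $A_{i,j}$. The reason you see a misalignment and the paper does not is a variance clash in the paper's own definitions: the action on rankings is declared contravariantly, $\rho(r)=r\circ\rho$, while the action on weak orderings is declared covariantly, $\pi_{i',j'}(\rho(\succ))=\pi_{i,j}(\succ)$ for $\rho(c_i)=c_{i'}$. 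Reading the first literally, as you do, yields $\pi_{i',j'}(\rho\circ e)=\pi_{\rho(c_{i'}),\rho(c_{j'})}(e)$ and hence your shifted identity; reading it as $\rho(r)=r\circ\rho^{-1}$, which is what the paper's proof (and its definition on orderings) implicitly requires, yields $\pi_{i',j'}(\rho\circ e)=\pi_{i,j}(e)$, the arguments align at once, and no involution or orbit argument is needed. The missing idea is therefore not a chaining argument but the recognition that the two actions must be taken with the same variance for neutrality to say what it is meant to say.

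The fallback you propose for the non-involutive case does not repair this, and this is where the proposal genuinely fails. On the Condorcet cycle domain the separating permutation is forced to be a $3$-cycle, and your key identity becomes $f_{2,3}(\pi_{3,1}(e))=f_{1,2}(\pi_{1,2}(e))$. Here $\pi_{3,1}(e)$ and $\pi_{1,2}(e)$ are determined by the disjoint sets $e^{-1}(c_1c_2c_3)$ and $e^{-1}(c_2c_3c_1)$ respectively, which can be varied independently of one another: fixing $e^{-1}(c_1c_2c_3)=\emptyset$ pins the left-hand argument while the right-hand argument sweeps out all of $A_{1,2}$, forcing $f_{1,2}$ to be constant. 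Since the Borda rule is neutral and its relative SWFs are not constant, the shifted identity is simply false under the intended meaning of neutrality; no amount of iterating it around the orbit can recover the lemma, because the defect sits upstream in the mismatched conventions rather than in any combinatorics of the orbit. Once the convention is corrected, your argument collapses to the paper's one-line proof.
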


\begin{proof}
Since $A_{i,j} \cap A_{i',j'}$ is non-empty, the definition of separable domain gives the existence of $\rho \in \Sym(C)$ such that $\rho(E) = E$, $\rho(c_i)=c_{i'}$ and $\rho(c_j)=c_{j'}$. Moreover

\begin{equation}
\begin{aligned}
A_{i,j} &= \pi_{i,j}(E) \\
&= \pi_{i',j'}(\rho(E)) \\
&= \pi_{i',j'}(E) \\
&= A_{i',j'}
\end{aligned}
\end{equation}

as required. Now since $F$ fulfils condition N, for all $e$ we have $F(\rho \circ e) = \rho(F(e))$. Then

\begin{equation}
\begin{aligned}
f_{i',j'}(\pi_{i',j'}(\rho \circ e)) &= \pi_{i',j'}(F(\rho \circ e)) \\
&= \pi_{i',j'}(\rho(F(e)) \\
&= \pi_{i,j}(F(e)) \\
&= f_{i,j}(\pi_{i,j}(e))
\end{aligned}
\end{equation}

But $\pi_{i',j'}(\rho \circ e) = \pi_{i,j}(e)$, and by definition this value ranges over all of $A_{i,j}$, so $f_{i',j'}=f_{i,j}$ as required.
\end{proof}

Thus we can define a single $A = \cup_{i, j} A_{i,j}$ and a single $f: A \rightarrow \set{W,T,L}$ describing the relative SWF between any two candidates. This result now extends to the weight-based SWFs:

\begin{lem}\label{namiia}
Let $F$ be a SWF on a separable ballot domain satisfying MIIA, A and N. Then for any candidates $c_i \neq c_j$ and $c_{i'} \neq c_{j'}$, if $A'_{i,j} \cap A'_{i',j'}$ is non-empty then $A'_{i,j} = A'_{i',j'}$ and $f'_{i,j} = f'_{i',j'}$.
\end{lem}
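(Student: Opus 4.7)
The plan is to reduce Lemma \ref{namiia} directly to Lemma \ref{nmiia}, by showing that non-emptiness of $A'_{i,j} \cap A'_{i',j'}$ already forces non-emptiness of $A_{i,j} \cap A_{i',j'}$, after which Lemma \ref{nmiia}'s conclusions can be lifted through the quotient map $\tau_{i,j}$ using Lemma \ref{amiia}.

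To carry out the first step, I would fix $\vec{\alpha} \in A'_{i,j} \cap A'_{i',j'}$ and examine its support. Since $A'_{i,j} = \tau_{i,j}(A_{i,j})$ and every $a \in A_{i,j}$ takes values in $B_{i,j} = \pi_{i,j}(\mathfrak{B})$, we must have $\vec{\alpha}_s = 0$ whenever $s \notin B_{i,j}$; symmetrically $\vec{\alpha}_s = 0$ for $s \notin B_{i',j'}$. The coordinates of $\vec{\alpha}$ sum to $\nu(V) = 1$ (or $n$ in the finite case), so some $s \in B_{i,j} \cap B_{i',j'}$ exists. Picking $r_1 \in \mathfrak{B}$ with $\pi_{i,j}(r_1) = s$ and $r_2 \in \mathfrak{B}$ with $\pi_{i',j'}(r_2) = s$, the constant elections $e_1(v) = r_1$ and $e_2(v) = r_2$ (which are automatically measurable) show that the constant function $a \equiv s$ lies in both $A_{i,j}$ and $A_{i',j'}$. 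Hence $A_{i,j} \cap A_{i',j'} \neq \emptyset$, and Lemma \ref{nmiia} yields $A_{i,j} = A_{i',j'}$ and $f_{i,j} = f_{i',j'}$.

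To transport these equalities through the anonymisation step, I would note that the map $\tau_{i,j}$, viewed as a function on the ambient space $D_k^V$, depends only on that space and not on the labels $(c_i, c_j)$; so $\tau_{i,j}$ and $\tau_{i',j'}$ agree on the now-common raw domain $A_{i,j} = A_{i',j'}$. Applying them to that set gives $A'_{i,j} = A'_{i',j'}$ immediately. For the function equality, Lemma \ref{amiia} gives
\begin{equation}
f'_{i,j} \circ \tau_{i,j} = f_{i,j} = f_{i',j'} = f'_{i',j'} \circ \tau_{i',j'} = f'_{i',j'} \circ \tau_{i,j},
\end{equation}
and the surjectivity of $\tau_{i,j}$ onto $A'_{i,j} = A'_{i',j'}$ then forces $f'_{i,j} = f'_{i',j'}$.

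The only genuinely substantive step in this argument is the support computation forcing $B_{i,j} \cap B_{i',j'} \neq \emptyset$ and the attendant construction of a constant election witnessing the raw-level intersection; once that intersection is in hand, the rest is a formal chase through the already-established Lemmas \ref{amiia} and \ref{nmiia}.
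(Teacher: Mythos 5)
Your proof is correct and follows the same overall route as the paper's: exhibit a common element of $A_{i,j}\cap A_{i',j'}$, invoke Lemma \ref{nmiia}, and push the conclusions through $\tau_{i,j}$ (which, as you note, is the same map on $D_k^V$ regardless of the pair of candidates, so the final lifting step is sound). The only divergence is in the witness: the paper partitions $V$ into sets $V_s$ of measure $\alpha_s$ and constructs a common relative election realising the full distribution $\vec{\alpha}$, whereas you observe that a single shared gap value $s\in B_{i,j}\cap B_{i',j'}$ (guaranteed since the support of $\vec{\alpha}$ is non-empty and lies in both relative ballots) together with a constant election already suffices --- a genuine simplification, since non-emptiness of the raw intersection is all that Lemma \ref{nmiia} needs.
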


\begin{proof}
We pick $\vec{\alpha} \in A'_{i,j} \cap A'_{k,\ell}$. We can partition $V$ into measurable sets $V_s$ for all $s \in \supp(\vec{\alpha})$ with $\nu(V_s) = \alpha_s$. Now we select $e \in E$ with $\tau_{i,j}(\pi_{i,j}(e))=\vec{\alpha}$, and $e' \in E$ with $\tau_{k, \ell}(\pi_{k, \ell}(e')) = \vec{\alpha}$, and we let $\vec{\varepsilon} = \tau(e)$ and $\vec{\varepsilon}' = \tau(e')$. Now for each $s \in \supp(\vec{\alpha})$ we partition $V_s$ into $V_r$ for each $r \in \supp(\vec{\varepsilon}) \cap \pi_{i,j}^{-1}(s)$ with $\nu(V_r) = \varepsilon_r$. We define $a \in D_k^V$ by $a(v) = s$ for $v \in V_s$. We also partition $V_s$ into $V'_r$ for each $r \in \supp(\varepsilon')$ with $\nu(V'_r) = \varepsilon'_r$. Then we define $\bar{e}, \bar{e}' \in E$ by $\bar{e}(v) = r$ for $v \in V_r$ and $\bar{e}'(v)=r$ for $v \in V'_r$; these are both in $E$ as it is a ballot domain.

Now $\pi_{i,j}(\bar{e}) = a$ and $\pi_{k, \ell}(\bar{e}') = a$ so $A_{i,j}$ and $A_{k, \ell}$ intersect. Since the domain is separable, we get a permutation $\rho \in \Sym(C)$ with $\rho(c_i) = c_k$, $\rho(c_j) = c_{\ell}$, and $\rho(E)=E$. Applying Lemma \ref{nmiia}, we have $A_{i,j} = A_{k, \ell}$, so $A'_{i,j} = \tau(A_{i,j}) = \tau(A_{k, \ell}) = A'_{k, \ell}$ as required. Moreover, $f_{i,j} = f_{k,\ell}$, so for any choice of $a \in \tau_{i,j}^{-1}(\vec{\alpha})$ we have $f'_{i,j}(\vec{\alpha}) = f_{i,j}(a) = f_{k, \ell}(a) = f'_{k,\ell}(\vec{\alpha})$, so $f'_{i,j} = f'_{k,\ell}$ as required.
\end{proof}

Thus we get a set $A' = \cup_{i,j}A'_{i,j}$, a measuring function $\tau_A: A \rightarrow A'$ and a relative, measure-based SWF $f': A' \rightarrow \set{W,T,L}$ with $f = f' \circ \tau_A$; as always, $f$ uniquely determines $f'$ just as $f'$ uniquely determines $f$.

As in Lemmas \ref{intermediateballots} and \ref{increasingballots}, we now confirm that U and CC are separable domains.

\begin{lem}\label{separableballots}
For either $\mathfrak{B} = \mathfrak{R}$ or $\mathfrak{B} = \mathfrak{C}$, the ballot domain with ballot $\mathfrak{B}$ is a separable domain.
\end{lem}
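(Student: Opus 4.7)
The plan is to handle the two ballot choices separately. For $\mathfrak{B} = \mathfrak{R}$ the condition is immediate because the full symmetric group $\Sym(C)$ acts on $\mathfrak{R}$; for $\mathfrak{B} = \mathfrak{C}$ only cyclic permutations preserve $\mathfrak{C}$, so the main work is to argue that the intersection hypothesis $A_{i,j} \cap A_{i',j'} \neq \emptyset$ forces $(c_i, c_j)$ and $(c_{i'}, c_{j'})$ to share a common cyclic orientation, so that some cyclic permutation sends the first pair to the second.

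For $\mathfrak{B} = \mathfrak{R}$: I would observe that for any $\rho \in \Sym(C)$, the map $r \mapsto r \circ \rho$ is a bijection of $\mathfrak{R}$ onto itself, so $\rho(\mathfrak{R}) = \mathfrak{R}$ and hence $\rho(E) = E$ for the unrestricted ballot domain. Since $\Sym(C)$ acts transitively on ordered pairs of distinct candidates (for $|C| \geq 2$), given any such pairs $(c_i, c_j)$ and $(c_{i'}, c_{j'})$ I can pick $\rho$ with $\rho(c_i) = c_{i'}$ and $\rho(c_j) = c_{j'}$. So separability holds here without even invoking the intersection hypothesis.

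For $\mathfrak{B} = \mathfrak{C}$: first I compute $\pi_{i,j}(\mathfrak{C})$ directly for each ordered pair, obtaining $\set{1, -2}$ when $j \equiv i+1 \pmod 3$ and $\set{-1, 2}$ when $j \equiv i-1 \pmod 3$. Because $E$ is a ballot domain, $A_{i,j}$ is exactly the set of measurable functions $V \to \pi_{i,j}(\mathfrak{C})$: given any such $a$, choosing $r_s \in \mathfrak{C} \cap \pi_{i,j}^{-1}(s)$ for each $s$ and setting $e(v) = r_{a(v)}$ yields $e \in E$ with $\pi_{i,j}(e) = a$. Since $\set{1, -2}$ and $\set{-1, 2}$ are disjoint, $A_{i,j} \cap A_{i',j'} \neq \emptyset$ forces $\pi_{i,j}(\mathfrak{C}) = \pi_{i',j'}(\mathfrak{C})$, and hence $j - i \equiv j' - i' \pmod 3$. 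The cyclic permutation $\rho(c_k) = c_{k + (i'-i)}$ (indices mod $3$) then sends $c_i \to c_{i'}$ and $c_j \to c_{j'}$, and a short direct check confirms that $\rho$ permutes the three elements of $\mathfrak{C}$ among themselves, so $\rho(E) = E$. The main point of care is tracking the two cyclic orientations: an odd permutation could map $(c_i, c_j)$ to $(c_{i'}, c_{j'})$ but would swap orientation and fail to preserve $\mathfrak{C}$, which is exactly what the intersection hypothesis rules out.
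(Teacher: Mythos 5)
Your proof is correct and follows essentially the same route as the paper's: for $\mathfrak{R}$ you use that every permutation preserves the unrestricted ballot and that $\Sym(C)$ is transitive on ordered pairs, and for $\mathfrak{C}$ you use the disjointness of $\pi_{i,j}(\mathfrak{C}) = \set{1,-2}$ and $\set{-1,2}$ to show the intersection hypothesis forces a common cyclic orientation, then take the cyclic shift. The extra remark that $A_{i,j}$ consists exactly of the measurable functions into $\pi_{i,j}(\mathfrak{C})$ is a harmless elaboration of the same disjointness argument the paper gives.
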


\begin{proof}
For $\mathfrak{B} = \mathfrak{R}$, any permutation $\rho \in \Sym(C)$ gives $\rho(E)=E$, so we need only construct any $\rho$ with $\rho(c_i)=c_k$ and $\rho(c_j) = c_{\ell}$. Since we know $c_i \neq c_j$ and $c_k \neq c_{\ell}$ the existence of $\rho$ is immediate.

For $\mathfrak{B} = \mathfrak{C}$, if $j=i+1$ then $\pi_{i,j}(\mathfrak{B}) = \set{1,-2}$, whereas if $j=i-1$ then $\pi_{i,j}(\mathfrak{B}) = \set{2,-1}$. These are disjoint, so the domains $A_{i,i+1}$ and $A_{i,i-1}$ are disjoint. On the other hand, if $i-j = k-\ell$ then we take $\rho \in \Sym(C)$ defined by $\rho(c_t) = c_{t+j-i}$; then trivially $\rho(\mathfrak{B}) = \mathfrak{B}$, so $\rho(E) = E$, and $\rho(c_i) = c_j$, $\rho(c_k) = c_{k+j-i} = c_{\ell}$ as required.
\end{proof}

This means that given either condition U or CC, we can assume all results applying to elections on separable, increasing, intermediate ballot domains.

Intuitively it should be the case that receiving votes is good for a candidate. Two conditions are generally used to force this; the Pareto condition and the positive responsiveness condition.

\begin{defn}[Pareto - P]
A SWF satisfies the Pareto condition if, whenever all voters prefer candidate $c_i$ to $c_j$, the same must be the case in the final result. Formally, for $e$ with $\pi_{i,j}(e)(v) > 0$ for all $v$, we have $\pi_{i,j}(F(e))=W$.
\end{defn}

\begin{defn}[Positive Responsiveness - PR]

A SWF satisfies positive responsiveness if whenever $(c_j, c_i)$ is promoted by $(e_2, e_1)$, we have $\pi_{i,j}(F(e_1)) \geq \pi_{i,j}(F(e_2))$.

\end{defn}

The idea of ``promotion'' is difficult to work with. Since we will only be studying ballot elections with increasing ballots and satisfying MIIA, we can check for the PR condition using a simpler property based on the existence of the $f_{i,j}$ functions which MIIA gives. Lemma \ref{prmiia} establishes that for an increasing ballot election satisfying MIIA, this condition is equivalent to PR.

\begin{defn}[Positive Responsiveness given MIIA - PRm]
A SWF satisfying MIIA also satisfies the PR condition if for any two candidates $c_i$ and $c_j$, and for any two relative elections $a_1, a_2 \in A_{i,j}$ where $a_1 \geq a_2$, we have $f_{i,j}(a_1) \geq f_{i,j}(a_2)$.
\end{defn}

\begin{lem}\label{prmiia}
An increasing ballot election with MIIA satisfies PRm if and only if it satisfies PR.
\end{lem}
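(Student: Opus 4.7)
The plan is to prove the two directions separately. The easier direction, PRm $\Rightarrow$ PR, reduces to observing that if $(c_j, c_i)$ is promoted by $(e_2, e_1)$ then $\pi_{i,j}(e_1) \geq \pi_{i,j}(e_2)$ pointwise: applying conditions (i) and (ii) of Definition \ref{def:promoting} at each voter $v$, no fewer candidates lie above $c_i$ in $e_2(v)$ than in $e_1(v)$, so $e_2(v)(c_i) \leq e_1(v)(c_i)$, and symmetrically $e_2(v)(c_j) \geq e_1(v)(c_j)$, so the gap $r(c_i) - r(c_j)$ can only decrease from $e_1$ to $e_2$. PRm then applies directly, and composing with MIIA via $\pi_{i,j}(F(e)) = f_{i,j}(\pi_{i,j}(e))$ recovers PR.

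For the converse, the key idea is that the increasing ballot property lets us realise the pointwise inequality $a_1 \geq a_2$ as a genuine promotion between two honest elections. Given $a_1, a_2 \in A_{i,j}$ with $a_1 \geq a_2$, I would pick any $e_1 \in E$ with $\pi_{i,j}(e_1) = a_1$, which exists since $a_1 \in A_{i,j}$. Then for each voter $v$ I construct $e_2(v)$ via the increasing ballot property applied to $m = a_1(v) = \pi_{i,j}(e_1(v))$, $m' = a_2(v)$, and $r = e_1(v)$: this yields some $r' \in \mathfrak{B}$ with $\pi_{i,j}(r') = a_2(v)$ and $(c_j, c_i)$ promoted by $(r', e_1(v))$ (taking $r' = e_1(v)$ in the trivial case $a_1(v) = a_2(v)$, in which all four clauses of promotion hold as tautologies).

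The main obstacle, and really the only substantive one, is ensuring the resulting $e_2$ lies in $E$, i.e., is measurable, rather than being an arbitrary function on $V$. I would address this by pre-committing to choices: fix, for each pair $(r_1, m) \in \mathfrak{B} \times B_{i,j}$ with $m \leq \pi_{i,j}(r_1)$, a single witness $r_{r_1, m}$ as above, and then set $e_2(v) = r_{e_1(v), a_2(v)}$. For each $r \in \mathfrak{B}$ the preimage $e_2^{-1}(r)$ is then a finite union of sets of the form $e_1^{-1}(r_1) \cap a_2^{-1}(m)$, each measurable: $e_1$ is measurable by hypothesis, and $a_2$ is measurable because it arises as $\pi_{i,j}$ of some measurable election, with the finite-range map $\pi_{i,j}: \mathfrak{R} \to D_k$ always measurable against the discrete $\sigma$-algebra on $D_k$. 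Hence $e_2 \in E$, $\pi_{i,j}(e_2) = a_2$ by construction, and $(c_j, c_i)$ is promoted by $(e_2, e_1)$. Applying PR and translating through $f_{i,j}$ gives $f_{i,j}(a_1) \geq f_{i,j}(a_2)$, establishing PRm.
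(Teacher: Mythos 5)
Your proposal is correct and follows essentially the same strategy as the paper's proof: the forward direction extracts the pointwise inequality $a_1 \geq a_2$ from clauses (i) and (ii) of the promotion definition, and the converse realises $a_1 \geq a_2$ as a genuine promotion by pre-committing to finitely many witness rankings from the increasing-ballot property so that the constructed election is measurable. The only (immaterial) difference is that you anchor to an existing $e_1$ with $\pi_{i,j}(e_1)=a_1$ and build just $e_2$ from the pairs $(e_1(v),a_2(v))$, whereas the paper builds both elections afresh from the pairs $(a_1(v),a_2(v))$; your explicit handling of the case $a_1(v)=a_2(v)$ is a small point the paper glosses over.
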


\begin{proof}
Let $F$ be an increasing ballot election satisfying the MIIA and PRm conditions and let $f_{i,j}$ be its relative SWFs. Suppose that $(c_j, c_i)$ is promoted by $(e_2, e_1)$. Then for any $v$ and for any $c_k \neq c_i, c_j$, letting $r_1 = e_1(v)$ and $r_2 = e_2(v)$, we have:
\begin{enumerate}[(i)]
\item $r_1(c_k) > r_1(c_i) \Rightarrow r_2(c_k) > r_2(c_i)$, so $r_1(c_i) \geq r_2(c_i)$
\item $r_1(c_j) > r_1(c_k) \Rightarrow r_2(c_j) > r_2(c_k)$ so $r_1(c_j) \leq r_2(c_j)$
\end{enumerate}
Thus $r_1(c_i) - r_1(c_j) \geq r_2(c_i)-r_2(c_j)$. Letting $a_1 = \pi_{i,j}(e_1)$ and $a_2 = \pi_{i,j}(e_2)$, we have $a_1(v) \geq a_2(v)$. But this applied to all $v$, so $a_1 \geq a_2$. Then by the PRm condition, $\pi_{i,j}(F(e_1)) = f_{i,j}(a_1) \geq f_{i,j}(a_2) = \pi_{i,j}(F(e_2))$, so the PR condition holds.

Now suppose that $F$ satisfies PR, and take two candidates $c_i$ and $c_j$ and two relative elections in $A_{i,j}$ with $a_1 \geq a_2$. Let $B_{i,j} = \pi_{i,j}(\mathfrak{B})$. Define the function $p: V \rightarrow B_{i,j}^2$ by $p(v) = (a_1(v),a_2(v))$. Note that $p$ is measurable, as the cartesian product measure of $B_{i,j} \times B_{i,j}$ is just the discrete measure (because $B_{i,j}$ is finite). For any pair $(b_1,b_2)$ in the image of $p$, note that $b_1 \geq b_2$, and choose any $r(b_1,b_2) \in \pi_{i,j}^{-1}(b_1)$.

Then, since $\mathfrak{B}$ is an increasing ballot, let $r'(b_1,b_2) \in \pi_{i,j}^{-1}(b_2)$ be such that $(c_j, c_i)$ is promoted by $(r'(b_1,b_2),r(b_1,b_2))$. Then the functions $r$ and $r'$ are both functions with finite domains, and therefore measurable. Define $e$ by $e(v) = r(a_1(v),a_2(v))$ and $e'$ by $e'(v) = r'(a_1(v),a_2(v))$; these are both measurable functions, so they are in $E$ as it is a ballot election. Then $(c_j,c_i)$ is promoted by $(e',e)$, so by PR we have $\pi_{i,j}(a_1) = \pi_{i,j}(F(e_1)) \geq \pi_{i,j}(F(e_2)) = \pi_{i,j}(a_2)$, giving condition PRm.
\end{proof}

Note that the Pareto condition is often much weaker than the Positive Responsiveness condition, as it says nothing except in the very extreme case of unanimity among voters when comparing two candidates. On the other hand, the Positive Responsiveness condition does not necessarily imply the Pareto principle, as PR doesn't rule out the possibility of $F$ always returning the same ranking. Even imposing the neutrality condition, $F$ could always return a $k$-way tie, whereas the Pareto principle precludes this.

The following condition is sometimes included in the PR condition but we will consider it separately.

\begin{defn}[Intermediate Value - IV]

If for two elections $e_1$ and $e_2$ and two candidates $c_i$ and $c_j$, $(c_j, c_i)$ is promoted by $(e_2, e_1)$, and we have $\pi_{i,j}(F(e_1)) = W$ whereas $\pi_{i,j}(F(e_2))=L$, then there exists an ``intermediate'' election $e'$ such that $(c_j, c_i)$ is promoted by both $(e', e_1)$ and $(e_2, e')$, and $\pi_{i,j}(F(e'))=T$.
\end{defn}

The IV condition essentially says that we can continuously improve a candidate's position relative to another candidate until their positions cross in the final ranking, and at that moment they will tie. We would only expect this to be the case in a world with a continuum of possible ballots; in other words, when $|V| = \infty$. We will only consider the IV condition in these cases.

Again, we can verify the IV condition for SWFs on increasing, intermediate domains with MIIA using the simpler IVm condition:

\begin{defn}[Intermediate Value given MIIA - IVm]
A SWF satisfying MIIA also satisfies the IV condition if for any two candidates $c_i$ and $c_j$, and for any $a_1 \geq a_2 \in A_{i,j}$ such that $f_{i,j}(a_1) = W$ and $f_{i,j}(a_2)=L$, there exists $a' \in A_{i,j}$ with $a_1 \geq a' \geq a_2$ and $f_{i,j}(a')=T$.
\end{defn}

\begin{lem}\label{ivmiia}
An increasing, intermediate ballot election with MIIA satisfies IV if and only if it satisfies IVm.
\end{lem}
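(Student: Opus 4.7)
The plan is to prove both directions using the same pattern as Lemma \ref{prmiia}, with the order inequality replaced by the existence of an intermediate element.

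For the forward direction, I suppose $F$ satisfies IV and take $a_1 \geq a_2$ in $A_{i,j}$ with $f_{i,j}(a_1) = W$ and $f_{i,j}(a_2) = L$. I would reuse the increasing-ballot construction from the second half of the proof of Lemma \ref{prmiia}: for each pair $(b_1, b_2) \in B_{i,j}^2$ with $b_1 \geq b_2$, the increasing property gives representative rankings $r(b_1, b_2), r'(b_1, b_2) \in \mathfrak{B}$ with $\pi_{i,j}(r) = b_1$, $\pi_{i,j}(r') = b_2$, and $(c_j, c_i)$ promoted by $(r', r)$. Setting $e_1(v) = r(a_1(v), a_2(v))$ and $e_2(v) = r'(a_1(v), a_2(v))$ produces measurable elections in $E$ with $\pi_{i,j}(e_k) = a_k$ and $(c_j, c_i)$ promoted by $(e_2, e_1)$. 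By MIIA, $\pi_{i,j}(F(e_1)) = W$ and $\pi_{i,j}(F(e_2)) = L$, so IV furnishes an intermediate $e'$ with $\pi_{i,j}(F(e')) = T$ and $(c_j, c_i)$ promoted by both $(e', e_1)$ and $(e_2, e')$. Setting $a' = \pi_{i,j}(e')$, the pointwise promotion inequalities force $a_1 \geq a' \geq a_2$, and MIIA gives $f_{i,j}(a') = T$, so IVm holds.

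For the backward direction, I suppose $F$ satisfies IVm and take elections $e_1, e_2$ with $(c_j, c_i)$ promoted by $(e_2, e_1)$, $\pi_{i,j}(F(e_1)) = W$ and $\pi_{i,j}(F(e_2)) = L$. Setting $a_k = \pi_{i,j}(e_k)$, MIIA gives $f_{i,j}(a_1) = W$ and $f_{i,j}(a_2) = L$, and the pointwise promotion forces $a_1 \geq a_2$, so IVm produces $a' \in A_{i,j}$ with $a_1 \geq a' \geq a_2$ and $f_{i,j}(a') = T$. I then need to lift $a'$ to a measurable election $e'$ that realises the promotion relationship with $e_1$ and $e_2$. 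For each quintuple $(b_1, b, b_2, r_1, r_2) \in B_{i,j}^3 \times \mathfrak{B}^2$ with $b_1 \geq b \geq b_2$, $\pi_{i,j}(r_k) = b_k$, and $(c_j, c_i)$ promoted by $(r_2, r_1)$, the intermediate-ballot property (handling the edge cases $b = b_1$ or $b = b_2$ by taking $r_1$ or $r_2$ respectively) supplies some $r \in \mathfrak{B}$ with $\pi_{i,j}(r) = b$ and $(c_j, c_i)$ promoted by both $(r_2, r)$ and $(r, r_1)$. Fix one such choice and call it $\phi(b_1, b, b_2, r_1, r_2)$, and define $e'(v) = \phi(a_1(v), a'(v), a_2(v), e_1(v), e_2(v))$. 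Since $\phi$ has finite domain and the arguments are all measurable, $e'$ lies in $E$; by construction $\pi_{i,j}(e') = a'$ and the required promotions hold pointwise, hence also at the level of elections. Then $\pi_{i,j}(F(e')) = f_{i,j}(a') = T$, which is IV.

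The main obstacle is the backward direction: the per-voter existence of a suitable intermediate ranking is immediate from the intermediate-ballot hypothesis, but this voter-wise choice must be upgraded to a measurable election. The resolution, as in Lemma \ref{prmiia}, is that the intermediate ranking at each $v$ depends only on the five finite-valued quantities $a_1(v), a'(v), a_2(v), e_1(v), e_2(v)$, so a single finite choice function $\phi$ defined once on the finite set of admissible quintuples suffices and measurability of $e'$ follows automatically.
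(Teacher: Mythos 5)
Your proposal is correct and follows essentially the same route as the paper: both directions lift relative elections to actual elections via a choice function on a finite index set (the paper indexes the intermediate choice by triples $(r_+, m, r_-)$ rather than your quintuples, but since $b_1$ and $b_2$ are determined by $r_1$ and $r_2$ this is the same construction), with the same edge-case handling and the same finite-domain measurability argument. No gaps.
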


\begin{proof}
Let $F$ be a SWF satisfying the MIIA and IV conditions, let $f_{i,j}$ be its relative SWFs and let $a_1 \geq a_2$ be two vectors in $A_{i,j}$ for some $c_i$ and $c_j$, with $f_{i,j}(a_1)=W$ and $f_{i,j}(a_2)=L$. Let $B_{i,j} = \pi_{i,j}(\mathfrak{B})$. For $(b_1, b_2) \in B_{i,j}^2$ pick $r_1 \in \pi_{i,j}^{-1}(b_1)$, and use the definition of an increasing ballot to pick $r_2 \in \pi_{i,j}^{-1}(b_2)$ such that $(c_j,c_i)$ is promoted by $(r_2,r_1)$. Now both $r_1$ and $r_2$ are given by functions on $B_{i,j}^2$, and since these are functions on finite domains, they are measurable.

Now define $e_1$ by $e_1(v) = r_1(a_1(v),a_2(v))$ and $e_2$ by $e_2(v) = r_2(a_1(v),a_2(v))$. Since $F$ has a ballot domain, $e_1, e_2 \in E$. Moreover, $(c_j,c_i)$ is promoted by $(e_2, e_1)$, $\pi_{i,j}(e_1)=a_1$ and $\pi_{i,j}(e_2)=a_2$. Thus $\pi_{i,j}(F(e_1))=W$ and $\pi_{i,j}(F(e_2))=L$. Now applying the IV condition, we find that there exists $e'$ such that $(c_j,c_i)$ is promoted by both $(e_2, e')$ and $(e',e_1)$, and $\pi_{i,j}(F(e'))=T$. Setting $a' = \pi_{i,j}(e')$, we have $f_{i,j}(a')=T$, and $a_1 \geq a' \geq a_2$, giving condition IVm.

Now let $F$ be a SWF on an intermediate ballot domain satisfying the MIIA and IVm conditions and let $f_{i,j}$ be its relative SWFs. Suppose that $(c_j, c_i)$ is promoted by $(e_2,e_1)$ and that $\pi_{i,j}(F(e_1)) = W$ and $\pi_{i,j}(F(e_2)) = L$. Now let $\pi_{i,j}(e_1)=a_1$ and $\pi_{i,j}(e_2)=a_2$. Since $(c_j, c_i)$ is promoted by $(e_2,e_1)$, we know that $a_1 \geq a_2$. Now we have $f_{i,j}(a_1) = W$ and $f_{i,j}(a_2) = L$. Thus by IVm there exists $a' \in A_{i,j}$ with $a_1 \geq a' \geq a_2$ and $f_{i,j}(a')=T$.

Meanwhile, for each triple $(r_+, m, r_-) \in \mathfrak{B} \times B_{i,j} \times \mathfrak{B}$ with $(c_j,c_i)$ promoted by $(r_-,r_+)$ we carry out the following procedure: if $m = \pi_{i,j}(r_+)$ we set $r' = r_+$, and if $m=\pi_{i,j}(r_-)$ we set $r' = r_-$; otherwise, we apply the definition of the intermediate ballot to find $r'$ such that $(c_j,c_i)$ is promoted by both $(r_-,r')$ and $(r',r_+)$, and where $\pi_{i,j}(r')=m$. This gives a function on finite domain $r': \mathfrak{B} \times B_{i,j} \times \mathfrak{B} \rightarrow \mathfrak{B}$.

Now for each voter $v$, we have $r_+ = e_1(v)$ and $r_- = e_2(v)$ with $(c_j, c_i)$ promoted by $(r_+, r_-)$; and

\begin{equation}
\pi_{i,j}(r_+) = a_1(v) \geq a'(v) \geq a_2(v) = \pi_{i,j}(r_-)
\end{equation}

so we can define $e'(v) = r'(e_1(v), a'(v), e_2(v))$. This is a measurable function, so $e' \in E$ (as $E$ is a ballot election). Clearly $(c_j,c_i)$ is promoted by both $(e_2,e')$ and $(e',e_1)$, and $\pi_{i,j}(e')=a'$. Also, $\pi_{i,j}(F(e')) = f_{i,j}(a') = T$, so we have condition IV.
\end{proof}

These results taken together allow us to simplify our analysis of SWFs that are sufficiently well-behaved. On the other hand, it is often possible to start with relative SWFs and check that they corresponds to a full SWF. To prepare for this construction, we define the following ``consistent'' sets of values we can get for the relative rankings of three candidate:

\begin{defn}[Consistent multisets]
A multiset of cardinality three, with elements drawn from $\set{W,T,L}$, is consistent if it is one of the following:
\begin{enumerate}
\item $\set{W,W,L}$
\item $\set{W,T,L}$
\item $\set{W,L,L}$
\item $\set{T,T,T}$
\end{enumerate}

Otherwise the multiset is ``inconsistent''.
\end{defn}

If we take a weak ordering $\succ$ on a set of three candidates $c_i, c_j, c_k$, then the consistent multisets are the ones that could arise from finding $\pi_{i,j}(\succ)$, $\pi_{j,k}(\succ)$ and $\pi_{k,i}(\succ)$, as we will verify in the proof of the following Lemma.

\begin{lem}\label{relativetofull}
Given a family of relative SWFs $f_{i,j}$ for each $c_i \neq c_j \in C$, there exists an SWF $F$ with $\pi_{i,j} \circ F = f_{i,j} \circ \pi_{i,j}$ for all $i \neq j$ if and only if for all $e \in E$:
\begin{enumerate}
\item \label{relativetofull:twoway} for all $c_i \neq c_j$ we have $f_{i,j}(\pi_{i,j}(e)) = -f_{j,i}(\pi_{j,i}(e))$, and
\item \label{relativetofull:threeway} for all $c_i \neq c_j \neq c_k$, the multiset $\set{f_{i,j}(\pi_{i,j}(e)), f_{j,k}(\pi_{j,k}(e)), f_{k,i}(\pi_{k,i}(e))}$ is consistent.
\end{enumerate}
\end{lem}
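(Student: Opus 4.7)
The plan is to handle the two directions separately, with the forward direction following by inspection and the backward direction requiring a pointwise construction of $F$ together with a small case analysis to verify well-definedness.

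For the forward direction, suppose such an $F$ exists, fix $e \in E$, and write $\succ = F(e)$. Condition 1 is immediate from the sign conventions $W = -L$, $T = -T$: the expressions $\pi_{i,j}(\succ)$ and $\pi_{j,i}(\succ)$ record the same comparison with opposite orientation. For condition 2, one enumerates the possible shapes of a weak ordering restricted to three elements $c_i, c_j, c_k$: a three-way tie gives $\set{T,T,T}$; a tied pair above a singleton, or a singleton above a tied pair, gives $\set{W,T,L}$; and a strict linear order on the three gives $\set{W,W,L}$ or $\set{W,L,L}$, depending on how the cyclic pairing $(i,j),(j,k),(k,i)$ meshes with the order. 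These are exactly the four consistent multisets.

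For the backward direction, we define $\succ_e$ on $C$ pointwise for each $e \in E$ by declaring $c_i \succ_e c_j$ if $f_{i,j}(\pi_{i,j}(e)) = W$ and $c_i \sim_e c_j$ if $f_{i,j}(\pi_{i,j}(e)) = T$, with $\sim_e$ taken reflexive. Condition 1 immediately gives asymmetry of $\succ_e$ and symmetry of $\sim_e$. The remaining content is transitivity, which is extracted from condition 2 by noting that specifying any two of the three values in $\set{f_{i,j}(\pi_{i,j}(e)),\, f_{j,k}(\pi_{j,k}(e)),\, f_{k,i}(\pi_{k,i}(e))}$ to be $W$ or $T$ forces the third: the multisets $\set{W,W,X}$, $\set{W,T,X}$ and $\set{T,T,X}$ are consistent only for $X = L$, $X = L$ and $X = T$ respectively. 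This yields transitivity of $\sim_e$, of $\succ_e$, and of the mixed relation ``$\succ_e$ or $\sim_e$'', once condition 1 is used to swap $f_{k,i}$ for $f_{i,k}$. Setting $F(e) = \succ_e$ then produces the required SWF, and the identity $\pi_{i,j}(F(e)) = f_{i,j}(\pi_{i,j}(e))$ holds by construction.

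The bulk of the work sits in the transitivity case analysis, but this is mechanical once one observes that the four consistent multisets are precisely those realised by the four shapes of weak orderings on a three-element set. No genuine obstacle is anticipated, and measurability plays no role since $F$ is defined on each $e$ without any choices.
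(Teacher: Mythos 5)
Your proof is correct, and the forward direction is essentially identical to the paper's (enumerate the shapes of a weak ordering on three elements and observe that they realise exactly the four consistent multisets). The backward direction, however, is organised differently. The paper builds the result graph-theoretically: it forms an undirected ``tie graph'' on $C$, uses consistency to show its components are cliques, then builds a tournament on the cliques, checks that the tournament edges are independent of the choice of representatives, and finally rules out cycles by shrinking any cycle down to a $3$-cycle, which would force the inconsistent multiset $\set{W,W,W}$. You instead define the relation $\succ_e$ pointwise and verify the weak-order axioms directly, using the observation that fixing two entries of a consistent multiset to values in $\set{W,T}$ forces the third ($\set{W,W,X}$, $\set{W,T,X}$, $\set{T,T,X}$ force $X=L$, $L$, $T$ respectively); this yields transitivity of $\sim_e$, of $\succ_e$, and of the mixed relation in one stroke, and acyclicity then comes for free since a transitive, asymmetric, total relation on a finite set is a strict total order --- so you never need the paper's cycle-reduction step. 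Both arguments ultimately rest on the same three-candidate consistency checks; yours is the more economical write-up, while the paper's makes the quotient structure (tie-classes and the induced tournament) explicit. The only points worth making explicit in your version are the degenerate instances of transitivity where the outer candidates coincide (e.g.\ $c_i \succ_e c_j$ and $c_j \succ_e c_i$ is already excluded by condition 1, and $c_i \sim_e c_j \sim_e c_i$ reduces to reflexivity), and the well-definedness of the induced order on tie-classes, which follows from your mixed transitivity; neither is a gap, just a line each to add.
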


\begin{proof}
Suppose that $F$ does exist with $\pi_{i,j} \circ F = f \circ \pi_{i,j}$ for all $c_i \neq c_j$. Take any $e \in E$ and $c_i \neq c_j$. Now let $F(e) = \succ$. If $c_i \succ c_j$, then $f(\pi_{i,j}(e)) = \pi_{i,j}(F(e))=W$, while $f(\pi_{j,i}(e)) = \pi_{j,i}(F(e)) = L$, satisfying condition \ref{relativetofull:twoway}; but the same goes if $c_j \succ c_i$, or if $c_i \nsucc c_j$ and $c_j \nsucc c_i$.

Similarly, taking $c_i \neq c_j \neq c_k$, and let $F(e) = \succ$. Now we consider two cases. If all three candidates are tied, then the multiset described in condition \ref{relativetofull:threeway} is $\set{T,T,T}$, which is consistent. Otherwise, there must be one maximal or minimal candidate of the three; assume without loss of generality that this is $c_i$. Then one of $\pi_{i,j}(\succ)$ and $\pi_{k,i}$ is $W$ and the other is $L$; so the multiset is one of $\set{W,W,L}$, $\set{W,T,L}$ and $\set{W,L,L}$; these are all consistent.

On the other hand, suppose that conditions \ref{relativetofull:twoway} and \ref{relativetofull:threeway} are met. Then we define $\succ = F(e)$ as follows. We construct a graph  $G'$ with vertices corresponding to $C$ the set of candidates, and for each pair $c_i \neq c_j$ we add an edge between $c_i$ and $c_j$ when $f_{i,j}(\pi_{i,j}(e)) = T$. Note that these edges can be undirected, as condition \ref{relativetofull:twoway} ensures that if $f_{i,j}(\pi_{i,j}(e))=T$ then $f_{j,i}(\pi_{j,i}(e)) = -T = T$. Now if in this graph $c_i \sim c_j$ and $c_j \sim c_k$, then by condition \ref{relativetofull:threeway} we must have $c_i \sim c_k$, because the only consistent multiset containing $T$ twice is $\set{T,T,T}$; and therefore the graph is a disjoint collection of cliques. We call these sets $C_1, C_2 \dots C_t$ for some $t$. 

Now we construct a new directed graph $G$ on vertices $V = \set{C_i}_i$, and we add an edge from $C_i$ to $C_j$ if, for some $c_i \in C_i$ and $c_j \in C_j$, we have $f_{i,j}(\pi_{i,j}(e)) = W$. This definition is independent of our choice of $c_i$ and $c_j$, because if we apply condition \ref{relativetofull:threeway} to $c_i$, $c_{i'}, c_j$ with $c_i, c_{i'} \in C_i$, we have $f_{i,i'}(\pi_{i,i'}(e)) = T$ and $f_{i,j}(\pi_{i,j}(e)) \neq T$, so the multiset is $\set{W,T,L}$; then $f_{i,j}(\pi_{i,j}(e))= -f_{j,i'}(\pi_{j,i'}(e))$, so by condition \ref{relativetofull:twoway} we have $f_{i,j}(\pi_{i,j}(e)) = f_{i',j}(\pi_{i',j}(e))$; applying this again with $c_{i'}$, $c_j$ and $c_{j'}$ we get $f_{i,j}(\pi_{i,j}(e)) = f_{i',j'}(\pi_{i',j'}(e))$.

Now $(C_i, C_j)$ is in $G$ if and only if $(C_j,C_i)$ is not; they cannot both be edges due to condition \ref{relativetofull:twoway} but if neither is an edge then $c_i \nsucc c_j$ and $c_j \nsucc c_i$, so $C_i$ and $C_j$ were connected in $G'$ and they formed one clique after all. Therefore $G$ is a tournament. Suppose there is a cycle in the tournament, $(C_1,C_2, \dots, C_s)$ with $(C_i, C_{i+1})$ all edges. Then there is an edge from $C_1$ to $C_3$ (or else, taking $c_1 \in C_1$, $c_2 \in C_2$ and $c_3 \in C_3$, and applying condition \ref{relativetofull:threeway} to $(c_1, c_2, c_3)$ we get one of $\set{W,W,W}$ or $\set{W,W,T}$, both of which are inconsistent), so we get a smaller cycle. We repeat this until we get a 3-cycle, which corresponds to the inconsistent multiset $\set{W,W,W}$, a contradiction. So instead, there is a total order between the vertices of $G$.

Now we define $\succ$ based on these two graphs. We say that $c_i \nsucc c_i$. For $c_i \neq c_j$, if $c_i \sim c_j$ in $G'$, then the result between them is a tie, so we write $c_i \nsucc c_j$ and $c_j \nsucc c_i$. Otherwise, they are in different connected components; say $c_i \in C_i$ and $c_j \in C_j$. Then we write $c_i \succ c_j$ if $(C_i,C_j)$ is an edge and $c_j \succ c_i$ if $(C_j,C_i)$ is an edge. Clearly $c_i \succ c_j$ if and only if $f_{i,j}(\pi_{i,j}(e))=W$, and $c_j \succ c_i$ if and only if $f_{i,j}(\pi_{i,j}(e))=L$; neither occur if and only if $f_{i,j}(\pi_{i,j}(e))=T$. Moreover, $\succ$ is a weak ordering: if $c_i \succ c_j$ and $c_j \succ c_k$ then $c_i \succ c_k$; and if $c_i \nsucc c_j$ and $c_j \nsucc c_k$ then $c_i \nsucc c_k$.
\end{proof}

\subsection{Borda Rules}\label{section:borda}

We now define the (weighted) Borda rules, which are the obvious examples of SWFs meeting the MIIA condition.

\begin{defn}[(Weighted) Borda rule]

For a finite set of voters $V$ with weights $(w_v)_v$, or for an infinite set of voters $(V, \Sigma, \nu)$ with a measurable weight function $w: V \rightarrow \mathbbm{R}$, the weighted Borda rule $B_w$ is defined as follows. We define

\begin{equation}
\begin{aligned}
b_w(c_i,e) &= \int_V e(v)(c_i)w(v)\diff \nu(v) \quad & |V| = \infty \\
b_w(c_i,e) &= \sum_{v \in V} e(v)(c_i)w_v \quad & |V| < \infty
\end{aligned}
\end{equation}

Now we let $B_w = \Phi(b_w)$.

\end{defn}

Now $\pi_{i,j}(B_w(e)) = \varphi(b_w(c_i,e) - b_w(c_j,e))$, so we let

\begin{equation}\label{eq:relativeborda}
\begin{aligned}
d_w(c_i,c_j,e) &= b_w(c_i,e)-b_w(c_j,e) \\
&= \int_V [e(v)(c_i)-e(v)(c_j)] w(v) \diff \nu(v) \\
&= \int_V \pi_{i,j}(e)(v) w(v) \diff \nu(v) \quad |V| = \infty
\end{aligned}
\end{equation}

The equivalent for $|V|$ finite is

\begin{equation}
d_w(c_i,c_j,e) = \sum_{v \in V} \pi_{i,j}(e)(v) w_v
\end{equation}

Thus $d_w$ is a function of $\pi_{i,j}(e)$, and we can write $d_w(c_i,c_j,e) = d_w(\pi_{i,j}(e))$. Then $B_w$ satisfies MIIA with $f_{i,j} = \varphi \circ d_w$. Moreover, $b_w$ is defined symmetrically for all $c_i \in C$, so $B_w$ satisfies the neutrality condition.

We call a Borda rule ``positive'' if $w > 0$ across $V$ and ``non-negative'' if $w \geq 0$ across $V$, and so on for ``negative'' and ``non-positive''. Equation \ref{eq:relativeborda} means that a non-negative Borda rule satisfies PRm, and that a non-negative Borda rule with $\int_V w(v) \diff \nu(v) >0$ satisfies P.

Moreover, on a ballot domain with $V$ infinite, we have IVm. Indeed, we will construct $(V_r)_r$ to be an infinite family of measurable subsets of $V$ with $\nu(V_r)=r$ and $V_r \subset V_s$ whenever $r<s$. We let $V_0 = \emptyset$ and $V_1=V$, and then labelling the rationals in $(0,1)$ with the natural numbers, we define each $V_{q_n}$ in turn to contain all $V_{q_m}$ with $m<n$ and $q_m<q_n$, and to be contained in all $V_{q_m}$ with $m<n$ and $q_m>q_n$; once $V_q$ is defined for all rational $q$ we let $V_r$ for irrational $r$ be the union of $V_q$ for all $q<r$; clearly this satisfies our criteria.

Then for relative elections $a_-$ and $a_+$ with $a_- < a_+$ and $\varphi \circ d_w(a_-) = L$ and $\varphi \circ d_w(a_+)=W$, we define $a_r$ by $a_r(v) = a_+(v)$ on $V_r$ and $a_r(v) = a_-(v)$ elsewhere. Now $d_w(a_r)$ is a continuous function on $r$, with $d_w(a_0) < 0$ and $d_w(a_1) > 0$ (as these values map to $L$ and $W$ respectively under $\varphi$). Thus by the intermediate value theorem there exists $r$ with $d_w(a_r)=0$, and we have $a_- < a_r < a_+$ and $\varphi \circ d_w(a_r) = T$ as required.

If $w$ is constant across all of $V$ then the Borda rule is ``unweighted''. Since scaling $b_w$ by a positive scalar doesn't affect $\Phi$, there are only three such rules: the positive unweighted Borda rule with $w\equiv 1$, the negative unweighted Borda rule with $w \equiv -1$, and the ``tie rule'' with $w \equiv 0$, in which case the SWF always returns a tie between all candidates in $C$.

For any unweighted Borda rule, we have

\begin{equation}
\begin{aligned}
b_w(c_i,e) &=  \sum_{r \in \mathfrak{R}} \int_{v \in e^{-1}(r)} w \cdot r(c_i) \diff \nu(v) \\
&= w \sum_{r \in \mathfrak{R}} r(c_i) \nu(e^{-1}(r)) \\
&= w \sum_{r \in \mathfrak{R}} r(c_i) \vec{\varepsilon}_r
\end{aligned}
\end{equation}

Here $\vec{\varepsilon} = \tau(e)$. Therefore $b_w$ is a function of $\tau(e)$, so $B_w$ satisfies the anonymity condition. Again, this also applies to the finite case.

Between Borda and non-Borda SWFs we can also consider ``weakly Borda'' SWFs, which agree with a Borda rule $B_w$ (for $w \not\equiv 0$) whenever it is decisive but are sometimes able to break ties.

\begin{defn}[Weakly Borda]
A SWF $F$ is weakly Borda if there exists a Borda rule $B_w$ for $w \not\equiv 0$ (i.e. not the tie rule) such that for all $e$, and for two candidates $c_i$ and $c_j$, we have $\pi_{i,j}(B_w(e)) \neq T \Rightarrow \pi_{i,j}(F(e)) = \pi_{i,j}(B_w(e))$. We also say that $F$ is weakly Borda with respect to $B_w$.
\end{defn}

Note that any Borda rule except for the tie rule is weakly Borda with respect to itself. We call a rule ``strongly non-Borda'' if it is not weakly Borda or the tie rule.

The anonymity condition A should mean that the only sensible Borda weights to take are unweighted ones. We formalise this in Lemma \ref{weightinglemma}, which is useful for demonstrating that a SWF is not Borda; it allows us to check only the unweighted Borda rules, and not to worry about all the weighted Borda rules.

\begin{lem}\label{weightinglemma}
For an election domain $E$ containing a ballot domain with ballot $\mathfrak{B}$, if a SWF $F$ fulfils condition A and is weakly Borda with respect to a Borda rule $B_w$ then it is weakly Borda with respect to an unweighted Borda rule $B$.
\end{lem}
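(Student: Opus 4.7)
The plan is to use a symmetrisation argument. Since $F$ is anonymous, $F(e)$ depends only on $\tau(e)$, so to determine $F(e)$ it suffices to find any election $e'$ with $\tau(e') = \tau(e)$ on which $B_w$ is decisive in a known direction, then read off $F(e')=F(e)$ via the weakly Borda hypothesis. The candidate for the unweighted rule $B$ is suggested by an averaging identity. Let $\bar{w}$ denote the mean of $w$ (i.e.\ $\tfrac{1}{n}\sum_v w_v$ for $V$ finite, or $\int_V w\,d\nu$ for $V$ infinite with $\nu(V)=1$), and let $b_1$ denote the unweighted Borda score with weight $\equiv 1$. Swapping the order of summation gives
\begin{equation}
\frac{1}{n!}\sum_{\sigma \in \Sym(V)}\bigl[b_w(c_i,e\circ\sigma) - b_w(c_j,e\circ\sigma)\bigr] = \bar{w}\bigl[b_1(c_i,e) - b_1(c_j,e)\bigr],
\end{equation}
so the average $B_w$-gap across rearrangements of voters is $\bar{w}$ times the unweighted Borda gap.

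For the finite case I would proceed in two steps. First, rule out $\bar{w}=0$: because $w \not\equiv 0$ and $|\mathfrak{B}|\geq 2$, I can exhibit a decisive election for $B_w$ by placing a single voter with nonzero weight on a ranking $r_1$ and the remaining voters on some $r_2 \neq r_1$; the function $c \mapsto r_1(c)-r_2(c)$ sums to zero on $C$ but is not identically zero, so it separates some pair $c_i, c_j$, making $b_w(c_i,e) \neq b_w(c_j,e)$. If $\bar{w}=0$ the identity above has vanishing RHS, so alongside this decisive $e$ some permutation $\sigma$ forces the opposite sign in $B_w(e\circ\sigma)$; anonymity plus weakly Borda then forces $F(e)$ to rank $c_i$ both above and below $c_j$, a contradiction. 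Second, assume WLOG $\bar{w}>0$ (the case $\bar{w}<0$ is symmetric using $B_{-1}$) and take $B = B_1$. Given any $e$ with $B_1$ strictly preferring $c_i$ to $c_j$, the RHS is strictly positive, so some $\sigma$ gives $b_w(c_i,e\circ\sigma)>b_w(c_j,e\circ\sigma)$; weakly Borda yields $c_i \succ c_j$ in $F(e\circ\sigma)$, and Lemma~\ref{permutingvoters} with condition A gives $F(e\circ\sigma)=F(e)$, completing the argument.

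For the infinite case the same outline works once the finite averaging is replaced by a selection argument. I would invoke Lyapunov's convexity theorem: the set of tuples $\bigl(\int_{V'_r} w\,d\nu\bigr)_{r\in\mathfrak{B}}$ ranging over measurable partitions $\{V'_r\}$ of $V$ with prescribed marginals $\nu(V'_r) = \nu(e^{-1}(r))$ is convex, contains the tuple coming from $e$ itself, and contains the ``uniform redistribution'' tuple $\bigl(\nu(e^{-1}(r))\bar{w}\bigr)_r$ (realisable because $\nu$ is atomless). The $B_w$-gap is the linear functional $\sum_r \pi_{i,j}(r)\int_{V'_r}w\,d\nu$ on such tuples, and at the uniform tuple it equals $\bar{w}(b_1(c_i,e)-b_1(c_j,e))$; convexity then supplies partitions realising strictly positive or strictly negative values as needed, and the same case analysis on the sign of $\bar{w}$ goes through. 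The main obstacle I anticipate is precisely this infinite-case realisation: producing a rearranged $e'$ with $\tau(e')=\tau(e)$ and a prescribed sign of the Borda gap, since the clean finite-averaging trick is unavailable and one must lean on Lyapunov or carve out subsets of the right measure and $w$-integral by hand.
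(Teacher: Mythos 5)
Your argument is correct in outline but takes a genuinely different route from the paper's. The paper \emph{sorts} voters by weight: it picks two separating rankings $r, r'$, forms the two extreme elections in which the measure-$\kappa$ block of lowest-weight (resp.\ highest-weight) voters receives $r'$ and the rest receive $r$, shows the two Borda gaps $d_w$ have opposite weak signs, and uses anonymity plus the weakly-Borda hypothesis to force one of them to vanish exactly — which pins $w$ down to a constant $\lambda_-$ almost everywhere. You instead \emph{average} over all rearrangements, obtaining the identity that the mean $B_w$-gap over $\Sym(V)$ equals $\bar{w}$ times the unweighted gap, and transfer decisiveness from some rearrangement back to $e$ via Lemma~\ref{permutingvoters}. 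Your approach yields a cleaner finite argument with no level-set bookkeeping, and it reaches the lemma's conclusion without ever showing $w$ is constant; the paper's approach buys that stronger intermediate fact, which is what lets it conclude $B_w = B_{\lambda_-}$ identically on all of $E$ (not just on elections reachable by rearrangement) and lets the same two-election construction serve both the finite and infinite cases without invoking Lyapunov.

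Two loose ends to tighten. First, in the infinite case with $\bar{w}=0$, convexity of the Lyapunov range does not by itself ``supply strictly positive or strictly negative values'': you must exhibit them. This is fillable — since $\int_V w\,\diff\nu = 0$ and $w$ is not a.e.\ zero, both $\set{w>0}$ and $\set{w<0}$ have positive measure, so taking $\kappa$ at most the smaller of these and carving (by atomlessness) a measure-$\kappa$ set inside each gives two elections on $\set{r_1,r_2}$ with equal $\tau$ and $B_w$-gaps of opposite strict sign — but it should be said explicitly, as it is the crux of ruling out $\bar{w}=0$. Second, your rearrangement step only applies to $e$ for which the rearranged election lies in $E$; this covers the ballot domain (and every application in the paper), but since the lemma is stated for $E$ merely \emph{containing} a ballot domain, your conclusion is formally weaker than the paper's on any exotic $E$ that is not closed under measure-preserving reassignment. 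The paper escapes this because constancy of $w$ makes $B_w$ unweighted on all of $E$ at once.
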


\begin{proof}
We consider the case of $V$ infinite; for $V$ finite, an identical argument works, letting $\nu(X)$ be $|X|/|V|$ for any set of voters $X$.

Consider the function $w$. For brevity, we denote the interval $(t,\infty)$ as $U_t$ and the interval $(-\infty,t)$ as $L_t$. We then denote $W^+_t = w^{-1}(U_t)$ and $W^-_t = w^{-1}(L_t)$, and we write $\nu(W^{\varepsilon}_t) = w^{\varepsilon}_t$ for $\varepsilon = \pm$.

Furthermore, let $r$ and $r'$ be two distinct votes in $\mathfrak{B}$. There must be two candidates $c_i$ and $c_j$ for which $r(c_i)>r(c_j)$ but $r'(c_i)<r'(c_j)$ (otherwise $r$ and $r'$ give identical total orders of $C$ and are therefore equal). Let $s = r(c_i)-r(c_j)$ and $t = r'(c_j) - r'(c_i)$, and let $\kappa = s/(s+t)$.

Now $w^-_{\lambda}$ is an increasing function of $\lambda$; we take $\lambda_-$ to be the infimum of all values for which $w^-_{\lambda} \geq \kappa$. Now $w^-_{\lambda_-} \leq \kappa$, as it is the supremum of $w^-_{\lambda}$ for all $\lambda < \lambda_-$ (since $L_{\lambda_-}$ is the union for all $\lambda<\lambda_-$ of $L_{\lambda}$), and for $\lambda < \lambda_-$ we have $w^-_{\lambda} < \kappa$. On the other hand, $U_r = \cup_{r' > r} L_r^c$ so $w^+_{\lambda_-} = \sup_{\lambda > \lambda_-} 1 - w^-_{\lambda}$. But for all $\lambda > \lambda_-$ we have $w^-_{\lambda} \geq \kappa$, so $w^+_{\lambda_-} \leq 1 - \kappa$.

Now, if $w^-_{\lambda_-} = \kappa$ then we set $S_- = W^-_{\lambda_-}$ and $T_- = V \setminus S_-$. On the other hand, if $w^-_{\lambda_-} < \kappa$ then we have $\nu(w^{-1}(\lambda_-)) = 1 - w^-_{\lambda_-} - w^+_{\lambda_-} \geq \kappa - w^-_{\lambda_-}$. Now we take a measurable subset $X$ of $w^{-1}(\lambda_-)$ of measure $\kappa - w^-_{\lambda_-}$, and we set $S_- = X \cup W^-_{\lambda_-}$ and $T_- = V \setminus S_-$.

Now we have partitioned $V$ into $S_-$ and $T_-$ with $\nu(S_-) = \kappa$, $w \leq \lambda_-$ on $S_-$ and $w \geq \lambda_-$ on $T_-$. Since $E$ is a ballot election, we can find $e_- \in E$ defined by $e_-(v) = r'$ for $v \in S_-$ and $e_-(v) = r$ for $v \in T_-$. Now for the election $e_-$ we have 

\begin{equation}
\begin{aligned}
d_w(c_i, c_j, e_-) &= \int_{T_-} s w(v) \diff \nu(v) - \int_{S_-} t w(v) \diff \nu(v) \\
&\leq \lambda_- (1 - \kappa) s - \lambda_- \kappa t \\
&= 0
\end{aligned}
\end{equation}

If $d_w(c_i, c_j, e_-) < 0$ then $\pi_{i,j}(B_w(e_-)) = L$ and, since $F$ is weakly Borda, $\pi_{i,j}(F(e_-)) = L$.

Similarly, we can construct $\lambda_+$, and then $S_+$ and $T_+$ partitioning $V$ such that $\nu(S_+) = \kappa$, $w \geq \lambda_+$ on $S_+$ and $w \leq \lambda_+$ on $T_+$. Then setting $e_+$ to be the election with $e_+(v) = r'$ for $v \in S_+$ and $e_+(v) = r$ for $v \in T_+$, we find that

\begin{equation}
\begin{aligned}
d_w(c_i,c_j,e_+) &= \int_{T_+} s w(v) \diff \nu(v) - \int_{S_+} t w(v) \diff \nu(v) \\
&\geq \lambda_+ (1 - \kappa) s - \lambda_+ \kappa t \\
&= 0
\end{aligned}
\end{equation}

Again, if $d_w(c_i, c_j, e_+) > 0$ then $\pi_{i,j}(B_w(e_+)) = W$ and, since $F$ is weakly Borda, $\pi_{i,j}(F(e_+)) = W$. But since $\tau(e_-) = \tau(e_+)$, and $F$ satisfies anonymity, $\pi_{i,j}(F(e_-)) = \pi_{i,j}(F(e_+))$. Therefore for at least one election $e$ out of the two elections $e_-$ and $e_+$ we have $d_w(c_i, c_j, e) = 0$. Suppose without loss of generality that $e = e_-$. Then $\int_{T_-} w(v) \diff \nu(v) = (1 - \kappa)\lambda_-$, so $w = \lambda_-$ on all but a zero-measure subset $T_0$ of $T$. Similarly $w = \lambda_-$ on all but a zero-measure subset $S_0$ of $S$. Then for all $e\in E$ and for all $c_i \in C$, $b_w(c_i) = b_{\lambda_-}(c_i)$ (we are abusing notation to write $\lambda_-$ for the constant weight function $w \equiv \lambda_-$). Therefore $F$ is weakly Borda with respect to the unweighted Borda rule with sign $\sgn(\lambda_-)$.
\end{proof}

We now proceed to the main results of this paper.

\section{Unrestricted domain and anonymity}\label{UA}

We begin with the most critical construction, which demonstrates that on the unrestricted domain, even the combination of all of the conditions listed in section \ref{section:conditions} is not enough to guarantee the Borda rule. We give a construction for the case of $V$ infinite, and then use this construction to generate more constructions for cases of $V$ finite.

\begin{thm}\label{IUA}
There exists a strongly non-Borda SWF on $(V, \nu)$ and $C = \set{c_1,c_2,c_3}$ satisfying U, MIIA, A, N, P, PR and IV.
\end{thm}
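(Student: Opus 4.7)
The plan is to give an explicit relative SWF $f'$ satisfying all the required pairwise properties and then invoke Lemma \ref{relativetofull} (combined with the reductions of Lemmas \ref{amiia}--\ref{namiia}) to lift it to a full SWF $F$. My candidate is
\begin{equation*}
f'(\vec{\alpha}) = \varphi\bigl(d(\vec{\alpha})\bigr), \qquad d(\vec{\alpha}) = \bigl(2\alpha_2 + \alpha_1 - \alpha_{-1} - 2\alpha_{-2}\bigr) + \bigl(\alpha_1 \alpha_2 - \alpha_{-1}\alpha_{-2}\bigr),
\end{equation*}
i.e.\ the Borda score plus a specific antisymmetric quadratic correction $g(\vec{\alpha}) = \alpha_1\alpha_2 - \alpha_{-1}\alpha_{-2}$. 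Because $f'$ is defined on the whole simplex $A'$, depends only on the measure vector, and is the same function for every ordered pair, the conditions U, MIIA, A, and N will all be automatic once a corresponding $F$ is extracted.

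Next I would dispatch the pairwise axioms. Antisymmetry $d(\varrho\vec{\alpha}) = -d(\vec{\alpha})$ is visible from the formula. Pareto is immediate: on the face $\alpha_{-1} = \alpha_{-2} = 0$ the correction $g$ is nonnegative and the Borda term is at least $\alpha_1 + \alpha_2 = 1$, so $d(\vec{\alpha}) \geq 1 > 0$. For PR I would use Lemma \ref{prmiia} to reduce to PRm, and since Borda is itself majorisation-monotone it only remains to check that each of the four elementary mass transfers leaves $d$ non-decreasing; this reduces to inequalities like $\alpha_{-2}+\alpha_{-1}+2\alpha_1 \geq 0$, which follow from $\sum_m \alpha_m = 1$ and $\alpha_m \geq 0$. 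For IV I would use Lemma \ref{ivmiia} to reduce to IVm, and then given $\vec{\alpha} \geq \vec{\beta}$ with $f'(\vec{\alpha}) = W$ and $f'(\vec{\beta}) = L$ take the linear interpolant $\vec{\gamma}_t = (1-t)\vec{\beta}+t\vec{\alpha}$, which preserves majorisation in $t$; then $d(\vec{\gamma}_t)$ is a continuous polynomial in $t$ changing sign on $[0,1]$, so the intermediate value theorem supplies the required $\vec{\gamma}_{t^*}$.

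The main obstacle, and the reason for the specific choice of $g$, is the three-way consistency condition of Lemma \ref{relativetofull}, namely
\begin{equation*}
d(\vec{\alpha}_{1,2}(e)) + d(\vec{\alpha}_{2,3}(e)) + d(\vec{\alpha}_{3,1}(e)) = 0 \qquad \text{for every } e \in E.
\end{equation*}
The Borda part vanishes by the pointwise identity $\pi_{1,2}(r)+\pi_{2,3}(r)+\pi_{3,1}(r) \equiv 0$ on any single vote $r$. The new check is that the cyclic sum of $g$ vanishes too: expanding each marginal $\vec{\alpha}_{i,j}(e)$ in terms of the masses $n_1,\ldots,n_6$ of the six strict rankings of $C$, every cross-product $n_i n_j$ in the expansion appears exactly once with a $+$ sign (from one pair) and once with a $-$ sign (from another), so the whole expression telescopes to $0$. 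Once the cyclic sum vanishes, any triple of signs $(f'_1, f'_2, f'_3)$ it produces must sum to zero in the $\{-1,0,+1\}$ coding, which rules out precisely the inconsistent multisets $\{W,W,W\}$, $\{L,L,L\}$, $\{W,W,T\}$, $\{L,L,T\}$, $\{W,T,T\}$, $\{L,T,T\}$, so Lemma \ref{relativetofull} delivers $F$. Finally, for strong non-Bordaness, the relative election $\vec{\alpha} = (0.28,\,0.28,\,0,\,0.44)$ (at positions $-2,-1,+1,+2$) has Borda score $+0.04$ but $d(\vec{\alpha}) = -0.04$, so $F$ decisively disagrees with the positive unweighted Borda rule; the unanimous $\vec{\alpha}=(0,0,0,1)$ decisively disagrees with the negative unweighted Borda rule; and Lemma \ref{weightinglemma} upgrades these to disagreement with every weighted Borda rule, ruling out weak Bordaness (and $F$ is plainly not the tie rule).
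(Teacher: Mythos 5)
Your construction is correct, but it is a genuinely different one from the paper's. The paper starts from a global score function $G'(c_i,\vec{\varepsilon}) = X\sum_r r(c_i)\varepsilon_r + \sum_r\sum_{s\in C_3\circ r} r(c_i)\varepsilon_r\varepsilon_s$ and sets $F=\Phi(G)$, so transitivity of the output is automatic and the work goes into showing that the score \emph{difference} factors through the marginal $\vec{\alpha}=\pi_{i,j}(\vec{\varepsilon})$ (this is where MIIA is earned, using the fact that even permutations preserve the gap class mod $3$); the resulting pairwise function is $(X+\alpha_1+\alpha_{-2})(\alpha_1-2\alpha_{-2})+(X+\alpha_2+\alpha_{-1})(2\alpha_2-\alpha_{-1})$, a different quadratic perturbation from yours. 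You instead posit the pairwise rule directly, so MIIA, A and N are free, and the work migrates to the consistency hypothesis of Lemma \ref{relativetofull}; your telescoping claim is correct (each of the six cross-terms $n_in_j$ with $i\in\{1,2,3\}$, $j\in\{4,5,6\}$ appears once positively and once negatively in the cyclic sum), and the sign argument from $d_1+d_2+d_3=0$ does exclude exactly the inconsistent multisets. The two approaches trade the same verification burden in opposite directions, and both close with the same endgame (a point where the quadratic correction overturns the Borda sign, then Lemma \ref{weightinglemma}). Three small points you should tighten, none fatal: (i) in the PR check the first and third elementary transfers contribute a $\mp\delta^2$ term on top of the linear one, so you should phrase the monotonicity as a derivative computation along the continuous transfer path --- the derivatives are $2\alpha_1+\alpha_{-1}+\alpha_{-2}$, $2+\alpha_2+\alpha_{-2}$ and $\alpha_2+\alpha_1+2\alpha_{-1}$, all nonnegative on the simplex, which is exactly what PRm needs (the paper confronts the analogous $\pm3\delta^2$ terms explicitly); (ii) for IVm you must realize the interpolated measure vector $\vec{\gamma}_{t^*}$ as a relative election $a'$ with $a_2\le a'\le a_1$ pointwise, which is done by splitting each set $\{v: (a_1(v),a_2(v))=(b_1,b_2)\}$ in proportion $t^*$ using atomlessness --- the paper is equally terse here, but say it; (iii) your witness value is $d(\vec{\alpha})=0.04-0.0784=-0.0384$ rather than $-0.04$, though the sign, which is all that matters, is right.
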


Note that by Theorem \ref{ICA}, proved in section \ref{CA} below, the restriction of the SWF to either of the two Condorcet cycle domains will give the Borda rule. We can think of two elections running in parallel: one between all the voters in the first Condorcet cycle, $e^{-1}(\mathfrak{C}) \subset V$, and a second between voters in the second Condorcet cycle, $e^{-1}(\mathfrak{R} \setminus \mathfrak{C}) = V \setminus e^{-1}(\mathfrak{C})$. By giving a pathological rule for combining the Borda scores of these two elections, we will find a non-Borda election on the unrestricted domain.

\begin{proof}[Proof of Theorem \ref{IUA}]

We consider the following function $G': C \times E' \rightarrow \mathbbm{R}$:

\begin{equation} \label{iua:fullscore}
G'(c_i, \vec{\varepsilon}) = X\sum_{r \in \mathfrak{R}} r(c_i) \varepsilon_r + \sum_{r \in \mathfrak{R}}\sum_{s \in C_3 \circ r} r(c_i)\varepsilon_r \varepsilon_s
\end{equation}

Here $C_3$ is the group of cyclic (i.e. even) permutations on $C$. $X$ will be determined later.

We can define $G: C \times E \rightarrow \mathbbm{R}$ by $G(c,e) = G'(c, \tau(e))$. Then we let $F = \Phi(G)$. Now the relative result between any $c_i$ and $c_j$ is given by

\begin{equation}
\pi_{i,j}(F(e)) = \varphi(G'(c_i,\vec{\varepsilon})-G'(c_j,\vec{\varepsilon}))
\end{equation}

Now $F$ satisfies condition U by construction.

\begin{lem}
$F$ satisfies MIIA, N and A.
\end{lem}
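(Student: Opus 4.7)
The plan is to verify each of A, MIIA, and N in turn. Anonymity is essentially by construction: since $G(c,e) = G'(c,\tau(e))$, setting $F'(\vec{\varepsilon})$ to be the weak ordering on $C$ induced by the scores $c \mapsto G'(c,\vec{\varepsilon})$ gives $F = F' \circ \tau$ immediately, which is condition A.

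For MIIA, the task is to show that $G'(c_i,\vec{\varepsilon}) - G'(c_j,\vec{\varepsilon})$ depends on $\vec{\varepsilon}$ only through $\vec{\alpha} := \tau_{i,j}(\pi_{i,j}(e))$. The linear part equals $X\sum_m m\,\alpha_m$, so the substance is in the quadratic term. Writing $\mathfrak{C}' = \mathfrak{R}\setminus\mathfrak{C}$ for the other cyclic orbit, we have $C_3 \circ r = \mathfrak{C}$ for $r \in \mathfrak{C}$ and $C_3 \circ r = \mathfrak{C}'$ for $r \in \mathfrak{C}'$, so the quadratic difference splits as
\begin{equation}
Q_i - Q_j = E \sum_{r \in \mathfrak{C}} \pi_{i,j}(r)\,\varepsilon_r + E' \sum_{r \in \mathfrak{C}'} \pi_{i,j}(r)\,\varepsilon_r,
\end{equation}
with $E = \sum_{r \in \mathfrak{C}} \varepsilon_r$ and $E' = 1-E$. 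The crucial combinatorial fact is that for every pair $(c_i,c_j)$ and every $m \in D_3$, the preimage $\pi_{i,j}^{-1}(m) \subset \mathfrak{R}$ lies entirely in $\mathfrak{C}$ or entirely in $\mathfrak{C}'$: a direct check shows that inside a single Condorcet cycle each pair realises the gap multiset $\{1,1,-2\}$ or $\{2,-1,-1\}$, and the underlying sets $\{-2,1\}$ and $\{-1,2\}$ are disjoint. This makes $E$, $E'$, and each cycle-sum above expressible purely in terms of $\vec{\alpha}$, establishing MIIA.

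For neutrality, I would fix $\rho \in \Sym(C)$ and compute $G'(c_a,\tau(\rho\circ e))$ directly, using $\tau(\rho\circ e)(r) = \tau(e)(r\circ\rho^{-1})$ and substituting $r \mapsto r\circ\rho$, $s \mapsto s\circ\rho$ in the sums defining $G'$. The linear part transforms as for an ordinary Borda rule, and the quadratic part is preserved because $C_3$ is normal in $\Sym(C)$, so conjugation by $\rho$ permutes $C_3$-orbits bijectively; together this yields $G'(c_a,\tau(\rho\circ e)) = G'(\rho(c_a),\tau(e))$, which via $F = \Phi(G)$ and the definition of $\rho$ acting on weak orderings gives $F(\rho\circ e) = \rho(F(e))$. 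The main obstacle is the combinatorial claim underpinning MIIA; everything else is unwinding definitions. I expect to justify it either by enumerating the six elements of $\mathfrak{R}$ against the three pairs $(c_i,c_j)$, or by the structural observation about gap multisets realised within each Condorcet orbit.
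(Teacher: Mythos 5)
Your proof is correct and follows essentially the same route as the paper: the heart of both arguments is the observation that for each pair $(c_i,c_j)$ one $C_3$-orbit realises exactly the gaps $\{1,-2\}$ and the other exactly $\{2,-1\}$, so the orbit masses and orbit-weighted gap sums are functions of $\vec{\alpha}$ alone, yielding the same $g(\vec{\alpha}) = (X+\alpha_1+\alpha_{-2})(\alpha_1-2\alpha_{-2})+(X+\alpha_2+\alpha_{-1})(2\alpha_2-\alpha_{-1})$. The paper gets neutrality slightly more cheaply, by noting that this formula is independent of the chosen pair, rather than by transforming $G'$ under $\rho$ directly, but both arguments rest on the same facts.
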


\begin{proof}
It suffices to show that $F$ has a corresponding weight-based relative SWF $f$ (neutrality is given by the fact that this function will not depend on the choice of $c_i$ and $c_j$). If $G'(c_i,\vec{\varepsilon})-G'(c_j,\vec{\varepsilon}) = g(\pi_{i,j}(\vec{\varepsilon}))$ for some $g$, then $f' = \varphi \circ g$ is the weight-based relative SWF corresponding to $F$. Indeed, writing $\vec{\alpha} = \pi_{i,j}(\vec{\varepsilon})$, we have

\begin{equation}
\begin{aligned}
G'(c_i,\vec{\varepsilon})-G'(c_j,\vec{\varepsilon}) =& X\sum_{r \in \mathfrak{R}} r(c_i) \varepsilon_r + \sum_{r \in \mathfrak{R}}\sum_{s \in C_3 \circ r} r(c_i)\varepsilon_r \varepsilon_s \\
&- X\sum_{r \in \mathfrak{R}} r(c_j) \varepsilon_r - \sum_{r \in \mathfrak{R}}\sum_{s \in C_3 \circ r} r(c_j)\varepsilon_r \varepsilon_s \\
=& X\sum_{r \in \mathfrak{R}} (r(c_i)-r(c_j)) \varepsilon_r + \sum_{r \in \mathfrak{R}}\sum_{s \in C_3 \circ r} (r(c_i)-r(c_j))\varepsilon_r \varepsilon_s \\
=& X \sum_{t \in D_3} \left( t \sum_{r: \pi_{i,j}(r)=t} \varepsilon_r \right) + \sum_{t \in D_3}\left(t \sum_{r: \pi_{i,j}(r)=t} \varepsilon_r \sum_{s \in C_3 \circ r} \varepsilon_s \right) \\
=& X \sum_{t \in D_3} t \alpha_t + \sum_{t \in D_3} t \alpha_t \left(\sum_{s: 3 \mid \pi_{i,j}(s) - t} \varepsilon_s \right) \\
=& X \sum_{t \in D_3} t\alpha_t + \sum_{t \in D_3} t \alpha_t \left(\sum_{u: 3 \mid (t-u)} \alpha_u\right)
\end{aligned}
\end{equation}

The penultimate inequality is because when applying an even permutation to $r \in \mathfrak{R}$, we maintain the cyclic order of the three candidates, so a gap between two candidates of $1$ or $-2$ must remain a gap of $1$ or $-2$, and the same goes for $-1$ and $2$. The final expression is a function only of $\vec{\alpha}$, so we can define it as

\begin{equation}\label{iua:relativescore}
g(\vec{\alpha}) = (X+\alpha_1 + \alpha_{-2})(\alpha_1 - 2\alpha_{-2}) + (X+\alpha_2 + \alpha_{-1})(2\alpha_2 - \alpha_{-1})
\end{equation}

giving $f' = \varphi \circ g$ a weight-based relative SWF as required.
\end{proof}

It remains to show that PR, P and IV are also satisfied.

\begin{lem} $F$ satisfies condition PR. \end{lem}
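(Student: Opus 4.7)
By Lemma \ref{increasingballots} the ballot $\mathfrak{R}$ is increasing, and $F$ has already been shown to satisfy MIIA, so Lemma \ref{prmiia} reduces the PR property of $F$ to its weight-based counterpart PRm. The relative SWF of $F$ is $f' = \varphi \circ g$ with $g$ given by equation (\ref{iua:relativescore}), and since $\varphi\colon\mathbbm{R}\to\set{L,T,W}$ is order-preserving it suffices to prove the strictly stronger statement that $g$ is non-decreasing on $A'_{i,j}$ with respect to the majorisation order defined in the paragraph after Lemma \ref{amiia}.

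The majorisation order on the simplex $\set{\vec\alpha\in\mathbbm{R}^{D_3}_{\geq 0}:\sum_t\alpha_t=1}$ is generated by the six elementary mass transfers which shift some quantity $\delta>0$ from a coordinate $\alpha_s$ into a coordinate $\alpha_t$ with $s<t$ in $D_3=\set{-2,-1,1,2}$, so it is enough to check that each such transfer gives $\Delta g \geq 0$. For each of the six I would substitute directly into equation (\ref{iua:relativescore}) and compute $\Delta g$. The two ``within-cycle'' transfers $\alpha_{-2}\to\alpha_1$ and $\alpha_{-1}\to\alpha_2$ preserve the sums $\alpha_1+\alpha_{-2}$ and $\alpha_2+\alpha_{-1}$ respectively, so only one of the two quadratic factors of $g$ changes and $\Delta g$ collapses to $3\delta(X+\alpha_1+\alpha_{-2})$ or $3\delta(X+\alpha_2+\alpha_{-1})$; both are manifestly non-negative for any $X\geq 0$.

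The four remaining ``cross'' transfers ($-2\to-1$, $-2\to 2$, $-1\to 1$, $1\to 2$) perturb both quadratic factors simultaneously and yield $\Delta g$ in the form $\delta\left[cX+P(\vec\alpha)\right]+Q\delta^2$, where $c>0$ is a strictly positive constant (equal to $w_t-w_s$ with $w_m$ the Borda weight of gap $m$, and inherited from the linear Borda part $X\sum_t t\alpha_t$), $P$ is a polynomial in the $\alpha_t$ that is bounded in absolute value on the compact simplex, and $Q\geq 0$ is a non-negative quadratic residue from the nonlinear part of $g$. For each of these four transfers, taking $X$ larger than $\sup_{\vec\alpha}|P(\vec\alpha)|/c$ forces $\Delta g\geq 0$; fixing $X$ above the maximum of the four resulting thresholds (the tightest being the $\alpha_1\to\alpha_2$ transfer) then makes $g$ non-decreasing along every elementary transfer, so chaining gives the required majorisation-monotonicity of $g$, and hence PRm.

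The only real obstacle is the bookkeeping of the four cross-transfer computations, together with verifying that the coefficient of $X$ is strictly positive in each so that inflating $X$ actually helps. Qualitatively there is no difficulty: the nonlinear portion of $g$ is a bounded polynomial on the compact simplex while the Borda portion scales linearly in $X$, so the previously-unspecified parameter $X$ in equation (\ref{iua:fullscore}) simply gets fixed here to be large enough (the same $X$ will subsequently also have to be checked for P and IV).
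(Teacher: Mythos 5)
Your overall strategy is the same as the paper's: reduce PR to PRm via Lemma \ref{prmiia} (using Lemma \ref{increasingballots} and the already-established MIIA), decompose the majorisation order on the simplex into elementary mass transfers from lower to higher gap values, compute the resulting change in $g$ from equation (\ref{iua:relativescore}), and choose $X$ large enough to make every such change non-negative. The paper uses only the three adjacent transfers $-2\to-1$, $-1\to 1$, $1\to 2$, which already generate the order; your six are redundant but harmless, and your observation that the two within-cycle transfers give the clean increments $3\delta(X+\alpha_1+\alpha_{-2})$ and $3\delta(X+\alpha_2+\alpha_{-1})$ is correct.

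However, one of your stated premises is false, and it is load-bearing as written: the quadratic correction $Q$ is \emph{not} non-negative for all four cross transfers. For the transfer $\alpha_{-2}\to\alpha_{-1}$ the increment is $\delta(X+\alpha_2-2\alpha_{-1}+\alpha_1+4\alpha_{-2})-3\delta^2$, i.e.\ $Q=-3$ (this is exactly the paper's third displayed increment), so the inference ``$cX+P>0$ and $Q\delta^2\geq 0$, hence $\Delta g\geq 0$'' does not go through for that transfer. The gap is repairable in at least two ways: either take $\delta$ infinitesimal and integrate the directional derivative $cX+P(\vec{\alpha})$, which is uniformly positive once $X>\sup|P|/c$, along the transfer path; or enlarge $X$ further so that $cX+P\geq 3$ uniformly, which absorbs the $-3\delta^2$ term for any $\delta\leq 1$. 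The paper instead keeps the sharper threshold $X\geq 2$, for which the coefficient $X+\alpha_2-2\alpha_{-1}+\alpha_1+4\alpha_{-2}$ is merely non-negative and vanishes exactly at $\alpha_{-1}=1$, and it must then separately observe that at that point the transfer cannot be applied at all. Your larger $X$ avoids that boundary case and is harmless for the later lemmas (P, IV and strong non-Bordaness all hold for any $X\geq 0$), but you should actually record the four cross-transfer increments rather than leaving them as bookkeeping, since the entire content of the lemma is that these particular signs work out.
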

\begin{proof}
By Lemma \ref{prmiia}, we need only check condition PRm, which states that as $\vec{\alpha}$ increases in the partially ordered set $A'$, the value of $f(\vec{\alpha})$ cannot decrease.

For any $\vec{\alpha} < \vec{\beta}$ in $A'$, we can reach $\vec{\beta}$ from $\vec{\alpha}$ by a series of at most three steps. In the first one, we replace $\alpha_{-2}$ with the smaller $\beta_{-2}$ and add $\alpha_{-2} - \beta_{-2}$ to $\alpha_{-1}$. We know that $\beta_{-2} \leq \alpha_{-2}$ as $\vec{\beta}$ majorises $\vec{\alpha}$, so

\begin{equation}
1 - \beta_{-2} = \beta_2 + \beta_1 + \beta_{-1} \geq \alpha_2 + \alpha_1 + \alpha_{-1} = 1 - \alpha_{-2}
\end{equation}

But now we are left with two vectors of equal total weight on indices $2, 1, -1$, where $\beta_2 + \beta_1 \geq \alpha_2 + \alpha_1$, so again we can replace $\alpha_{-1} + \alpha_{-2}-\beta_{-2}$ with the smaller $\beta_{-1}$, and increase $\alpha_1$ to $\alpha_1+\alpha_{-1}+ \alpha_{-2} - \beta_{-2}-\beta_{-1}$. Finally we decrease this to $\beta_1$ and increase $\alpha_2$ to $\beta_2$.

Carrying out each of these stages continuously, at each point we are replacing $\vec{\alpha}$ with one of

\begin{equation}\label{iua:shifts}
\begin{aligned}
(\alpha_2+\delta, \alpha_1-\delta, \alpha_{-1}, \alpha_{-2}) \\
(\alpha_2, \alpha_1+\delta, \alpha_{-1}-\delta, \alpha_{-2}) \\
(\alpha_2, \alpha_1, \alpha_{-1}+\delta, \alpha_{-2}-\delta)
\end{aligned}
\end{equation}

for $\delta$ arbitrarily small. These respectively increase $g(\vec{\alpha})$ by

\begin{equation}
\begin{aligned}
\delta(X+4\alpha_2 + \alpha_{-1} + 2\alpha_1 - \alpha_{-2})+3\delta^2 \\
\delta(X-\alpha_2 - \alpha_{-2} +2\alpha_1) \\
\delta(X+\alpha_2-2\alpha_{-1}+\alpha_1+4\alpha_{-2})-3\delta^2
\end{aligned}
\end{equation}

For $X \geq 2$, the coefficients of $\delta$ on the first two lines are both posiitive, and the entire expressions are positive. On the third line, $X \geq 2$ guarantees that the coefficient of $\delta$ is non-negative, and it is zero if and only if $\alpha_{-1}=1$. For $\alpha_{-1} \neq 1$, we pick $\delta$ small enough for the linear term to dominate the quadratic term, so the expression is positive. On the other hand, if $\alpha_{-1}=1$ then we cannot apply the third variation of $\vec{\alpha}$, as it replaces $\alpha_{-1}$ with $\alpha_{-1}+\delta > 1$, which is impossible.

Thus, transforming $\vec{\alpha}$ to $\vec{\beta}$ in small enough steps, we find that $g(\vec{\beta})>g(\vec{\alpha})$, so $f(\vec{\beta}) \geq f(\vec{\alpha})$, and the PRm condition holds.
\end{proof}

Now we fix $X=2$ and we quickly address the Pareto principle:

\begin{lem} $F$ satisfies condition P. \end{lem}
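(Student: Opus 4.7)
The plan is to translate the Pareto condition into a pointwise inequality on the relative scoring function $g$ from formula \eqref{iua:relativescore}, and then verify that inequality directly. Since $F$ already satisfies MIIA and A, Lemma \ref{amiia} gives $\pi_{i,j}(F(e)) = f'(\pi_{i,j}(\tau(e))) = \varphi(g(\pi_{i,j}(\tau(e))))$. The Pareto hypothesis ``$\pi_{i,j}(e)(v) > 0$ for all $v$'' says that no voter ranks $c_j$ at or above $c_i$, so the corresponding $\vec{\alpha} = \pi_{i,j}(\tau(e))$ is supported on the positive indices of $D_3 = \{-2,-1,1,2\}$. Condition P therefore reduces to showing $g(\vec{\alpha}) > 0$ whenever $\alpha_{-1} = \alpha_{-2} = 0$.

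Under that hypothesis, the normalisation $\nu(V) = 1$ forces $\alpha_1 + \alpha_2 = 1$. Substituting $\alpha_{-1} = \alpha_{-2} = 0$ and $X = 2$ into \eqref{iua:relativescore} collapses the expression to
$$g(\vec{\alpha}) = (2 + \alpha_1)\alpha_1 + 2(2 + \alpha_2)\alpha_2 = (2\alpha_1 + 4\alpha_2) + \alpha_1^2 + 2\alpha_2^2 \geq 2(\alpha_1 + \alpha_2) = 2 > 0.$$
Hence $f'(\vec{\alpha}) = \varphi(g(\vec{\alpha})) = W$, and so $\pi_{i,j}(F(e)) = W$, which is exactly condition P. Neutrality means the same single $g$ handles every ordered pair $(c_i, c_j)$, so no case analysis on candidates is needed.

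There is no real obstacle here; the content of the lemma is absorbed into the choice $X = 2$, which is already large enough that the ``Borda-like'' linear contribution $2\alpha_1 + 4\alpha_2$ swamps the quadratic perturbation terms even when the entire unit of mass is placed on the worst index. In fact the inequality $g(\vec{\alpha}) \geq 2$ is comfortably strict at unanimity, reflecting that the construction has slack in the Pareto direction; the delicate parameter choices will come in the upcoming PR and IV verifications, not here.
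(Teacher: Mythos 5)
Your proof is correct and follows essentially the same route as the paper: both reduce condition P to showing $g(\vec{\alpha}) > 0$ when $\alpha_{-1} = \alpha_{-2} = 0$ and $\alpha_1 + \alpha_2 = 1$, then bound the resulting expression $\alpha_1(X+\alpha_1) + 2\alpha_2(X+\alpha_2)$ from below. The only cosmetic difference is that you use $X = 2$ to get the bound $g \geq 2$ from the linear terms, while the paper uses only $X \geq 0$ and bounds via $\alpha_1^2 + \alpha_2^2 \geq 1/2$; both are valid.
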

\begin{proof}
The Pareto condition requires that if $\alpha_2 + \alpha_1 = 1$ then $g(\vec{\alpha}) > 0$. The expression for $g$ given that $\alpha_{-1}=\alpha_{-2}=0$ is $\alpha_1(X+\alpha_1) + 2\alpha_2(X+\alpha_2)$. Since $X\geq 0$ this is bounded below by $\alpha_1^2+\alpha_2^2 \geq (\alpha_1 + \alpha_2)^2/2= 1/2$, so the SWF satisfies condition P.
\end{proof}

\begin{lem} $F$ satisfies condition IV. \end{lem}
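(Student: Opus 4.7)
The plan is to reduce to IVm via Lemma \ref{ivmiia}, since the unrestricted domain is both increasing and intermediate by Lemmas \ref{intermediateballots} and \ref{increasingballots}. So I need to show that for any $\vec{\alpha} \geq \vec{\beta}$ in $A'$ with $f(\vec{\alpha}) = W$ and $f(\vec{\beta}) = L$ (equivalently $g(\vec{\alpha}) > 0$ and $g(\vec{\beta}) < 0$), there exists $\vec{\gamma} \in A'$ with $\vec{\alpha} \geq \vec{\gamma} \geq \vec{\beta}$ and $g(\vec{\gamma}) = 0$.

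The key idea is to re-use the three-stage path from the proof of PR. Given $\vec{\beta} \leq \vec{\alpha}$, I construct a continuous path $\vec{\alpha}(s)$, $s \in [0,1]$, from $\vec{\alpha}(0) = \vec{\beta}$ to $\vec{\alpha}(1) = \vec{\alpha}$, by concatenating three linear segments corresponding to the three shifts in equation \ref{iua:shifts}: first transfer the appropriate mass from index $-2$ up to index $-1$, then from $-1$ up to $1$, then from $1$ up to $2$. Along each segment only one coordinate pair changes linearly in the parameter $\delta$, with all four coordinates remaining non-negative and summing to $1$, so $\vec{\alpha}(s) \in A'$ throughout. Moreover, within each segment mass is moved strictly upward, so the path is monotonically increasing in the majorisation order; in particular $\vec{\beta} \leq \vec{\alpha}(s) \leq \vec{\alpha}$ for all $s$.

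Now observe that $g$, as written in equation \ref{iua:relativescore}, is a polynomial in the four coordinates and therefore a continuous function of $s$. Since $g(\vec{\alpha}(0)) = g(\vec{\beta}) < 0$ and $g(\vec{\alpha}(1)) = g(\vec{\alpha}) > 0$, the intermediate value theorem yields some $s^* \in (0,1)$ with $g(\vec{\alpha}(s^*)) = 0$. Setting $\vec{\gamma} = \vec{\alpha}(s^*)$ gives $f(\vec{\gamma}) = \varphi(g(\vec{\gamma})) = T$, together with $\vec{\alpha} \geq \vec{\gamma} \geq \vec{\beta}$, as required for IVm.

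There is no real obstacle here: the construction of the path is already essentially present in the PR proof, and continuity of $g$ is immediate from its polynomial form. The only point needing care is verifying that the intermediate vectors really lie in $A'$, which follows because the unrestricted domain on an atomless measure space of total measure $1$ realises every non-negative vector on $D_3$ summing to $1$ as some $\tau_{i,j}(\pi_{i,j}(e))$.
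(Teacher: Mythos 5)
Your proposal is correct and follows essentially the same route as the paper: reduce to IVm via Lemma \ref{ivmiia}, connect $\vec{\beta}$ to $\vec{\alpha}$ by the three-stage monotone path built from the shifts in equation \ref{iua:shifts}, and apply the intermediate value theorem to the continuous polynomial $g$ along that path. Your version merely spells out more explicitly the monotonicity of the path and the fact that all intermediate vectors are realised in $A'$, points the paper leaves implicit.
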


\begin{proof}
Again, by Lemma \ref{ivmiia} we need only check for condition IVm, which states that for $a_+, a_- \in A$ with $a_+ > a_-$, $f(a_+) = W$ and $f(a_-)=L$, there is an intermediate $a$ with $a_+ > a > a_-$ and $f(a)=T$. In terms of vectors in $A'$, we require for any $\vec{\alpha}_+$ with $g(\vec{\alpha}_+) > 0$ and $\vec{\alpha}_-$ with $g(\vec{\alpha}_-)<0$ where $\vec{\alpha}_+ > \vec{\alpha}_-$ (in other words, where $\vec{\alpha}_+$ is reached from $\vec{\alpha}_-$ by a sequence of the steps listed in \ref{iua:shifts}) that there exists an intermediate vector $\vec{\alpha}$ with $g(\vec{\alpha})=0$ and $\vec{\alpha}_+ > \vec{\alpha} > \vec{\alpha}_-$. But this is clear, because varying $\delta$ from $0$ causes $g$ to vary continuously, so by the Intermediate Value  Theorem there must be a point at which $g=0$, and this gives the required $\vec{\alpha}$.
\end{proof}

Thus $F$ is an SWF meeting all of our conditions. All that remains is to show that it is non-Borda.

\begin{lem} $F$ is strongly non-Borda. \end{lem}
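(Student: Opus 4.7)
My plan is in three parts. First, the Pareto principle (P), established earlier in the proof of Theorem \ref{IUA}, rules out the tie rule immediately, since the tie rule violates P. Second, by Lemma \ref{weightinglemma} and the anonymity of $F$, it suffices to show that $F$ is not weakly Borda with respect to either the positive unweighted Borda rule ($w \equiv 1$) or the negative unweighted Borda rule ($w \equiv -1$); by \eqref{eq:relativeborda}, the relative weight-based Borda score under these rules is $\pm D(\vec\alpha)$, where $D(\vec\alpha) := 2\alpha_2 + \alpha_1 - \alpha_{-1} - 2\alpha_{-2}$. A single weight vector $\vec\alpha \in A'$ satisfying $g(\vec\alpha) = 0$ and $D(\vec\alpha) \neq 0$ will then kill both rules simultaneously: $F$ returns $T$, whereas each Borda rule is decisive and therefore differs from $F$'s verdict.

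The central step is the construction of $\vec\alpha$. I propose $\vec\alpha = (\alpha_2, \alpha_1, \alpha_{-1}, \alpha_{-2}) = (1/13,\, 2/5,\, 34/65,\, 0)$. Substituting into \eqref{iua:relativescore} with $X = 2$ yields
\begin{equation}
g(\vec\alpha) = \frac{12}{5}\cdot\frac{2}{5} + \frac{169}{65}\cdot\left(-\frac{24}{65}\right) = \frac{4056}{4225} - \frac{4056}{4225} = 0,
\end{equation}
while $D(\vec\alpha) = 10/65 + 26/65 - 34/65 = 2/65 > 0$. Since $V$ is atomless and the domain is unrestricted, any rational probability vector on $D_3$ is realised as $\tau_{1,2}(\pi_{1,2}(e))$ for some measurable election $e$, so $\vec\alpha \in A'$. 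Thus $F$ disagrees with both the positive and the negative unweighted Borda rules on this $\vec\alpha$, completing the argument.

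The one non-routine step is locating this $\vec\alpha$: small random vectors almost always satisfy $\sgn g = \sgn D$, because the $X = 2$ linear-in-weights contribution to $g$ tends to dominate the quadratic part. The trick is to search not for a direct sign disagreement but for the zero-locus of $g$: fixing $\alpha_{-2} = 0$ and tuning $\alpha_{-1}$ against $\alpha_2$ traces out a curve on which $g$ vanishes while $D$ stays strictly positive, and any rational point on that curve gives the required $\vec\alpha$.
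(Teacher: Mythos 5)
Your proof is correct, and its outer skeleton (exclude the tie rule, reduce to the two unweighted rules via Lemma \ref{weightinglemma} and condition A, exhibit an explicit disagreement) matches the paper's, but the central witness is constructed dually. The paper works on the hyperplane $b(\vec{\alpha}) = 2\alpha_2+\alpha_1-\alpha_{-1}-2\alpha_{-2} = 0$: it argues that the polynomial $g$ is not divisible by $b$, so there is an interior point with $b=0$ and $g\neq 0$, and then perturbs in the $(\alpha_2,\alpha_1)$-coordinates to push $b$ to either sign while preserving the sign of $g$; the resulting decisive disagreements kill the positive and negative rules in turn, and the decisiveness of $F$ at that point rules out the tie rule for free. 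You instead locate an explicit rational point on the zero locus of $g$ at which $b\neq 0$ --- your vector $(1/13,\,2/5,\,34/65,\,0)$ checks out, since $24/25 = 4056/4225 = (169/65)(24/65)$ gives $g=0$ while $b = 2/65 > 0$, and it lies in $A'$ as a non-negative vector summing to $1$ --- so both unweighted rules are decisive there while $F$ ties, refuting weak Bordaness with respect to both at a single election with no perturbation or sign WLOG; the price is that you must exclude the tie rule separately, which you do correctly from the already-established Pareto property. Both arguments are sound: yours is fully explicit and checkable by hand, while the paper's is non-constructive but transfers more readily to situations where one only knows that $g$ is a polynomial not proportional to $b$ on the simplex.
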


\begin{proof}
Note that $g(\vec{\alpha})$ is a polynomial in the entries of $\vec{\alpha}$ and does not contain $b(\vec{\alpha}) = 2\alpha_2 + \alpha_1-\alpha_{-1}-2\alpha_{-2}$ as a factor; therefore $g$ is a non-zero polynomial on the domain given by $b=0$ and $\vec{\alpha} \geq 0$. There is a non-zero on the interior of this domain, representing a point $\vec{\alpha}_0$ where all $\alpha_t$ are positive, $g(\vec{\alpha}) \neq 0$ and $b(\vec{\alpha})=0$. Assuming without loss of generality that $g>0$, we set $\alpha'_j = \alpha_j$ for $j <0$, and write $\alpha'_2 = \alpha_2-\delta$ and $\alpha'_1 = \alpha_1+\delta$ with $\delta < \alpha_2$ small enough not to affect the sign of $g$. Then $b(\vec{\alpha}') <0$ while $g(\vec{\alpha}')>0$.

Clearly this value for $\vec{\alpha}$ is given by $\pi_{1,2} \circ \tau(e)$ for some election $e$, because $A' = \Image(\pi_{1,2} \circ \tau)$ is all non-negative vectors $\vec{\alpha}$ with $\sum_i \alpha_i = 1$. Therefore the positive unweighted Borda rule $B(e)$ and our SWF $F(e)$ give opposing rankings for $c_1$ and $c_2$, so $F$ is neither the positive unweighted Borda rule nor weakly Borda with respect to it.

An identical argument, setting $\alpha'_2 = \alpha_2+\delta$ and $\alpha'_1 = \alpha'_1 - \delta$, gives $g(\vec{\alpha}') > 0$ and $b(\vec{\alpha}')>0$, which precludes $F$ being the negative unweighted Borda rule or weakly Borda with respect to it. Thus $F$ is not weakly Borda with respect to any unweighted Borda rule.

Now by Lemma \ref{weightinglemma} we can further say that $F$ is not weakly Borda with respect to any Borda rule. Finally, the tie rule gives a tie where our SWF gives a decisive ranking at the $\vec{\alpha}$ given above, so $F$ is not the tie rule. This exhausts all possibilities, so $F$ is strongly non-Borda.
\end{proof}

This completes the proof.
\end{proof}

We get the finite case as a corollary.

\begin{thm}\label{FUA}
For large enough $n$, there exists a strongly non-Borda SWF on $V$ with $|V| = n$ and $C = \set{c_1,c_2,c_3}$ satisfying U, MIIA, A, N, P and PR.
\end{thm}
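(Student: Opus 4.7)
The plan is to reuse the construction of Theorem \ref{IUA} in the finite setting. With $|V| = n$, define $\tau(e)(r) = |e^{-1}(r)|$ and set $F_n = \Phi(G'_n \circ \tau)$, where $G'_n(c_i, \vec{\varepsilon}) = G'(c_i, \vec{\varepsilon}/n)$ rescales so that $\vec{\varepsilon}/n$ is a probability vector on $\mathfrak{R}$ and $G'$ is the polynomial of equation \ref{iua:fullscore} with $X = 2$. The verifications of U, A, MIIA, N and P carry over from Theorem \ref{IUA} verbatim: each proceeds purely through the algebraic form of the relative scoring polynomial $g$ of equation \ref{iua:relativescore}, and uses neither the continuity of $\nu$ nor any divisibility property of the coordinates of the normalised count vector $\vec{\alpha}$.

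The point requiring care is PR. In the proof of Theorem \ref{IUA} the monotonicity of $g$ under the elementary steps of equation \ref{iua:shifts} was secured by taking the step size $\delta$ arbitrarily small, so that the linear part of the change in $g$ dominated the quadratic corrections $\pm 3\delta^2$. In the finite setting, $\delta = 1/n$ is fixed, so these quadratic corrections are of fixed order $1/n^2$. Two of the three steps have non-negative quadratic term and strictly positive linear coefficient on their admissible region, and cause $g$ to strictly increase; only the third step carries the problematic $-3\delta^2$. Using $\sum_t \alpha_t = 1$ one rewrites its linear coefficient as $3 - 3\alpha_{-1} + 3\alpha_{-2}$, and the admissibility constraints $\alpha_{-2} \geq 1/n$ and $\alpha_{-1} \leq (n-1)/n$ bound it below by $6/n$. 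The net change in $g$ is then at least $(6/n)(1/n) - 3/n^2 = 3/n^2 > 0$, so PR holds for every $n \geq 1$.

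It remains to show that $F_n$ is strongly non-Borda, which is where the hypothesis ``for large enough $n$'' enters. By Lemma \ref{weightinglemma} we only need to exhibit elections on which $F_n$ differs from each of the two non-trivial unweighted Borda rules. Because the linear form $b(\vec{\alpha}) = 2\alpha_2 + \alpha_1 - \alpha_{-1} - 2\alpha_{-2}$ does not divide $g$ as a polynomial, $g$ is not identically zero on the hyperplane $\{b = 0\}$; we may therefore choose a rational vector $\vec{\alpha}_0$ in the interior of the probability simplex with $b(\vec{\alpha}_0) = 0$ and $g(\vec{\alpha}_0) \neq 0$. Small rational perturbations in opposite directions produce vectors $\vec{\alpha}_\pm$ with $b(\vec{\alpha}_+) > 0 > b(\vec{\alpha}_-)$ and $g(\vec{\alpha}_\pm)$ both of the common fixed sign of $g(\vec{\alpha}_0)$. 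For any $n$ that is a common multiple of the denominators of the coordinates of these perturbed vectors, both are realised by elections in $E$, and on those elections $F_n$ disagrees with one of the unweighted Borda rules; the tie rule is excluded by any decisive outcome. The main obstacle is therefore not the PR verification (which succeeds on the nose) but this realisability constraint, which is precisely what the hypothesis on $n$ encodes.
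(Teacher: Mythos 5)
Your overall strategy is the paper's: rescale the infinite construction of Theorem \ref{IUA} by setting $f'_n(\vec{\alpha}) = f'(\vec{\alpha}/n)$ and check that the conditions and the non-Borda property survive. Your PR verification is correct (the rewrite of the third step's linear coefficient as $3 - 3\alpha_{-1} + 3\alpha_{-2} \geq 6/n$ and the resulting bound $3/n^2$ on the net change all check out), but the caution is unnecessary: PRm for $F_n$ follows immediately from PRm for the infinite $F$, because the normalised finite count vectors form a subset of the infinite domain $A'$ on which the majorisation order restricts, and the monotonicity of the real polynomial $g$ along the continuous path from $\vec{\alpha}/n$ to $\vec{\beta}/n$ is a fact about $g$, not about which intermediate vectors are realisable by elections on $n$ voters. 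The paper simply inherits PR (and P, A, N, MIIA) from the infinite case, as you do for the other conditions.

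The genuine gap is in the non-Borda step. The theorem asserts the existence of such an SWF for \emph{all} sufficiently large $n$, and the paper's proof delivers exactly that: it takes an open ball $U$ on which $b$ and $g$ have opposite signs, fits inside $U$ a set of the form $T = \vec{\rho} + \set{\vec{\lambda} \mid \lambda_t \geq 0, \sum_t \lambda_t = 4/N}$, and shows that for \emph{every} $n \geq N$ the lattice $\frac{1}{n}\mathbbm{Z}^4$ meets $T$, by rounding three coordinates up to multiples of $1/n$ and absorbing the error in the fourth. Your argument instead fixes two rational witness vectors $\vec{\alpha}_\pm$ and realises them only for $n$ a common multiple of their denominators, so as written you have proved the statement only for $n$ ranging over a fixed arithmetic progression, not for all large $n$. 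The fix is exactly the paper's rounding argument (or any density-of-$\frac{1}{n}\mathbbm{Z}^4$-points argument applied to a fixed open subset of the simplex where $b$ and $g$ have the desired opposite signs); without it your realisability step does not match the quantifier in the statement. The remaining pieces --- using Lemma \ref{weightinglemma} to dispose of weighted Borda rules, excluding the tie rule by decisiveness, and handling the positive and negative unweighted rules with the two perturbations of opposite sign of $b$ --- agree with the paper.
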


\begin{proof}
For $n$ to be determined later, we let $F_n$ on $V$ with $|V|=n$ be given by the weight-based relative SWF $f'_n(\vec{\alpha})$, where for all $t \in D_3$ we have $\alpha_t = |\set{v \in V: \pi_{i,j}(e(v))= t}|$. We set $f'_n(\vec{\alpha}) = f'(\vec{\alpha}/n)$; in other words, this is the infinite SWF given above, acting on the $n$ voters as though they were $n$ equal parts of the infinite electorate $V$, all voting as blocks. All of the required conditions of the SWF (except for IV) are inherited from the infinite version. What is left is to show that $F_n$ is still strongly non-Borda.

In the proof that the infinite $F$ was strongly non-Borda, we found a vector $\vec{\alpha}$ for which $g(\vec{\alpha})$ and $b(\vec{\alpha})$ had opposite signs. By the continuity of both $b$ and $g$ in $\vec{\alpha}$ we have an open ball $U$ on which this is true. Then for large enough $N$ we can find a volume $T$ contained in $U$ of the form

\begin{equation}
T = \vec{\rho} + \set{\vec{\lambda} | \lambda_t \geq 0, \sum_t \lambda_t = 4/N}
\end{equation}

where $\sum_t \rho_t = 1 - 4/N$. Now for any $n\geq N$ we pick $\vec{\mu}$ such that $\rho_t \leq \mu_t \leq \rho_t + 1/N$ and $n \mu_t \in \mathbbm{Z}$ for $t \neq -2$, and $\mu_{-2} = 1 - \sum_{t \neq -2} \mu_t$. Now $\mu_{-2} \geq 1 - \sum_{t \neq -2} \rho_t - 3/N$, so $\mu_{-2} - \rho_{-2} \geq 1/N > 0$. Thus $\vec{\mu} > \vec{\rho}$, so $\vec{\lambda} = \vec{\mu} - \vec{\rho}$ has $\lambda_t \geq 0$. Additionally, $\sum_t \lambda_t = \sum_t \mu_t - \sum_t \rho_t = 4/N$. Hence $\vec{\mu} = \vec{\rho} + \vec{\lambda} \in T$. Finally, for $t \neq -2$ we have $n \mu_t \in \mathbbm{Z}$, and $n \sum_t \mu_t = n$, so $n \mu_{-2} \in \mathbbm{Z}$. Therefore $n\vec{\mu}$ is in the domain of $f'_n$.

By picking $e_+ \in \tau^{-1} \circ \pi_{i,j}^{-1}(n\vec{\mu})$, we have found an election on $n$ voters (for all $n \geq N$) where the positive Borda rule gives a decisive result and $F_n$ gives the opposite result. But a symmetrical argument shows that on a different election $e_-$, $F_n$ gives the opposite result to the negative Borda rule. Finally, $F_n$ is not weakly Borda with respect to any weighted Borda rule by Lemma \ref{weightinglemma}, and it is not the tie rule as $f'_n(n\vec{\mu})$ is decisive. So for large enough $n$, $F_n$ is a finite strongly non-Borda election, as required.

For completeness we follow this through with an explicit construction. Let $X=2$ and let $n=31$. Then for the comparison between $c_1$ and $c_2$ set $\alpha_2 = 0$, $\alpha_1 = 19$, $\alpha_{-1}= 4$ and $\alpha_{-2} = 8$. The difference between the Borda scores of $c_1$ and $c_2$ is $b(\vec{\alpha}) = 19 - 4 - 2 \cdot 8 = -1 <0$, so $c_2$ beats $c_1$ according to the Borda rule. However, the difference between their scores given by $F_{31}$ is $g((0,19/31,4/31,8/31)) = (65-31 \cdot 2)/31^2 > 0$, so $F_{31}$ ranks $c_1$ above $c_2$. Thus $F_{31}$ is not the positive unweighted Borda rule or the tie rule. But for $\alpha_2 = 31$, $\alpha_i = 0$ for $i \neq 2$, we have $g(\vec{\alpha}) > 0$ and $b(\vec{\alpha}) >0$, so $F_{31}$ is not the negative unweighted Borda rule either. Proceeding as in the general case above, we see that $F_{31}$ is strongly non-Borda as required.
\end{proof}

This completes our investigation into SWFs on the unrestricted domain with complete anonymity between voters; strongly non-Borda examples exist in both the finite and infinite cases due to pathological combinations of the sub-elections on the two Condorcet cycle domains.

\section{Condorcet cycle domain and anonymity}\label{CA}

In this section we prove that on a Condorcet cycle domain, the conditions A, N, MIIA and PR do indeed force the SWF to be the positive unweighted Borda or the tie rule. In the finite case, the PR condition is only needed to rule out the negative unweighted Borda rule. In the infinite case, measurability is sufficient to force $F$ to be an unweighted Borda rule, and PR guarantees that it is not negative, but without measurability there exist pathological counterexamples, even satisfying P. On the other hand, condition P is stronger in the finite cases and in the measurable case, ruling out everything except for the positive unweighted Borda rule.

In all of these cases, note that by conditions A, N and MIIA, and since the domain is a ballot domain,  $F$ must correspond to a unique weight-based relative SWF $f': A' \rightarrow \set{W,T,L}$. Now for all $i$, $A_{i, i+1}$ is the set of measurable functions from $V$ onto $\pi_{i,i+1}(\mathfrak{C}) = \set{1,-2}$; and $A'_{i, i+1}$ is the set of two-dimensional non-negative vectors $\vec{\alpha} = (\alpha_1, \alpha_{-2})$ where $\alpha_1 + \alpha_{-2} = 1$ for $|V|$ infinite or $\alpha_1+\alpha_{-2}=n$ for $|V| = n$. On the other hand, $A'_{i+1,i} = \varrho(A'_{i,i+1})$, and $A' = A'_{i,i+1} \sqcup A'_{i+1,i}$. Now since $\pi_{j,i}(F(e)) = -\pi_{i,j}(F(e))$, we have $f(\pi_{j,i}(e)) = -f(\pi_{i,j}(e))$, so the behaviour of $f'$ on $A'_{i,i+1}$ determines its behaviour on $A'_{i+1,i}$ by $f' = -f' \circ \varrho$, and therefore on all of $A'$. Therefore $F$ is uniquely determined by a function $g: I \rightarrow \set{W,T,L}$, where $I = [0,1]$ in the infinite case and $I = \set{0, \dots, n}$ in the case of $|V| = n$, by

\begin{equation}
\begin{aligned}\label{fprimefromg}
f'(\vec{\alpha}) = g(\alpha_{-2}) \quad \vec{\alpha} \in A'_{i,i+1} \\
f'(\vec{\alpha}) = -g(\alpha_2) \quad \vec{\alpha} \in A'_{i+1,i}
\end{aligned}
\end{equation}

In order to establish or disprove condition PRm we use the following lemma:

\begin{lem}\label{ccprm}
For any $V$, the SWF $F$ satisfies PRm if and only if $g$ is a non-increasing function.
\end{lem}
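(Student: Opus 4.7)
The plan is a direct unwinding of the definitions. By Lemma \ref{prmiia} applied to the increasing ballot $\mathfrak{C}$ (Lemma \ref{increasingballots}), PR and PRm are equivalent, so it suffices to prove the equivalence between PRm and ``$g$ is non-increasing''. Since PRm is a pointwise condition on the relative SWFs $f_{i,j}$, and since the anonymity lemma (Lemma \ref{amiia}) together with the MIIA and N conditions allow us to replace $f_{i,j}$ by the single weight-based function $f'$ on $A'$, I would state PRm in the equivalent form: for all $\vec{\alpha},\vec{\beta}$ lying in the same $A'_{i,j}$, $\vec{\alpha} \geq \vec{\beta}$ implies $f'(\vec{\alpha}) \geq f'(\vec{\beta})$.

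Next I would unpack the partial order on $A'_{i,i+1}$. A vector $\vec{\alpha} \in A'_{i,i+1}$ is supported on $\{1,-2\}$ with fixed total weight ($1$ in the infinite case, $n$ in the case $|V|=n$). The cumulative sums $\sum_{n\geq m}\alpha_n$ are all $0$ for $m\geq 2$, all equal to $\alpha_1$ for $-1 \leq m \leq 1$, and equal to the total weight for $m \leq -2$. Consequently $\vec{\alpha} \geq \vec{\beta}$ in $A'_{i,i+1}$ reduces to the single constraint $\alpha_1 \geq \beta_1$, equivalently $\alpha_{-2} \leq \beta_{-2}$. Using the formula $f'(\vec{\alpha}) = g(\alpha_{-2})$ from \eqref{fprimefromg}, PRm restricted to $A'_{i,i+1}$ becomes exactly the statement that $\alpha_{-2} \leq \beta_{-2} \Rightarrow g(\alpha_{-2}) \geq g(\beta_{-2})$, which is the definition of $g$ being non-increasing on $I$. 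This yields the forward implication immediately; it also establishes the converse on the pieces $A'_{i,i+1}$.

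To finish the converse I would check PRm on the pieces $A'_{i+1,i}$. Here a vector is supported on $\{2,-1\}$, an analogous cumulative-sum calculation shows $\vec{\alpha} \geq \vec{\beta}$ reduces to $\alpha_2 \geq \beta_2$, and $f'(\vec{\alpha}) = -g(\alpha_2)$. So if $g$ is non-increasing, $\alpha_2 \geq \beta_2$ gives $g(\alpha_2) \leq g(\beta_2)$, hence $-g(\alpha_2) \geq -g(\beta_2)$, i.e.\ $f'(\vec{\alpha}) \geq f'(\vec{\beta})$, as required. Finally, by Lemma \ref{nmiia} (domain separability of $\mathfrak{C}$ is Lemma \ref{separableballots}) the sets $A'_{i,i+1}$ for distinct $i$ are either disjoint or identical and carry the same $f'$, and similarly for $A'_{i+1,i}$; comparisons between the two types are never required because PRm only compares vectors within a single $A'_{i,j}$. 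So the per-piece verifications together give PRm.

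There is no real obstacle here; the only step demanding any care is correctly reading off the partial order from the definition of majorisation on $D_k^V$ (via the partial sums) and keeping track of the sign flip built into \eqref{fprimefromg} when one moves between $A'_{i,i+1}$ and $A'_{i+1,i}$. Everything else is bookkeeping.
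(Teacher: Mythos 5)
Your proof is correct and follows essentially the same route as the paper: both arguments reduce the majorisation order on $A'_{i,i+1}$ (resp.\ $A'_{i+1,i}$) to the single inequality $\alpha_{-2}\leq\beta_{-2}$ (resp.\ $\alpha_2\geq\beta_2$), apply equation \eqref{fprimefromg} with its sign flip, and observe that PRm on the piece $A'_{i,i+1}$ is literally the statement that $g$ is non-increasing, which also yields the converse (the paper just makes the witness vectors $(n-i,i)$, $(n-j,j)$ explicit). The opening appeal to Lemma \ref{prmiia} is unnecessary, since the lemma concerns PRm directly, but it does no harm.
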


\begin{proof}
Suppose first that $g$ is increasing. Then suppose that $\vec{\alpha}^+$ and $\vec{\alpha}^-$ are two relative elections in $A'_{i,i+1}$, where $\vec{\alpha}^+ > \vec{\alpha}^-$. Then $\alpha^+_{-2} < \alpha^-_{-2}$. But $f'(\vec{\alpha}) = g(\alpha_{-2})$, and $g$ is non-increasing, so $f'(\vec{\alpha}^+) \geq f'(\vec{\alpha}^-)$ as required for condition PRm. On the other hand, for $\vec{\alpha}^+ > \vec{\alpha}^-$ in $A'_{i+1,i}$, we have $\varrho(\vec{\alpha}^-) > \varrho(\vec{\alpha}^+)$ in $A'_{i,i+1}$, so as we have seen $f'(\varrho(\vec{\alpha}^-)) \geq f'(\varrho(\vec{\alpha}^+))$, so $f'(\vec{\alpha}^-) \leq f'(\vec{\alpha}^+)$ as required. All of $A'$ is made up of $A'_{i,i+1}$ and $A'_{i+1,i}$, so this completes the first direction of the lemma.

Now suppose $g$ is not non-increasing. Then there exists $i > j \in I$ such that $g(i) > g(j)$. Setting $n = |V|$ if $V$ is finite and $1$ otherwise, let $\vec{\alpha}^- = (n-i,i)$ and $\vec{\alpha}^+ = (n-j,j)$. Then $f'(\vec{\alpha}^-) = g(i) > g(j) = f'(\vec{\alpha}^+)$ but $\vec{\alpha}_- < \vec{\alpha}_+$ so condition PRm does not hold. Thus PRm holds if and only if $g$ is non-increasing, as required.
\end{proof}

First we deal with the case of $|V|$ finite.

\begin{thm}\label{FCA}
Any SWF on $V$ finite and $C = \set{c_1,c_2,c_3}$ satisfying CC, MIIA, A and N is an unweighted Borda rule. If P is also met, then the SWF must be the unweighted positive Borda rule. Alternatively, if PR is also met, then the SWF must be the unweighted positive Borda rule or the tie rule.
\end{thm}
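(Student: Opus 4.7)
The plan is to reduce $F$ to a single function $g:\{0,1,\dots,n\}\to\{W,T,L\}$ via (\ref{fprimefromg}) and then use Lemma \ref{relativetofull}(ii) to classify the admissible $g$. By Lemma \ref{separableballots} the ballot $\mathfrak{C}$ is separable, so Lemma \ref{namiia} yields a single weight-based relative SWF $f'$, encoded by $g$. If an election has $\beta_t$ voters submitting the cyclic permutation $c_tc_{t+1}c_{t+2}$, a direct computation gives the three relative outcomes as $g(\beta_2), g(\beta_3), g(\beta_1)$. Since $E$ is a ballot domain, every non-negative integer triple $(\beta_1,\beta_2,\beta_3)$ with $\beta_1+\beta_2+\beta_3=n$ arises, so Lemma \ref{relativetofull}(ii) becomes the purely combinatorial condition: for all $a+b+c=n$ with $a,b,c\geq 0$, the multiset $\{g(a),g(b),g(c)\}$ is consistent.

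I classify such $g$ by case analysis. The constants $g\equiv W$ and $g\equiv L$ are excluded by $(0,0,n)$, and $g\equiv T$ is the tie rule. Assume $g$ is non-constant. If $3\mid n$, the partition $(n/3,n/3,n/3)$ forces $g(n/3)=T$. I claim $g$ takes $T$ only there (or nowhere, if $3\nmid n$): any $k_0$ with $g(k_0)=T$ and $3k_0\neq n$ produces a second $T$-value at $n-2k_0$ from $(k_0,k_0,n-2k_0)$, and combined with any non-$T$ value $g(\ell)\in\{W,L\}$, the partitions $(k_0,\ell,n-k_0-\ell)$, $(\ell,\ell,n-2\ell)$, $(n-2k_0,\ell,2k_0-\ell)$ and their iterations propagate $W$- and $L$-values alongside $T$-values until some triple of the form $\{T,T,W\}$, $\{T,T,L\}$, $\{W,W,T\}$ or $\{L,L,T\}$ arises---unless in fact $g\equiv T$.

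Hence $g$ takes values in $\{W,L\}$ except for the forced $T$ at $n/3$ when $3\mid n$. Writing $P=g^{-1}(W)$ and $N=g^{-1}(L)$, consistency reduces to: neither $P$ nor $N$ contains a multiset of three elements (with repetition) summing to $n$. WLOG $0\in P$, whence $n\in N$. I prove by induction on $k<n/3$ that $\{0,1,\dots,k\}\subseteq P$: if $[0,k]\subseteq P$ but $k+1\in N$, then the triples $(p,p',n-p-p')$ for $p,p'\in[0,k]$ force $[n-2k,n]\subseteq N$; $(k+1,k+1,n-2k-2)$ forces $n-2k-2\in P$; $(n-2k-2,p,2k+2-p)$ forces $[k+1,2k+2]\subseteq N$; and iterating the ``mirror'' $(a,b)\mapsto n-a-b$ alternately between $P$ and $N$ makes the two expanding intervals overlap, producing a forbidden triple entirely in $P$ or entirely in $N$. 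The same construction applied with $p,p'\in[0,\lfloor n/3\rfloor]$ then covers $[\lceil n/3\rceil,n]$ with $N$, giving the positive unweighted Borda rule; starting from $0\in N$ instead gives the negative unweighted Borda rule.

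For the second clause, under P the election in which all $n$ voters submit $c_1c_2c_3$ forces $c_1\succ c_2$, that is, $g(0)=W$, leaving only the positive Borda rule. For the third clause, Lemma \ref{ccprm} shows PR forces $g$ to be non-increasing, which eliminates the negative Borda rule but admits both the positive Borda rule and the tie rule. The main obstacle is the inductive prefix characterization of $P$: the mirror-reflection argument requires several iterations of expanding $P$ and $N$, and one must check that the collision of expanding intervals always occurs within the valid range $[0,n]$ for every $k<n/3$.
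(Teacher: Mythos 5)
Your reduction is exactly the paper's: conditions A, N, MIIA on the ballot domain $\mathfrak{C}$ collapse $F$ to a single $g:\set{0,\dots,n}\rightarrow\set{W,T,L}$, the election with $\beta_t$ voters on each cyclic permutation yields the three relative outcomes $g(\beta_1),g(\beta_2),g(\beta_3)$, and the whole theorem reduces to classifying those $g$ for which every multiset $\set{g(a),g(b),g(c)}$ with $a+b+c=n$ is consistent. The paper isolates precisely this classification as Lemma \ref{numberlineelection} (applied with $\ell=m=n$) and proves it by a careful interval-doubling induction (the recursion $a_{i+1}=4a_i-m$, together with an ``observation'' about how a block of $X$-values and a block of $\set{X,T}$-values force a block of $-X$-values). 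That lemma is the entire mathematical content of the theorem; everything else is bookkeeping.

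This is where your proposal has a genuine gap: you replace that lemma with a sketch whose two key steps are asserted rather than proved. First, the claim that a $T$-value at any $k_0$ with $3k_0\neq n$ forces $g\equiv T$ is supported only by ``propagate $W$- and $L$-values alongside $T$-values until some inconsistent triple arises''; this is not an argument, and even the seed triple $(k_0,k_0,n-2k_0)$ you use to generate a second $T$-value does not exist when $k_0>n/2$. Second, and more seriously, the inductive prefix characterization of $P=g^{-1}(W)$ rests on ``iterating the mirror $(a,b)\mapsto n-a-b$ alternately between $P$ and $N$ makes the two expanding intervals overlap'' --- you yourself flag that one must check the collision happens inside $[0,n]$ for every $k<n/3$, and you do not check it. A single round of your mirror only yields $N\supseteq[k+1,4k+4]$, which produces a forbidden all-$N$ triple only when $k+1\geq n/12$; for small $k$ one needs to track how many rounds of geometric growth are required and verify each stays in range, which is exactly the delicate induction the paper carries out in Lemma \ref{numberlineelection}. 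The endgame (P forces $g(0)=W$ hence the positive rule; PR forces $g$ non-increasing via Lemma \ref{ccprm}, eliminating the negative rule) is correct and matches the paper. So the architecture is right, but the proof as written does not establish the classification of $g$; to complete it you should either prove the propagation rigorously or invoke the paper's Lemma \ref{numberlineelection}.
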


In order to prove this result and subsequent results across this paper
and forthcoming papers, 
we will use the following critical lemma.

\begin{lem}\label{numberlineelection}
For $\ell \geq 0$ let $I = \set{0, 1, \dots, \ell}$ and suppose $g: I \rightarrow \set{W,T,L}$. Moreover, let $m \in \mathbbm{Z}$ be such that $\ell \leq m \leq 2\ell$, and suppose that for all $i, j, k$ with $i+j+k = m$, the multiset $\set{g(i), g(j), g(k)}$ is consistent. Then $g$ is given by $g(i) = \varphi(\kappa(i-m/3))$ for some $\kappa \in \mathbbm{R}$.
\end{lem}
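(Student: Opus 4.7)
My plan is to show that the consistency axiom forces $g$ to be a step function at the threshold $m/3$ — or the constant function $T$ — which matches $\varphi(\kappa(\cdot - m/3))$ for $\kappa = 0$, $\kappa > 0$, or $\kappa < 0$ respectively. If $g \equiv T$, take $\kappa = 0$ and we are done. Otherwise, by the $W \leftrightarrow L$ symmetry (which preserves the family of consistent multisets and corresponds to $\kappa \mapsto -\kappa$), I may assume some $i^{*} \in I$ has $g(i^{*}) = W$ and aim for $\kappa > 0$.

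First I would extract three forcing identities from specific triples. When $3 \mid m$, the triple $(m/3, m/3, m/3)$ is valid (since $m/3 \leq 2\ell/3 \leq \ell$), and the multiset $\{g(m/3), g(m/3), g(m/3)\}$ is consistent only when $g(m/3) = T$. For any $i \in I$ with $m - 2i \in I$, the triple $(i, i, m-2i)$ produces the multiset $\{g(i), g(i), g(m-2i)\}$; inspection against the list of consistent multisets forces $g(m-2i)$ to be the \emph{flip} of $g(i)$ (swap $W$ and $L$, fix $T$). Similarly, $(i, j, m-i-j)$ with $g(i) = g(j) = W$ and $m-i-j \in I$ forces $g(m-i-j) = L$, and symmetrically for two $L$-values.

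The main technical step is the separation claim: $g^{-1}(W) \subseteq (m/3, \ell] \cap I$ (and symmetrically $g^{-1}(L) \subseteq [0, m/3) \cap I$). Let $i_0 := \min g^{-1}(W)$ and suppose for contradiction $i_0 \leq m/3$; the case $i_0 = m/3$ contradicts $g(m/3) = T$, so $i_0 < m/3$. In the ``interior'' range $m/4 \leq i_0 < m/3$, both triples $(i_0, i_0, m - 2i_0)$ and $(m-2i_0, m-2i_0, 4i_0 - m)$ lie within $I$: the first forces $g(m-2i_0) = L$, and the second then forces $g(4i_0 - m) = W$ at a point strictly smaller than $i_0$, contradicting minimality. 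For $i_0 < m/4$, the second triple exits $I$, and one must instead chain multiple forcings through boundary triples such as $(i_0, \ell, m-i_0-\ell)$, using values of $g$ previously pinned down at larger indices, to reach the same minimality contradiction. Applying the same argument to the flipped function $-g$ yields the symmetric separation for $L$.

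The final step rules out spurious $T$-values away from $m/3$. Suppose $g(i_T) = T$ with $i_T > m/3$; by the separation, $g$ takes values only in $\{W, T\}$ on $(m/3, \ell] \cap I$. For $i_T$ with $m-2i_T \in I$ and both neighbours in $I$, the triple $(m-2i_T, i_T - 1, i_T + 1)$ sums to $m$ and has multiset $\{T, g(i_T - 1), g(i_T + 1)\}$ with the latter two values in $\{W, T\}$; the only such consistent multiset is $\{T, T, T\}$, forcing both neighbours to $T$. Propagating along $(m/3, \ell] \cap I$ eventually reaches $i^{*}$ and contradicts $g(i^{*}) = W$. Combining the separation, the $T$-exclusion, and the identity at $m/3$, we recover exactly $g(i) = \varphi(\kappa(i - m/3))$ for some $\kappa > 0$. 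The main obstacle is the boundary analysis: both in the separation step when $i_0$ is close to $0$ (so the standard triples exit $I$) and in the $T$-propagation step when $i_T$ is close to $0$ or $\ell$ or when $i_T > m/2$. In these situations one must use ``stretched'' triples involving the extreme indices of $I$ and carefully chain several forcings, leading to a case split based on the position of the relevant index relative to $m/4$, $m/3$, $m/2$, and $(m-\ell)/2$.
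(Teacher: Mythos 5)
Your symmetry reduction is broken, and this makes your central separation claim false as stated. Knowing only that $g$ attains the value $W$ somewhere does not determine the sign of $\kappa$: the function $g(i)=\varphi(\kappa(i-m/3))$ with $\kappa<0$ also attains $W$ (at every $i<m/3$, and $i=0$ is always such a point when $m\geq 1$), yet for this $g$ we have $g^{-1}(W)=[0,m/3)\cap I$, the exact opposite of your claimed inclusion $g^{-1}(W)\subseteq(m/3,\ell]\cap I$. So the separation claim cannot be proved from your normalisation; it is simply not true for that $g$. Concretely, your minimality argument never produces a contradiction for this $g$ because $i_0=\min g^{-1}(W)=0$ always lands in the ``boundary'' case $i_0<m/4$ that you defer, and the deferral is hiding the fact that there is no contradiction to be found there. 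To repair this you must anchor the case split at a specific point rather than at the mere existence of a $W$-value: for instance set $X=g(n_-)$ where $n_-$ is the largest integer below $m/3$, and prove that $g\equiv X$ on $[0,n_-]$ and $g\equiv -X$ on $[n_+,\ell]$, which is exactly how the paper's proof is organised.

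Even granting a corrected normalisation, the step you defer (``chain multiple forcings through boundary triples \dots leading to a case split'') is precisely where the substance of the lemma lies. The paper's proof is an explicit, iterated version of your two-triple bootstrap: from $g=X$ on $[a_i,n_-]$ it deduces $g=-X$ on $[n_+,m-2a_i]$ and then $g=X$ on $[4a_i-m,n_-]$, and the recursion $a_{i+1}=4a_i-m$ strictly decreases (since $a_i<m/3$) until it exits $I$, at which point the whole of $[0,n_-]$ is covered; the hypotheses $\ell\leq m\leq 2\ell$ and the careful handling of $n_\pm$ when $m$ is not divisible by $3$ are exactly what keep the intermediate triples inside $I$. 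Your interior computation with $(i_0,i_0,m-2i_0)$ followed by $(m-2i_0,m-2i_0,4i_0-m)$ is one step of that induction, but you neither carry out the iteration nor verify membership in $I$ along the way. As written, the detailed portion of your argument rests on a false reduction, and the hardest portion is left as an acknowledged obstacle rather than proved.
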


\begin{proof}

Suppose first that $m=3n$ for some integer $n$. Now $\ell/3 \leq n \leq 2\ell/3$ so $n \in I$. The multiset $\set{g(n),g(n),g(n)}$ is consistent, but it has only one distinct element, so it must be $g(n)=T$.

On the other hand, for any $m$ we let $n_- = a_0$ be the largest integer less than $m/3$, and we let $n_+$ be the smallest integer greater than $m/3$. We let $g(n_-)=X_-$. Now either $3 \mid m$, in which case $n_- + 1 = m/3 = n_+ - 1$, or $n_+ - m/3 = 1/3$ or $m/3 - n_- = 1/3$. In the first case, we set $b'_0 = m/3$. In the second case, we note that by taking the triple $\set{g(n_-), g(n_+), g(n_+)}$ we have $g(n_+) = -X_-$; then we set $b'_0 = n_-$. In the third case we also set $b'_0 = n_-$.

We observe that if $g(i) = X$ for $a \leq i \leq b$ with $a, b \in \mathbbm{Z}$, and $g(i') \in \set{X, T}$ for $a' \leq i' \leq b'$ with $a',b' \in \mathbbm{Z}$, then $g(j) = -X$ for $m-b-b' \leq j \leq m-a-a'$. Indeed, given $j$, we take $i$ and $i'$ such that $i+i' = m-j$. This is possible as $i+i'$ can range in value from $a+a'$ to $b+b'$. Then $g(i)= X$ and $g(i')\in \set{X,T}$, and $i+i'+j=m$ so $\set{g(i),g(i'),g(j)}$ is consistent; this forces $g(j) = -X$.

Now we carry out the following induction. Suppose that for $a_i \leq t \leq n_- = a_0$ we have $g(t)= X_-$ and for $a_i \leq t \leq b'_0$ we have $g(t) \in \set{X,T}$. (This holds for $i=0$ as a base case). Then applying the observation above, we find that $g(t)=-X_-$ for $m-n_- - b'_0 \leq t \leq m- 2a_i$. In the case where $n_+ - m/3 = 1/3$, we note that $m - n_- - b'_0 = m-2n_- = n_+ +1$, but we know that $g(n_+)=-X_-$, so $g(t)=-X_-$ for $n_+ \leq t \leq m-2a_i$. In the other two cases we have $m - n_- - b'_0 = n_+$, so in all cases we can write $g(t)=-X_-$ for $n_+ \leq t \leq m-2a_i$.

Now we reapply the observation to find that for $4a_i - m \leq t \leq m - 2n_+$ we have $g(t) = X_-$. Now $m-2n_+ \geq n_- -1$ in all three cases, so we can extend this to say that $g(t)=X_-$ for $4a_i - m \leq t \leq n_-$. Setting $a_{i+1} = 4a_i - m$, we know that $a_i \leq a_0 < m/3$, so $3a_i < m$ and $a_{i+1} = 4a_i-m < a_i$. Therefore each step of the induction extends downwards the range of integers up to $n_-$ on which $g$ takes the value $X_-$. This terminates when $a_j <0$, in which case we find that for $0 \leq t \leq n_-$, $g(t)= X_-$; then applying the observation one more time we find that for $n_+ \leq t \leq m$ we have $g(t) = -X_-$.

Therefore $g$ is constant on all $i$ with $i < m/3$, and on all $j$ with $j > m/3$, and $g(n_-) = - g(n_+)$. Finally $g(m/3)=T$ if $3 \mid m$, so $f$ is given by $\varphi$ as described.
\end{proof}

We now proceed to the proof of Theorem \ref{FCA}.

\begin{proof}[Proof of Theorem \ref{FCA}]

Suppose that $F$ is a SWF on $V$ with $|V|=n$ satisfying CC, MIIA, A and N. $F$ corresponds to a function $g: I \rightarrow \set{W,T,L}$ where $I = \set{0, 1, \dots, n}$, such that $f'$ is determined from $g$ by equation \ref{fprimefromg}, and for all $c_i$ and $c_j$ we have $f' \circ \pi_{i,j} \circ \tau = \pi_{i,j} \circ F$.

Now let $\ell = n$ and $m = n$. For $i, j, k$ with $i+j+k = m$, we can define $\vec{\varepsilon} \in E'$ by $\varepsilon_{c_1 c_2 c_3} = i$, $\varepsilon_{c_2 c_3 c_1} = j$ and $\varepsilon_{c_3 c_1 c_2} = k$. Then

\begin{equation}
\begin{aligned}
\pi_{1,2}(F(e)) &= f(\pi_{1,2}(e)) \\
&= f'(\pi_{1,2}(\vec{\varepsilon})) \\
&= f'(i+k,j) = g(j)
\end{aligned}
\end{equation}

Similarly $\pi_{2,3}(F(e)) = g(k)$ and $\pi_{3,1}(F(e))=g(i)$. In order for $F$ to be a weak ordering, we require the multiset $\set{\pi_{1,2}(F(e)), \pi_{2,3}(F(e)), \pi_{3,1}(F(e))} = \set{g(i),g(j),g(k)}$ to be consistent. Therefore $g$ must satisfy the conditions of Lemma \ref{numberlineelection} and must be of the form $g(i) = \varphi(\kappa(i-n/3))$. This corresponds to $f'$ defined by $f'(\vec{\alpha}) = \varphi(-\kappa(\alpha_1 - 2 \alpha_2))$, and in turn the Borda rule $B_w$ with constant weight $w \equiv -\kappa$. Since $\pi_{i,i+1}(F(e))$ and $\pi_{i,i+1}(B_w(e))$ agree for all $i$ and $e$, and since $\pi_{i+1,i}$ is determined for both SWFs by $\pi_{i,i+1}$, we have that $B_w$ and $F$ agree for all $e$, so they are the same SWF. Therefore $F$ is an unweighted Borda rule, concluding the first part of the theorem.

If P is met, then we know that $g(n) = L$, so $\kappa <0$, so $w >0$ and the SWF is the positive Borda rule. However, both the positive rule and the tie rule satisfy PR. What is left is to demonstrate that the negative rule does not satisfy PR. Since $\mathfrak{C}$ is an increasing ballot, we check PRm instead.

Consider the vectors $\vec{\alpha}_- = (0,n)$ and $\vec{\alpha}_+ = (n,0)$ in $A'_{i,i+1}$. We have $\vec{\alpha}_+ > \vec{\alpha}_-$, but if $F$ is the negative Borda rule then for some $\kappa>0$ we have $f'(\vec{\alpha}_-) = \varphi(2\kappa n/3) = W$, and $f'(\vec{\alpha}_+) = \varphi(-\kappa n/3)=L$. Therefore conditon PRm doesn't hold, and so neither does PR. Therefore adding the PR condition rules out the negative case and leaves only the positive Borda rule or the zero-weighted Borda rule under which all results are three way ties.
\end{proof}

We use this result to establish the infinite version of the theorem under measurability assumptions.

\begin{defn}[Measurability for anonymous SWFs]\label{def:measurable}
An anonymous SWF $F = F' \circ \tau$ is said to be measurable if $F'$ is a measurable function from $\set{\vec{\varepsilon} \in \mathbbm{R}_{\geq 0}^{\mathfrak{R}} | \sum_r \varepsilon_r = 1}$, equipped with the Lebesgue measure, onto $\mathfrak{R}_=$ equipped with the discrete measure.
\end{defn}

\begin{thm}\label{ICAmeasurable}
Any measurable SWF on $V$ infinite and $C = \set{c_1,c_2,c_3}$ satisfying CC, MIIA, A and N is an unweighted Borda rule.
\end{thm}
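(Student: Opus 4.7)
The plan is to reduce the SWF to a single measurable function $g: [0,1] \to \set{W, T, L}$, pin down $g$ on the rationals using the finite-grid argument from Lemma~\ref{numberlineelection}, and then leverage measurability via the Lebesgue density theorem to extend the identification to almost all of $[0,1]$.

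First I would use A, N, MIIA together with Lemma~\ref{separableballots} (which says $\mathfrak{C}$ is separable) and Lemma~\ref{namiia} to reduce $F$ to the function $g$ described in~\eqref{fprimefromg}. Measurability of $F$ gives measurability of $g$, since $g(\alpha)$ is obtained by composing $F'$ with the affine inclusion $\alpha \mapsto (\varepsilon_{c_1 c_2 c_3} = 1-\alpha,\, \varepsilon_{c_2 c_3 c_1} = \alpha)$ and the finite-codomain projection $\pi_{1,2}$. By Lemma~\ref{relativetofull} the consistency constraint holds for every real triple $(i, j, k) \in [0,1]^3$ with $i + j + k = 1$. Next, for each positive integer $n$ I would apply Lemma~\ref{numberlineelection} to the restriction of $g$ to $\set{0, 1/n, \dots, 1}$ (with $\ell = m = n$) to get $g(i/n) = \varphi(\kappa_n(i/n - 1/3))$, and compatibility across nested grids forces a common sign, so on $\mathbbm{Q} \cap [0,1]$, $g$ is one of $\varphi(\cdot - 1/3)$, $\varphi(1/3 - \cdot)$, or the constant $T$, corresponding respectively to the positive, negative, and tie unweighted Borda rules.

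Assume without loss of generality the positive case, and write $A = g^{-1}(W)$, $B = g^{-1}(T)$, $C = g^{-1}(L)$. The third and main step is to show that $A \cap [0, 1/3)$, $C \cap (1/3, 1]$ and $B \setminus \set{1/3}$ are all Lebesgue-null. For $A \cap [0, 1/3)$, a Lebesgue density point $x_0 < 1/3$ would give $X := A \cap (x_0 - \epsilon, x_0 + \epsilon)$ of measure close to $2\epsilon$, so $X+X$ (by the standard continuity-of-convolution argument) contains an open interval $I$ near $2x_0 < 2/3$; consistency of $\set{W, W, g(1-y)}$ for $y \in I$ forces $1 - I \subseteq C$, but $1 - I$ is an open interval in $(1/3, 1]$ and must contain a rational $q$ with $g(q) = W$, a contradiction. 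For $C \cap (1/3, 1]$ the symmetric sumset argument works directly when the density point lies in $(1/3, 1/2]$; for $x_0 \in (1/2, 1)$ it breaks, so I would instead pair $X \subseteq C$ with a rational $q \in [\max(0, 2/3 - x_0),\, \min(1/3, 1-x_0)) \cap \mathbbm{Q}$ (non-empty throughout $(1/3, 1)$), and consistency of $\set{L, L, g(1-y-q)}$ places $1 - q - X \subseteq A$ as a positive-measure subset of $[0, 1/3)$, contradicting the preceding step. For $B \setminus \set{1/3}$, a density point $x_0 \neq 1/3$ paired with a rational $q$ having $g(q) \neq T$ forces $g(1 - x_0 - q) = -g(q)$ (the only consistent multiset containing both $T$ and a non-$T$ value is $\set{W, T, L}$); a careful choice of $q$ lands $1 - x_0 - q$ on the wrong side of $1/3$, violating one of the two preceding null statements.

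The main obstacle is the case $x_0 \in (1/2, 1]$ in the middle step: the direct sumset argument fails because $X + X \subseteq (1, 2]$ does not meet $[0, 1]$, and one must route around this by a carefully chosen single rational partner to send the contradiction back to the first step. Granting the three null-set conclusions, $g$ coincides with $\varphi(\cdot - 1/3)$ almost everywhere, so $f'$ agrees with the relative SWF of the positive unweighted Borda rule on a set of full measure, and hence $F$ is the positive unweighted Borda rule.
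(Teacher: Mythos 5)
Your overall strategy --- reduce to a measurable $g:[0,1]\to\set{W,T,L}$ subject to consistency on all real triples summing to $1$, pin $g$ down on $\mathbbm{Q}\cap[0,1]$ via Lemma \ref{numberlineelection} on the grids $\set{i/n}$, and then use Lebesgue density points plus Steinhaus-type sumset arguments to kill positive-measure exceptional sets --- is a genuinely different route from the paper's. The paper instead proves Lemma \ref{measurable}: it applies Lemma \ref{numberlineelection} to grids of \emph{every real} spacing $\delta$, deduces that the level sets of $g$ are invariant under positive rational dilations (and that $g$ is odd about $1/3$), and then transports a single high-density interval by dilation so as to write each individual $x$ as $y+(x-y)$ with both summands in the majority class; this determines $g(x)$ at \emph{every} point in one pass. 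Your three null-set arguments are essentially sound (including the rational-partner detour for a density point of $g^{-1}(L)$ in $(1/2,1)$, though you should take $q$ strictly above $2/3-x_0$ so that $1-q-X$ lands strictly below $1/3$), but the route you chose leaves two genuine gaps.

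First, the ``without loss of generality'' does not cover the tie case. The positive and negative cases are exchanged by the consistency-preserving symmetry $W\leftrightarrow L$, but if $g\equiv T$ on the rationals your three target sets are the wrong ones: ``$B\setminus\set{1/3}$ is null'' is exactly what you must \emph{not} try to prove there. You would need a separate (similar, but not identical) argument showing $g^{-1}(W)$ and $g^{-1}(L)$ are null and then empty, so that $F$ is the tie rule $w\equiv 0$. Second, and more seriously, the final inference ``$g=\varphi(\cdot-1/3)$ almost everywhere, hence $F$ is the positive unweighted Borda rule'' is invalid as stated. The theorem asserts that $F$ \emph{is} an unweighted Borda rule, and since $\nu$ is atomless and normalised, every $\alpha\in[0,1]$ is realised as $\alpha_{-2}$ for some legal election; if $g$ disagrees with $\varphi(\cdot-1/3)$ at even one point, $F$ and the Borda rule disagree on some election. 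The fix is short but must be written: for $x\neq 1/3$ choose $y$ with both $y$ and $1-x-y$ in the full-measure set and on the appropriate side of $1/3$ (e.g.\ $y$ near $(1-x)/2$), so that consistency of $\set{g(x),g(y),g(1-x-y)}$ forces $g(x)$, and handle $x=1/3$ by the triple $(1/3,1/3,1/3)$. The paper's formulation of Lemma \ref{measurable} avoids this issue entirely because its density argument already outputs the value of $g$ at every point.
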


We will use the following Lemma, which will also be used in subsequent papers:

\begin{lem}\label{measurable}
Let $-a < 0 < b$ with $2a \geq b$ and $2b \geq a$, and let $f$ be a measurable function from $[-a,b]$ to $\set{W,T,L}$. Suppose that for any $x_1,x_2,x_3 \in [-a,b]$ with $x_1+x_2+x_3=0$, the multiset $\set{f(x_i)}_i$ is consistent. Then $f(x) = \varphi(\kappa x)$ for some real number $\kappa$.
\end{lem}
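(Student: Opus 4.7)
My plan is to mimic the structure of Lemma~\ref{numberlineelection}, using measurability to lift combinatorial consistency arguments to statements about positive-measure sets, and then using consistency itself to promote almost-everywhere statements back to everywhere. Three preliminaries set things up: applying consistency to the triple $(0,0,0)$ forces $f(0)=T$; the hypotheses $2a\geq b$ and $2b\geq a$ ensure $-x/2\in[-a,b]$ for every $x\in[-a,b]$, so the triple $(x,-x/2,-x/2)$ is always valid; and a case check against the four consistent multisets shows that this triple's consistency is equivalent to the self-similarity $f(x)=-f(-x/2)$ (under $-W=L$, $-T=T$, $-L=W$), which I will use repeatedly.

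If $f^{-1}(\{W,L\})$ has Lebesgue measure zero, then $f^{-1}(T)$ has full measure in $[-a,b]$; for any $x_1\in[-a,b]$ almost every $x_2$ satisfies $x_2,-x_1-x_2\in f^{-1}(T)$, and the triple $\{f(x_1),T,T\}$ forces $f(x_1)=T$, so $f\equiv T$ and $\kappa=0$ works. Otherwise, after possibly swapping $f(x)$ with $f(-x)$, I may assume $f^{-1}(W)\cap(0,b]$ has positive measure; by the self-similarity, $f^{-1}(L)$ then has positive measure as well.

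The central step is to show $f^{-1}(W)\supseteq(0,b]$ up to a null set, and correspondingly $f^{-1}(L)\supseteq[-a,0)$ up to a null set. Suppose for contradiction that $f^{-1}(L)\cap(0,b]$ or $f^{-1}(T)\cap(0,b]$ has positive measure. The consistency condition says that whenever $x_1,x_2\in f^{-1}(W)$ with $x_1+x_2\leq a$ (so $-(x_1+x_2)\in[-a,0)$), the multiset $\{W,W,f(-(x_1+x_2))\}$ must be consistent, forcing $f(-(x_1+x_2))=L$; applying Steinhaus's theorem to the positive-measure set $f^{-1}(W)\cap(0,b]$ then produces an interval contained in $f^{-1}(L)$. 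Analogous observations for pairs of $L$- or $T$-values generate further intervals in $f^{-1}(W)$ and $f^{-1}(T)$, and iterating this process produces nested intervals in $f^{-1}(W)$ and $f^{-1}(L)$ that must eventually overlap, yielding the contradiction. The hypotheses $2a\geq b$ and $2b\geq a$ are crucial here, ensuring that the iteratively-generated intervals remain inside $[-a,b]$ at every stage rather than being cut off before they meet.

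To promote the almost-everywhere conclusion to every point: for any $x_0\in(0,b]$, one has $-x_0/2\in[-a,0)$, and almost every small $\epsilon\neq0$ makes both $f(-x_0/2+\epsilon)=f(-x_0/2-\epsilon)=L$, whence the triple $(x_0,-x_0/2+\epsilon,-x_0/2-\epsilon)$ forces $f(x_0)=W$ by consistency of $\{f(x_0),L,L\}$. A symmetric argument on $[-a,0)$ then yields $f(x)=\varphi(\kappa x)$ for any positive $\kappa$. The main obstacle throughout will be the Steinhaus iteration in the central step: one must carefully track how intervals in $f^{-1}(W)$ and $f^{-1}(L)$ propagate under the consistency condition, and verify that with the hypotheses on $a$ and $b$ they eventually cover and overlap inside $[-a,b]$ rather than escaping the domain.
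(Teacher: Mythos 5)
Your preliminaries are sound: $f(0)=T$, the identity $f(x)=-f(-x/2)$ (valid on all of $[-a,b]$ precisely because $2a\geq b$ and $2b\geq a$), the disposal of the $f\equiv T$ case, and the final promotion from an almost-everywhere statement to an everywhere statement via the triple $(x_0,-x_0/2+\epsilon,-x_0/2-\epsilon)$ all check out, and that last move is essentially the paper's own closing step. (Minor point: the reflection in your WLOG can produce $f=\varphi(\kappa x)$ with $\kappa<0$, e.g.\ $f=\varphi(-x)$ satisfies every hypothesis, so the conclusion should not be stated as ``any positive $\kappa$''.) The genuine gap is the central step. ``Iterating Steinhaus until the $W$-intervals and $L$-intervals overlap'' is not yet an argument, and there are two concrete obstructions you have not addressed. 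First, Steinhaus applied to $A=f^{-1}(W)\cap(0,b]$ gives an interval inside $A+A\subseteq(0,2b]$, but the constraint on $\set{W,W,f(-(x_1+x_2))}$ only bites when $x_1+x_2\leq a$; if $A$ is concentrated in $(a/2,b]$ (possible whenever $a<2b$), the Steinhaus interval may lie entirely in $(a,2b]$ and you learn nothing. Second, and more seriously, every move you describe --- summing two points of an interval, negating, applying the self-similarity --- rescales an interval $(u,v)$ by a factor of $\pm 2$, $\pm 1$ or $\pm 1/2$, and all of these preserve the ratio $v/u$. So starting from a $W$-interval and an $L$-interval of the same ratio that do not overlap, the iteration produces dilates that need never overlap before escaping $[-a,b]$; the claim that they ``must eventually overlap'' is exactly the hard point and is asserted rather than proved.

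The paper circumvents this with a different key idea that your sketch is missing: before invoking measurability at all, it applies the discrete Lemma~\ref{numberlineelection} to the arithmetic progression $\set{n\delta}$ for every $\delta>0$, concluding that each level set of $f$, restricted to $[-c,c]$ with $c=\min(a,b)$, is invariant under \emph{all} positive rational dilations about $0$, and that $f(-x)=-f(x)$. Measurability then enters exactly once: some level set $f^{-1}(X)$ has positive measure, hence density greater than $3/4$ on some interval $(s,t)$, and the dilation invariance slides that high-density interval to $(rs,rt)$ for $r$ rational and arbitrarily close to $x/(s+t)$, so that $f^{-1}(X)$ and $x-f^{-1}(X)$ both have density greater than $1/2$ on a common interval and therefore intersect; the triple $(y,x-y,-x)$ then determines $f(x)$ for every $x\in[-c,c]$, and the self-similarity extends this to $[-a,b]$. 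The dilation invariance is precisely what breaks the ratio-preservation obstruction above; if you prove it first (via the discrete lemma), your density argument collapses into the paper's single non-iterative step.
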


\begin{proof}

First note that setting $x_i=0$ for all $i$ gives that the multiset $\set{f(0),f(0),f(0)}$ is consistent, so $f(0)=T$.

Next we find the following discrete copies of the result sitting within $[-a,b]$:

\begin{lem}
For any $\delta > 0$ there is some $X \in \set{W,T,L}$ such that for all $n \in \mathbbm{Z}$ with $0 < n < \min(a,b)/\delta$, we have $f(n\delta) = X$ while $f(-n\delta) = -X$.
\end{lem}

\begin{proof}
Let $c = \min(a,b)$. Let $N = \lfloor  c/\delta \rfloor$, let $\ell = 2N$, $m=3N$ and let $I = \set{0,1,\dots,\ell}$. Now define $g: I \rightarrow \set{W,T,L}$ by $g(k) = f(\delta(k-N))$. Since $k-N$ ranges from $-N$ to $N$, it is always between $\pm c/\delta$, so $\delta(k-N)$ is between $-a$ and $b$ and $f(\delta(k-N))$ is defined. But for $i, j, k$ with $i+j+k=m$, we have $g(i)+g(j)+g(k) = f(\delta(i-N)) + f(\delta(j-N)) + f(\delta(k-N))$. Then since $\delta(i-N) + \delta(j-N) + \delta(k-N) = 0$, we are given that $\set{g(i),g(j),g(k)}$ is consistent. Applying Lemma \ref{numberlineelection} gives that $g(i) = \varphi(\kappa(i-N))$ for some $\kappa$, so $f(n \delta) = \varphi(\kappa n)$, proving the Lemma.
\end{proof}

Consequently we find that for any $x, y \in [-c,c]$ with $x/y \in \mathbbm{Q}_{>0}$, $f(x)=f(y)$; indeed, $x/y = p/q$ with $p, q \in \mathbbm{Z}_{>0}$ and take $\delta = y/q = x/p$. Then $p$ and $q$ are both less than $c/\delta$, so $f(p\delta) = f(q\delta)$. In other words, restricting $f$ to $[-c,c]$, $f^{-1}(W)$, $f^{-1}(T)$ and $f^{-1}(L)$ are invariant under positive rational dilations around $0$. Moreover, by setting $\delta = x$, we find that $f(x)=-f(-x)$.

Now since $f$ is measurable we can consider $\mu_X = \mu(f^{-1}(X) \cap [0,c])$ for each $X \in W, T, L$; then $\mu_W+\mu_T+\mu_L = c$ so we can take some $X$ for which $\mu_X >0$. Therefore there exists some interval $(s,t) \subset [0,c]$ on which the density of $f^{-1}(X)$ is more than $3/4$ - that is, $\mu((s,t) \cap X) > 3(t-s)/4$.

Now for any $x \in [0,c]$, we can find positive rational numbers $r$ arbitrarily close to $x/(s+t)$. We restrict ourselves to $r$ sufficiently close to guarantee $rs, rt \leq c$. Now $f^{-1}(X)$ has density more than $3/4$ on $[rs,rt]$, and we choose $r$ to be close enough to $x/(s+t)$ to ensure that it has density more than $1/2$ on $[rs,x-rs]$ as well (since $x-rs$ tends to $rt$ as $r$ approaches $x/(s+t)$).

Now $f^{-1}(X) \cap (x - f^{-1}(X))$ is the intersection of two sets of density more than $1/2$ on $[rs,x-rs]$, so it is non-empty and contains some value $y$. We have $f(y)=X$ and $f(x-y)=X$. Then the triple $y, x-y, -x$ sums to zero and falls within $[-a,b]$, so by assumption we have $\set{f(y), f(x-y),f(-x)}$ consistent, so $f(-x)=-X$ and $f(x)=X$. But this was true for all $x \in [0,c]$. Similarly $f(x)=-X$ for all $x \in [-c,0]$.

Finally, for any $y \in [-a,b]$, the triple $y, -y/2, -y/2$ sums to zero and falls within $[-a,b]$ (by the inequalities on $a$ and $b$), so $\set{f(y),f(-y/2),f(-y/2)}$ is consistent, and $f(y) = -f(-y/2)$. But $-y/2 \in [-c,c]$ so $f(-y/2)$ is $X$ if $y<0$ and $-X$ if $y>0$; then $f(y)=-X$ if $y<0$ and $X$ if $y>0$, completing the Lemma.
\end{proof}

We are now ready to proceed to the proof of the measurable case.

\begin{proof}[Proof of Theorem \ref{ICAmeasurable}]
As in the finite case, suppose that $F$ is a SWF on $V$ infinite satisfying CC, MIIA, A and N. $F$ corresponds to a function $g: I \rightarrow \set{W,T,L}$ where $I = [0,1]$, such that $f'$ is determined from $g$ by equation \ref{fprimefromg}, and for all $c_i$ and $c_j$ we have $f' \circ \pi_{i,j} \circ \tau = \pi_{i,j} \circ F$. $F$ is  measurable if and only if $g$ is measurable. 

Also as in the finite case, for any $i, j, k \in [0,1]$ with $i+j+k=1$ we can set $\vec{\varepsilon} \in E'$ by $\varepsilon_{c_1 c_2 c_3} = i$, $\varepsilon_{c_2 c_3 c_1} = j$ and $\varepsilon_{c_3 c_1 c_2} = k$. Then $\pi_{1,2}(F(e)) = g(j)$, $\pi_{2,3}(F(e)) = g(k)$ and $\pi_{3,1}(F(e))=g(i)$. In order for $F$ to be a weak ordering, we require the multiset $\set{\pi_{1,2}(F(e)), \pi_{2,3}(F(e)), \pi_{3,1}(F(e))}$ to be consistent. This means that $\set{g(i),g(j),g(k)}$ must be consistent.

We define $h(i)=g(i+1/3)$ for $i \in [-1/3,2/3]$. Then the conditions for Lemma \ref{measurable} apply to $h$, so $h(x) = \varphi(\kappa x)$, or $g(i) = \varphi(\kappa (x-1/3))$, and $f'(\vec{\alpha}) = \varphi((-\kappa/3)\cdot(\alpha_1 - 2\alpha_{-2}))$, so $F$ is an unweighted Borda rule, as required.
\end{proof}

This immediately gives Maskin's result:

\begin{cor}[Maskin]\label{ICA}
Any SWF on $V$ and $C = \set{c_1,c_2,c_3}$ satisfying CC, MIIA, A, N and PR is the positive unweighted Borda rule or the tie rule.
\end{cor}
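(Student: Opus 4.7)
The plan is to split into cases based on whether $V$ is finite or infinite. The finite case is immediate from Theorem \ref{FCA}, which explicitly gives that CC, MIIA, A, N and PR force $F$ to be either the positive unweighted Borda rule or the tie rule. So the work is entirely in the infinite case.

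For $V$ infinite, I would first invoke Theorem \ref{ICAmeasurable} as a black box: it says any \emph{measurable} SWF satisfying CC, MIIA, A, N is an unweighted Borda rule. So my goal reduces to (a) showing that PR (together with the other conditions) implies that $F$ is measurable in the sense of Definition \ref{def:measurable}, and then (b) ruling out the negative unweighted Borda rule.

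For (a), by Lemma \ref{increasingballots} the ballot $\mathfrak{C}$ is increasing, so Lemma \ref{prmiia} gives that PR is equivalent to PRm. Combining MIIA, A, N and separability (Lemma \ref{separableballots}) with the structural analysis at the start of section \ref{CA}, the weight-based relative SWF $f'$ is determined by a single function $g: [0,1] \to \set{W,T,L}$ via equation \ref{fprimefromg}. Lemma \ref{ccprm} then translates PRm into the statement that $g$ is non-increasing. But any monotone function on $[0,1]$ into a finite set has finitely many level sets, each an interval (up to a countable set), hence Borel measurable; equivalently, a non-increasing $\set{W,T,L}$-valued function on $[0,1]$ is a step function with at most two jumps. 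Chasing this through the definitions of $\tau$ and $\pi_{i,j}$ (both of which are measurable, since $\pi_{i,j}$ is the projection induced by a finite partition of $\mathfrak{R}$), this yields that $F'$ is a measurable function on the simplex $\set{\vec{\varepsilon}: \sum_r \varepsilon_r = 1}$, i.e.\ $F$ is measurable.

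Having secured measurability, Theorem \ref{ICAmeasurable} applies and gives that $F$ is an unweighted Borda rule. Step (b) is then identical to the final paragraph of the proof of Theorem \ref{FCA}: the negative unweighted Borda rule corresponds to $g(i) = \varphi(\kappa(i - 1/3))$ with $\kappa > 0$, which is strictly increasing and so violates the non-increasing condition forced by PRm. Hence only the positive unweighted Borda rule and the tie rule (corresponding to $\kappa < 0$ and $\kappa = 0$ respectively) remain, completing the corollary. The main (and essentially only) nontrivial obstacle is the passage from PR to measurability of $F$, and that is handled cleanly by the observation that PRm forces the single-variable function $g$ to be monotone on $[0,1]$.
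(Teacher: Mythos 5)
Your proposal is correct and follows essentially the same route as the paper: PR forces the one-variable function $g$ to be non-increasing via Lemma \ref{ccprm}, hence measurable (its level sets are intervals), so Theorem \ref{ICAmeasurable} applies, and monotonicity then excludes the negative unweighted Borda rule. Your explicit split into finite and infinite cases (handling the finite case directly by Theorem \ref{FCA}) is slightly more careful than the paper's proof, which tacitly treats only the infinite case, but the substance is identical.
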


\begin{proof}
Lemma \ref{ccprm} means that condition PR guarantees that $g$ is non-increasing. Hence for all $g(\ell) = L, g(t) = T, g(w) = W$ we have $\ell < t < w$. Clearly $g^{-1}(L)$ is either empty or an interval, and the same goes for $g^{-1}(T)$ and $g^{-1}(W)$; so $g$ is measurable. Thus Theorem \ref{ICAmeasurable} guarantees that $F$ is an unweighted Borda rule. Since $g$ is non-increasing, we find that the rule must be the positive unweighted Borda rule or the tie rule, as in the proof of Theorem \ref{FCA}.
\end{proof}

Finally, without the measurability condition (or the stronger PR condition) we construct pathological counterexamples, including examples satisfying the Pareto condition.

\begin{thm}\label{ICAchoice}
Assuming the axiom of choice, there exists a strongly non-Borda SWF on $V$ infinite satisfying CC, MIIA, A, N and P.
\end{thm}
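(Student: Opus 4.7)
The plan is to use the axiom of choice to build a non-measurable $g : [0,1] \to \set{W, T, L}$ that satisfies the three-way consistency required by Lemma \ref{relativetofull} while escaping the rigidity of Lemma \ref{measurable}. Concretely, I would pick an irrational $\alpha \in (0, 1/3)$ so that $\set{1/3, \alpha}$ is $\mathbbm{Q}$-linearly independent, extend to a Hamel basis $H$ of $\mathbbm{R}$ as a $\mathbbm{Q}$-vector space, and define a $\mathbbm{Q}$-linear functional $\phi: \mathbbm{R} \to \mathbbm{R}$ by $\phi(1/3) = -1$ and $\phi(h) = 0$ for every other $h \in H$. Set $g(x) = \varphi(\phi(x - 1/3))$ for $x \in [0, 1]$, define $f'$ from $g$ via equation \ref{fprimefromg}, and take $F$ to be the corresponding SWF.

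The antisymmetry $f' = -f' \circ \varrho$ is automatic from equation \ref{fprimefromg}, giving the two-way condition in Lemma \ref{relativetofull}. For the three-way condition, given any election $e$ whose support in $\mathfrak{C}$ has weights $i, j, k$ summing to $1$, the relative results $\pi_{1,2}(F(e)), \pi_{2,3}(F(e)), \pi_{3,1}(F(e))$ equal $g(j), g(k), g(i)$ just as in the proofs of Theorems \ref{FCA} and \ref{ICAmeasurable}. Writing $u = \phi(i - 1/3)$, $v = \phi(j - 1/3)$, $w = \phi(k - 1/3)$, the $\mathbbm{Q}$-linearity of $\phi$ gives $u + v + w = \phi(0) = 0$. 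Since three reals summing to zero cannot all be strictly positive nor all strictly negative, the multiset $\set{\varphi(u), \varphi(v), \varphi(w)}$ is always one of $\set{T,T,T}$, $\set{W,T,L}$, $\set{W,W,L}$, $\set{W,L,L}$, each of which is consistent. Lemma \ref{relativetofull} therefore yields a well-defined $F$. CC holds by construction; MIIA, A and N follow because $F$ is specified via the weight-based relative SWF $f'$ which treats all ordered pairs of candidates symmetrically through $g$; and P holds because $g(0) = \varphi(\phi(-1/3)) = \varphi(1) = W$ and $g(1) = \varphi(\phi(2/3)) = \varphi(-2) = L$.

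To see that $F$ is strongly non-Borda, I would evaluate $g$ at two carefully chosen points. At $x_1 = 1/3 + \alpha \in (1/3, 2/3)$ one has $\phi(x_1 - 1/3) = \phi(\alpha) = 0$, so $g(x_1) = T$, whereas the positive unweighted Borda rule gives $L$ there; at $x_2 = \alpha \in (0, 1/3)$ one has $\phi(x_2 - 1/3) = -\phi(1/3) = 1$, so $g(x_2) = W$, whereas the negative unweighted Borda rule gives $L$. Hence $F$ is not weakly Borda with respect to either unweighted Borda rule, and Lemma \ref{weightinglemma} upgrades this to not weakly Borda with respect to any Borda rule; and $g(0) = W$ rules out the tie rule. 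The main obstacle is packaging the three-way consistency into the single additive identity $\phi(i - 1/3) + \phi(j - 1/3) + \phi(k - 1/3) = 0$ whenever $i + j + k = 1$, so that $\mathbbm{Q}$-linearity of $\phi$ alone forces the multiset to be consistent with no continuity input on $g$. This is precisely the foothold that Lemma \ref{measurable} closes off in the measurable setting, which is why AC is essential.
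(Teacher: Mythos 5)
Your proof is correct, but it reaches the non-measurable $g$ by a genuinely different device than the paper. The paper applies Zorn's lemma to obtain a maximal set $M \subset \mathbb{R}^*$ containing $1$ and $-\sqrt{2}$, closed under addition and under positive rational dilation; maximality forces $M \cup -M = \mathbb{R}^*$, and $g$ is defined to be $L$ on $1/3 + M$, $T$ at $1/3$, and $W$ elsewhere, with the three-way consistency of Lemma \ref{relativetofull} verified by a case analysis on membership in $M$ (three $L$'s or three $W$'s would put $0$ in $M$). You instead take a Hamel basis containing $1/3$ and an irrational $\alpha$, let $\phi$ be the $\mathbb{Q}$-linear functional with $\phi(1/3)=-1$ and $\phi=0$ on the rest of the basis, and set $g(x)=\varphi(\phi(x-1/3))$; consistency then collapses to the observation that three reals summing to zero cannot share a strict sign, which is a cleaner one-line argument than the paper's case split. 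The price is a structurally different rule: your $g$ ties on the entire (dense, non-measurable) set $1/3 + \ker\phi$, whereas the paper's $M \cup -M = \mathbb{R}^*$ makes its rule decisive everywhere except at $1/3$ -- in effect the paper builds the positive cone of a total $\mathbb{Q}$-vector-space order on $\mathbb{R}$, while you use the non-total preorder induced by a rank-one functional. Both constructions use full choice in interchangeable forms (Zorn versus a Hamel basis), both witness non-Borda-ness by exhibiting an irrational direction on which the rule disagrees in sign with every unweighted Borda rule before invoking Lemma \ref{weightinglemma}, and your verifications of P, of the exclusion of the tie rule, and of the two-way condition via $f' = -f'\circ\varrho$ all match the paper's. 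The one point worth making explicit in a write-up is that A, N and MIIA hold because $F$ is built from a single weight-based $f'$ via equation \ref{fprimefromg} and Lemma \ref{relativetofull}, exactly as the paper asserts for its own construction.
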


The logic of the construction is to find a partition of $I$ into wins, ties and losses such that consistency is satisfied, but using unmeasurable sets to avoid the Borda rule. The proof of Lemma \ref{measurable} demonstrated that these sets would be closed under addition and rational dilatons, so we use Zorn's lemma to construct such sets.

\begin{proof}

Consider the family of sets $W' \subset \mathbb{R}^*$ satisfying the following conditions:
\begin{enumerate}
\item\label{nonborda} $1, -\sqrt{2} \in W'$
\item\label{rationaldilation} If $t \in W'$ then $qt \in W'$ for all positive rational numbers $q$
\item\label{closed} If $s, t \in W'$ then $s+t \in W'$
\end{enumerate}

This family forms a partially ordered set $P$ under the order relation of set inclusion. Note that $P$ is non-empty because $\set{q_1-q_2\sqrt{2} | q_1, q_2 \in \mathbb{Q}_{\geq 0}, q_1 + q_2 > 0}$ is in $P$. Condition \ref{nonborda} is satisfied by $(q_1,q_2) = (1,0)$ and $(q_1,q_2)=(0,1)$. Conditions \ref{rationaldilation} and \ref{closed} are satisfied trivially.

Any non-empty chain $C \subset P$ has an upper bound which is the union $U$ over all of $C$. This is clearly a subset of $\mathbb{R}^*$, and it contains $1$ and $-\sqrt{2}$ from any of its members so it satisfies condition \ref{nonborda}. Now if $t \in U$ then $t \in C_t$ for some $C_t \in C$, and then $qt \in C_t$ for all positive rational $q$, so $qt \in U$ and $U$ satisfies condition \ref{rationaldilation}. Finally, if $s, t \in U$ then $s \in C_s$ and $t \in C_t$ and without loss of generality $C_s \subset C_t$, so $s,t \in C_t$ and $s+t \in C_t$, so $s+t \in U$, and $U$ satisfies condition \ref{closed}.

Hence by (the non-empty formulation of) Zorn's Lemma the set $P$ contains at least one maximal element, which we call $M$. We will show that $M \cup -M = \mathbb{R}^*$.

Suppose instead that some $s \in \mathbb{R}^*$ has $s \notin M \cup -M$. Now we construct

\begin{equation}
M' = M \cup \bigcup_{q \in \mathbb{Q}_{>0}} qs+(M \cup \set{0})
\end{equation}

Firstly, $0 \notin M'$ because $0 \notin M$ and if $0 \in qs + (M \cup \set{0})$ for some $q$, then $-qs \in M \cup \set{0}$. Then $-qs \in M$ and $-s \in M$, so $s \in -M$ contradicting our definition of $s$. Hence $M' \subset \mathbb{R}^*$.

Now clearly condition \ref{nonborda} is inherited from $M$.

Condition \ref{rationaldilation} is met because if $t \in M'$ then either $t \in M$ in which case $qt \in M \subset M'$, or $t = rs + m$ for $r$ positive rational and $m \in M \cup \set{0}$; then $qt = qrs + qm$. Now $qm \in M \cup \set{0}$ because either $m \in M$ in which case $qm \in M$ or $m = 0$ in which case $q \cdot 0 = 0 \in M \cup \set{0}$, and $qr$ is a positive rational, so $qt \in M'$.

Finally, condition \ref{closed} holds because if $u,v \in M'$ then either:

\begin{itemize}
\item $u, v \in M$ and so $u+v \in M \subset M'$, or
\item $u \in M$ and $v = qs+m$ for $m \in M \cup \set{0}$, in which case $u+m \in M$ and $u+v = qs + (u+v) \in M'$, or the same with $u$ and $v$ reversed, or
\item $u = q_1 s+m_1$ and $v=q_2 s+m_2$, so $u+v = (q_1+q_2)s + (m_1+m_2)$, and $m_1+m_2 \in M \cup \set{0}$, so $u+v \in M'$
\end{itemize}

So $M'$ satisfies conditions \ref{nonborda}, \ref{rationaldilation} and \ref{closed}.

Thus $M' \in P$, but $M'$ is greater than $M$ in the partial order of set inclusion, contradicting the fact that $M$ is maximal; so in fact there was no such $s$, and it was the case that $M \cup -M = \mathbbm{R}^*$.

Now we define $g: [0,1] \rightarrow \set{W,T,L}$. We set $g(1/3)=T$, for $x-1/3 \in M$ we set $g(x)=L$, and otherwise we set $g(x)=W$. Defining $f'$ from $g$ via equation \ref{fprimefromg}, we then get a SWF $F$ if $f'$ meets the conditions set out in Lemma \ref{relativetofull}.

Condition \ref{relativetofull:twoway} in the Lemma is met by definition, because the value of $f$ on $a \in A_{i+1,i}$ is defined by $f(-a)=-f(a)$. So what is left is to check condition \ref{relativetofull:threeway}.

Consider the election vector $\vec{\varepsilon} = (\varepsilon_{c_1 c_2 c_3},\varepsilon_{c_2 c_3 c_1},\varepsilon_{c_3 c_1 c_2})$, which we will call $(\varepsilon_1,\varepsilon_2,\varepsilon_3)$ for brevity. Then 

\begin{equation}
\begin{aligned}
\pi_{1,2}(\vec{\varepsilon}) &= (\varepsilon_1+\varepsilon_3, \varepsilon_2) \\
\pi_{2,3}(\vec{\varepsilon}) &= (\varepsilon_1+\varepsilon_2, \varepsilon_3) \\
\pi_{1,2}(\vec{\varepsilon}) &= (\varepsilon_2+\varepsilon_3, \varepsilon_1)
\end{aligned}
\end{equation}

Thus $f'(\pi_{1,2}(\vec{\varepsilon})) = g(\varepsilon_2)$, and so on for other indices, so it remains to show that for $\varepsilon_1+\varepsilon_2+\varepsilon_3=1$ all non-negative, we have that $\set{g(\varepsilon_i)}_i$ is consistent.

First, if any $\varepsilon_i = 1/3$, then $g(\varepsilon_i)=T$. Now there are three cases:
\begin{enumerate}
\item If $\varepsilon_{i+1}=1/3$ then $\varepsilon_{i-1}=1/3$ and the multiset is $\set{T,T,T}$, which is consistent.
\item If $\varepsilon_{i+1}\neq 1/3$ and $\varepsilon_{i+1} -1/3 \in M$ then $g(\varepsilon_{i+1}) = L$. Meanwhile $\varepsilon_{i-1} \neq 1/3$, and if  $\varepsilon_{i-1}-1/3 \in M$ then by additivity we have $(\varepsilon_{i+1}-1/3)+(\varepsilon_{i-1}-1/3)=0 \in M$, a contradiction. Thus $g(\varepsilon_{i-1}) = W$. Thus the multiset is $\set{W,T,L}$, which is consistent.
\item If $\varepsilon_{i+1}\neq 1/3$ and $\varepsilon_{i+1} -1/3 \notin M$ then $g(\varepsilon_{i+1})=W$. Now since $M \cup -M = \mathbbm{R}^*$, we have $\varepsilon_{i-1}-1/3 = -(\varepsilon_{i+1}-1/3) \in M$. Then $g(\varepsilon_{i-1}) = L$. Thus the multiset is $\set{W,T,L}$, which is consistent.
\end{enumerate}

Suppose instead that $\varepsilon_i \neq 1/3$ for all $i$. Then $g(\varepsilon_i) = L$ or $g(\varepsilon_i) = W$ for all $i$. The only two inconsistent possibilities for $\set{g(\varepsilon_i)}_i$ are $\set{W,W,W}$ and $\set{L,L,L}$.

Suppose that the multiset is $\set{L,L,L}$. Then $\varepsilon_i-1/3 \in M$ for all $i$. But by additivity, this means that $\sum_i (\varepsilon_i -1/3) = 0 \in M$, a contradiction.

Now suppose that the multiset is $\set{W,W,W}$. Then $\varepsilon_i-1/3 \notin M$ so $1/3 - \varepsilon_i \in M$ for all $i$; so again by additivity we find that $0 \in M$, a contradiction.

So instead we find that $\set{g(\varepsilon_i)}_i$ is consistent, so $F$ is a valid SWF, defined on the Condorcet Cycle domain, and satisfying A, N and MIIA as required. Finally we need to check that $F$ satisfies condition P, and that it is strongly non-Borda.

Suppose that $c_i$ is ranked above $c_{i+1}$ by all voters. Then $\pi_{i,i+1}(e(v))>0$. But $\pi_{i,i+1}(\mathfrak{R}) = \set{1,-2}$, so this requires  $\pi_{i,i+1}(e) \equiv 1$, or $\vec{\alpha}$ defined by $\alpha_1 = 1$. Then $f'(\vec{\alpha}) = g(\alpha_{-2})=g(0)$. But since $1 \in M$, if we had $-1/3 \in M$ then by additivity we would have $0 \in M$, a contradiction; so $-1/3 \notin M$ and we get $g(0)=W$. Thus $f'(\vec{\alpha})=W$, as required by condition P.

On the other hand, if $c_{i+1}$ is ranked above $c_i$ by all voters, then $\pi_{i,i+1}(e(v))<0$ for all $v\in V$. This is corresponds to a $\vec{\alpha}$ defined by $\alpha_{-2}=1$. Then $f'(\vec{\alpha})=g(\alpha_{-2})=g(1)$. Now $1 \in M$ so $2/3 \in M$ by condition \ref{rationaldilation}, and so $g(1) = L$. Thus $f'(\vec{\alpha})=L$, and by the definition of $f'$ on $A'_{i+1,i}$ we get that $f'(\pi_{i+1,i}(\vec{\varepsilon})) = W$, as required. Therefore the Pareto condition holds.

Finally, note that $F$ is strongly non-Borda. We have seen that $g(1)=L$ and we also know that $g(\sqrt{2}/4 + 1/3) = W$, because $\sqrt{2}/4 \notin M$ (or else, by additivity, $-\sqrt{2} + 4 \cdot \sqrt{2}/4 = 0 \in M$). Correspondingly, we find that on $A'_{i,i+1}$, we have $f'((0,1)) = L$ and $f'((2/3 - \sqrt{2}/4, \sqrt{2}/4 + 1/3)) = W$. But for an unweighted Borda rule with weight $w$, the difference between the Borda scores of candidates $c_i$ and $c_{i+1}$ in these cases are $-2w$ and $-3w\sqrt{2}/4$ respectively; and since these have the same sign, $\pi_{i,i+1} \circ B_w$ must give the same ranking in both cases. The only way that this could happen while $F$ agrees with $B_w$ except as a tiebreaker would be if $w=0$; but by definition we do not call a SWF weakly Borda with respect to the tie rule. Finally, $F$ is not the tie rule since any $t \in [0,1]$ with $t \neq 1/3$ has $g(t) \neq T$; so for $\vec{\alpha} \in A'_{i,i+1}$ other than $(2/3,1/3)$ we have $f'(\vec{\alpha})\neq T$.

Now by Lemma \ref{weightinglemma}, since $F$ must obey condition A, if $F$ were weakly Borda with respect to some weighted Borda rule $B_w$, it would also be weakly Borda with respect to an unweighted Borda rule. This has been ruled out, so in fact $F$ is strongly non-Borda.
\end{proof}

Note that we cannot hope to find such an $F$ without resorting to choice (or some axiom stronger than dependent choice) because the Solovay model \autocite{solovay} of set theory has dependent choice, but every set of the real numbers is measurable. On the other hand, Tao \autocite{tao} has constructed non-measurable sets using the existence of non-principal ultrafilters on $\mathbb{N}$, which is not implied by the axiom of dependent choice but is weaker than the full axiom of choice. Whether these non-measurable sets could be controlled to give closure under addition, as we require here, is an interesting question beyond the scope of this paper.

\section{Acknowledgements}

I am grateful to Eric Maskin for a helpful conversation, and for his valuable comments on the proof of Theorem \ref{IUA}.

\printbibliography

\end{document}